\numberwithin{equation}{section}
\theoremstyle{plain} %
\newtheorem{thm}[equation]{Theorem}
\newtheorem{prop}[equation]{Proposition}
\newtheorem{coro}[equation]{Corollary}
\newtheorem{lem}[equation]{Lemma}
\theoremstyle{definition} %
\newtheorem{ex}[equation]{Example}
\newtheorem{rem}[equation]{Remark}
\renewenvironment{proof}{{\it Proof. }}{
  \qed\endtrivlist
  \par}
\newcommand{\OO}{{\mathcal{O}}}
\newcommand{\pp}{\mathfrak{p}}
\newcommand{\PP}{\mathfrak{P}}
\newcommand{\PPP}{{\scriptscriptstyle\mathfrak{P}}}
\newcommand{\qq}{\mathbb{Q}}
\newcommand{\rr}{\mathbb{R}}
\newcommand{\zz}{\mathbb{Z}}
\newcommand{\ff}{\mathbb{F}}
\newcommand{\ds}{\displaystyle}
\newcommand{\Gal}{\mathrm{Gal}}
\newcommand{\Id}{\mathrm{Id}}
\newcommand{\Tr}{\mathrm{Tr}}
\newcommand{\Trd}{\mathrm{Trd}}
\newcommand{\codim}{\mathrm{codim}}
\def\ov#1{{\overline{#1}}}
\def\zn#1{\zz/#1\zz}
\renewcommand{\to}{\longrightarrow}
\renewcommand{\mapsto}{\longmapsto}
\newcommand{\funcisop}[4]{\begin{aligned}\newline #1&\overset{\sim}{\longrightarrow} #2 \cr\newline #3 &\longmapsto #4.\end{aligned}}
\newcommand{\funcv}[4]{\begin{aligned}\newline #1&\longrightarrow #2 \cr\newline #3 &\longmapsto #4,\end{aligned}}
\newcommand{\nfunc}[5]{#1\colon\begin{aligned}\newline #2&\longrightarrow #3 \cr\newline #4 &\longmapsto #5\end{aligned}}
\newcommand{\nfuncp}[5]{#1\colon\begin{aligned}\newline #2&\longrightarrow #3 \cr\newline #4 &\longmapsto #5.\end{aligned}}
\newcommand{\nfuncv}[5]{#1\colon\begin{aligned}\newline #2&\longrightarrow #3 \cr\newline #4 &\longmapsto #5,\end{aligned}}
\let\cdotorg\cdot \def\cdot{{\cdotorg}}
\newcommand{\mJ}{{\mathcal{J}}}
\newcommand{\mP}{{\mathcal{P}}}
\newcommand{\mPP}{{\scriptscriptstyle\mathcal{P}}}
\title[A Quadratic Form Approach to Construction A]{A Quadratic Form Approach to Construction A of Lattices over Cyclic Algebras}
\author{Gr\'{e}gory Berhuy, Fr\'{e}d\'{e}rique Oggier}
\date{\today}
\newcommand{\dl}{\sharp}
\begin{document}

\maketitle

\begin{abstract}
We propose a construction of lattices from (skew-) polynomial codes, by endowing quotients of some ideals in both number fields and cyclic algebras with a suitable trace form. We give criteria for unimodularity. This yields integral and unimodular lattices with a multiplicative structure. Examples are provided.
\end{abstract}

%
%
%

\tableofcontents

\section{Introduction}

A classical theory connects linear codes, linear subspaces of $\ff_q^n$, and Euclidean lattices (see e.g. \cite{splag} for a wealth of references on the so-called Constructions A,B,C,D). Construction A provides a framework to construct lattices from linear codes, under which several correspondences are known, such as between the dual code and the dual lattice, self-dual codes and unimodular lattices, weight enumerators and theta series. It was for example used to obtain extremal modular lattices \cite{bachoc}. Applications coming from cryptography motivated to revisit Construction A by adding a multiplicative structure (see e.g. \cite{belfiore} which contains examples of multiplicative constructions over number fields). Some examples over cyclic algebras are presented in \cite{ducoat} (with applications to wiretap coding), where skew-polynomial codes \cite{boucher-ulmer} are replacing linear codes.
While both \cite{belfiore,ducoat} provided examples of multiplicative constructions, they also both left open the question of duality for these settings, which we address in this paper, via the theory of quadratic forms.

%
%
%
\section{From Codes to Lattices}\label{cod-lat}

\subsection{Generalities about Lattices}
Let $(E,b)$ be a Euclidean space of dimension $n$ with $b$ a positive definite symmetric bilinear form, let $M$ be a full lattice of $E$, that is a subgroup generated by a basis of $E$ (endowed with the induced bilinear form). In particular $M$ has rank $n$.

If $(e_1,\ldots,e_n)$ is a $\zz$-basis of $M$, $\det(M)$ will denote the determinant of $B = (b(e_i,e_j))_{i,j}$. This determinant is positive, since $b$ is positive definite.

If $N$ is a full sublattice of $M$, it is known that the index $[M:N]$ of $N$ in $M$ is finite and $\det(N)=[M:N]^2\det(M)$.

 dual of $M$ is the lattice $$M^\dl=\{x\in E\mid b(x,y)\in\zz \mbox{ for all }y\in M\}.$$
It has a $\zz$-basis $(e^\dl_1,\ldots,e^\dl_n)$ defined by $b(e^\dl_i,e_j)=1$ if $i=j$ and $0$ else. Thus $B^\dl = (b(e^\dl_i,e^\dl_j))_{i,j} = B^{-1}$ and  $\det(M^\dl)=\det(M)^{-1}$.
If $M = M^\dl$, the lattice $M$ is said to be unimodular.

\subsection{Quotients of Lattices}
Fix a prime number $p$. Assume that $N$ is a sublattice of $M$ such that $pM\subset N\subset M$, and let $\pi: M\to M/N$ be the canonical projection.

If $x\in M$, we will denote by $[x]_N$ its class in $M/N$. In the case where $M$ is a ring and $N$ is an ideal generated by a single element $a$, we will simply denote it by $[x]_a$.

 Since $pM\subset N$, $M/N$ has a natural structure of an $\ff_p$-vector space, given by $$[m]_p\cdot [x]_N=[m\cdot x]_N, \ \mbox{ for all }m\in\zz, x\in M.$$ Since $M$ has rank $n$, $\dim_{\ff_p}(M/N)\leq n$.


Consider now $M$ an integral lattice, meaning that $M\subset M^\dl$, which is equivalent to say that $b(x,y)\in \zz$ for all $x,y\in M$. As above, $N$ is a sublattice of $M$ such that $pM\subset N\subset M$.
Assume also that $b(x,y)\in p\zz$ for all $x\in M$ and $y\in N$.
Then $b$ induces  on $M$ a symmetric $\zz$-bilinear form $b:M\times M\to\zz$, which in turn induces an $\ff_p$-bilinear form $$\nfuncp{\ov{b}}{M/N\times M/N}{\ff_p}{([x]_N,[y]_N)}{[b(x,y)]_p}$$
For $u,v \in M$ of the form $u=x+n,~v=y+n'$, $n,n'\in N$, we have $([u]_N,[v]_N) \mapsto [b(u,v)]_p$ and since $[b(x+n,y+n')]_p = [b(x,y)]_p$ using that $b(x,y)\in p\zz$ for all $x\in M$ and $y\in N$, $\bar{b}$ is well-defined. It is however not necessarily nondegenerate.

In fact, we have the following lemma.

\begin{lem}\label{nondeg}
In the situation above, the radical of the $\ff_p$-bilinear map $\ov{b}:M/N\times M/N\rightarrow \ff_p$ is $(pM^\dl\cap M)/N$.
In particular, $\ov{b}$ is nondegenerate if and only if $pM^\dl\cap M=N.$
\end{lem}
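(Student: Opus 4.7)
The plan is to compute the radical of $\ov{b}$ directly from the definitions. By definition, a class $[x]_N \in M/N$ lies in the radical of $\ov{b}$ if and only if $\ov{b}([x]_N,[y]_N) = [b(x,y)]_p = 0$ for every $y \in M$, which is just the condition that $b(x,y) \in p\zz$ for all $y \in M$.

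The key observation is then a reformulation of this divisibility condition in terms of the dual lattice: the relation $b(x,y) \in p\zz$ for all $y \in M$ is equivalent to $b\bigl(\tfrac{x}{p},y\bigr) \in \zz$ for all $y \in M$ (working in the ambient Euclidean space $E$), which by definition of the dual means $\tfrac{x}{p} \in M^\dl$, i.e., $x \in pM^\dl$. Combined with the implicit requirement $x \in M$, this shows that the set of representatives of radical classes is exactly $pM^\dl \cap M$, so the radical is $(pM^\dl \cap M)/N$.

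Before writing this, I would briefly verify that the quotient $(pM^\dl \cap M)/N$ makes sense, i.e. that $N \subset pM^\dl \cap M$. The inclusion $N \subset M$ is given, and the inclusion $N \subset pM^\dl$ is precisely the assumption made in the construction of $\ov{b}$: for $y \in N$ one has $b(x,y) \in p\zz$ for all $x \in M$, which by the same reformulation yields $y \in pM^\dl$.

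The "in particular" clause is then immediate: $\ov{b}$ is nondegenerate iff its radical is trivial, i.e. iff $(pM^\dl \cap M)/N = 0$, i.e. iff $pM^\dl \cap M = N$. I do not anticipate any real obstacle here; the only subtlety is not to forget the hypothesis $b(M,N) \subset p\zz$ when checking that $N$ is contained in $pM^\dl \cap M$ so that the quotient is meaningful.
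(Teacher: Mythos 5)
Your proof is correct and takes essentially the same route as the paper: characterize radical representatives by the condition $b(x,y)\in p\zz$ for all $y\in M$ and reformulate it as $x\in pM^\dl\cap M$. You even add the small sanity check (that $N\subset pM^\dl\cap M$) which the paper leaves implicit.
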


\begin{proof}
Let $x\in M$. Then $\ov{b}([x]_N,[y]_N)=[0]_p$ for all $[y]_N\in M/N$ if and only if $b(x,y)\in p\zz$ for all $y\in M$.
If $x\in pM^\dl\cap M$, then $x=px'$ for some $x'\in M^\dl$, and $b(x,y)=pb(x',y)\in p\zz$ for all $y\in M$. Conversely, if 
$b(x,y)\in p\zz$ for all $y\in M$, then $\dfrac{x}{p}\in M^\dl$, and $x=p\dfrac{x}{p}\in pM^\dl\cap M$.

Hence, the radical of $\ov{b}$ is $(pM^\dl\cap M)/N$, as required. 
\end{proof}

The case $N=pM$ is the easiest to handle. It automatically satisfies both $pM\subset N\subset M$ and $b(x,y)\in p\zz$ for all $x\in M$ and $y\in N$.

\begin{ex}\label{ex-pm}
Assume that $N=pM$. Let $(e_1,\ldots,e_n)$ be a $\zz$-basis of $M$, and let $B=(b(e_i,e_j))_{i,j}$.

We have a canonical isomorphism of $\ff_p$-vector spaces $ M\otimes_\zz \ff_p\simeq M/N$, which sends $x\otimes [1]_p$ to $[x]_N$. In particular, $([e_1]_N,\ldots,[e_n]_N)$ is an $\ff_p$-basis of $M/pM$ (this fact may also be proven directly), and the representative matrix of $\ov{b}$ is the reduction of $B$ modulo $p$.

We then get $\det(\ov{b})=[\det(B)]_p=[\det(M)]_p\in\ff_p$. In particular, $\ov{b}$ is nondegenerate if and only if $p\nmid \det(M).$
\end{ex}

Before continuing, let us note that the existence of $N$ has strong consequences on the determinant of $M$, as the next lemma shows.

\begin{lem}\label{lem-det}
Let $M$ be a full integral lattice of $E$. Assume that $N$ is a sublattice of $M$ such that $pM\subset N$ and $b(x,y)\in p\zz$ for all $x\in M$ and $y\in N$.
Then $N$ is a full lattice, and $p^{n-\dim_{\ff_p}(M/N)}$ divides $\det(M)$.

In particular, if $p\nmid \det(M)$, then the only sublattice $N$ satisfying the required conditions is $pM$, $\dim_{\ff_p}(M/N)=n$, and the corresponding $\ff_p$-bilinear form $\ov{b}$ is nondegenerate in this case.
\end{lem}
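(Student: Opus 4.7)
The plan is to exploit the sandwich $pM\subset N\subset M$ together with the observation that the integrality hypothesis on $b$ places $N$ inside $pM^\dl$, and then to compute $\det(N)$ in two ways.

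First, since $pM$ is a full sublattice of $M$ of rank $n$ and $pM\subset N\subset M$, the lattice $N$ has rank $n$, hence is full. Because $M/N$ is a finite $\ff_p$-vector space, $[M:N]=p^{\dim_{\ff_p}(M/N)}$, so the index formula from the first subsection gives
\[
\det(N) \;=\; [M:N]^2\det(M) \;=\; p^{2\dim_{\ff_p}(M/N)}\det(M).
\]

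The key observation is that the hypothesis ``$b(x,y)\in p\zz$ for all $x\in M$, $y\in N$'' is exactly the condition $N\subset pM^\dl$: it says that $y/p\in M^\dl$ whenever $y\in N$. Now $pM^\dl$ is a full lattice with Gram matrix $p^2 B^{\dl}$, so $\det(pM^\dl)=p^{2n}\det(M^\dl)=p^{2n}/\det(M)$. Applying the same index formula with $N$ as a full sublattice of $pM^\dl$ yields
\[
\det(N) \;=\; [pM^\dl:N]^2\,\frac{p^{2n}}{\det(M)}.
\]
Equating the two expressions for $\det(N)$ and taking positive square roots gives
\[
\det(M) \;=\; [pM^\dl:N]\cdot p^{\,n-\dim_{\ff_p}(M/N)},
\]
which proves the divisibility claim. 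No step here is really an obstacle; the only slightly nonobvious move is recognising that the hypothesis on $b$ is literally the assertion $N\subset pM^\dl$, which makes $pM^\dl$ the natural intermediate lattice to bring in.

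For the ``in particular'' part, if $p\nmid\det(M)$ then the displayed equality forces $n-\dim_{\ff_p}(M/N)=0$, i.e.\ $\dim_{\ff_p}(M/N)=n$. Then $[M:N]=p^n=[M:pM]$, and since $pM\subset N$ this forces $N=pM$. Nondegeneracy of $\ov{b}$ is then immediate from Example~\ref{ex-pm}, where it was shown that for $N=pM$ the form $\ov{b}$ is nondegenerate iff $p\nmid\det(M)$.
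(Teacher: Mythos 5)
Your argument is correct, and it takes a genuinely different route from the paper. The paper works with an adapted $\zz$-basis of $M$: using the elementary divisor theorem it produces a basis $(e_1,\ldots,e_n)$ of $M$ and an integer $r$ with $(e_1,\ldots,e_r,pe_{r+1},\ldots,pe_n)$ a basis of $N$, identifies $\dim_{\ff_p}(M/N)=n-r$, and then reads off directly from the hypothesis on $b$ that the first $r$ columns of the Gram matrix lie in $p\zz^n$, whence $p^r\mid\det(M)$. You instead avoid choosing any basis: you recast the hypothesis on $b$ as the inclusion $N\subset pM^\dl$ (which is also the mechanism behind Lemma~\ref{nondeg}), and then you compute $\det(N)$ twice via the index formula --- once viewing $N\subset M$, once viewing $N\subset pM^\dl$ --- to obtain the identity $\det(M)=[pM^\dl:N]\cdot p^{\,n-\dim_{\ff_p}(M/N)}$. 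Your version is arguably cleaner conceptually and actually yields a sharper statement: it identifies the cofactor $\det(M)/p^{\,n-\dim_{\ff_p}(M/N)}$ as the index $[pM^\dl:N]$, rather than merely asserting divisibility. The paper's basis-adapted computation, on the other hand, produces a Smith-type normal form for $N$ in $M$ which the authors exploit again later; both approaches are natural and the rest of your deduction (forcing $r=0$, hence $N=pM$, and invoking Example~\ref{ex-pm} for nondegeneracy) matches the paper's.
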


\begin{proof}
Since $M/N$ is a finite dimensional $\ff_p$-vector space, we have $$M/N\simeq \zn{p}\times\cdots\times \zn{p}$$ as abelian groups.
It follows that the similarity invariants of $N$ (relatively to $M$) are all equal to $1$ or $p$.
Thus, there exists a $\zz$-basis $(e_1,\ldots,e_n)$ of $M$ and an integer $r\geq 0$ such that $(e_1,\ldots,e_r, pe_{r+1},\ldots,p e_n)$ is a $\zz$-basis of $N$. In particular, $M/N\simeq \ff_p^{n-r}$ and $\dim_{\ff_p}(M/N)=n-r$.

Now let $B=(b(e_i,e_j))_{i,j}$. Let $j\in\left\llbracket 1,r\right\rrbracket$. Since $e_j\in N$, the assumption on $N$ implies that $b(e_i,e_j)\in p\zz$ for all $i\in\left\llbracket 1,n\right\rrbracket.$ It follows that the $r$ first columns of $B$ lie in $p\zz^n$. Therefore, $p^r\mid \det(M)$, which is exactly what we wanted to prove.

If $p\nmid \det(M)$, the previous point imposes $r=0$, so $(pe_1,\ldots,pe_n)$ is a $\zz$-basis of $N$, meaning that $N=pM$. 
The rest follows from the previous example.
\end{proof}

\subsection{Lattices from Codes}
If $C\subset M/N$ is a code on $M/N$ (that is, an $\ff_p$-linear subspace of $M/N$), we set $$\Gamma_C=\tfrac{1}{\sqrt{p}}\pi^{-1}(C).$$ Note that $\pi^{-1}(C)$ a full sublattice of $M$ containing $N$. Recall that if $C$ is a code on $M/N$, the codimension of $C$ is the integer ${\rm codim}_{\ff_p}(C)=\dim_{\ff_p}(M/N)-\dim_{\ff_p}(C).$ We define the dual code $C^\perp$ with respect to $\ov{b}$: $$C^\perp=\{[x]_N \in M/N~|~ \ov{b}([x]_N,[y]_N)=[0]_N \mbox{ for all }[y]_N \in C \}.$$ If $C\subset C^\perp$, the code is said to be self-orthogonal. If $C= C^\perp$, the code is said to be self-dual.
\begin{thm}\label{thm-gamma}
Let us keep the previous notation, and let $C,C_1,C_2$ be codes on $M/N$. Then:

\begin{enumerate}
\item $\det(\Gamma_C)=\dfrac{\det(M)}{p^{n-2\codim_{\ff_p}(C)}};$

\item $\Gamma_{C_1}\subset \Gamma_{C_2}$ if and only if $C_1\subset C_2$;

\item $\Gamma_{C_1}\subset \Gamma^\dl_{C_2}$ if and only if $C_1\subset C_2^\perp$;

\item $\Gamma_C$ is an integral lattice if and only if $C\subset C^\perp$.
In particular, $\Gamma_C$ is unimodular if and only if $C\subset C^\perp$ and $\det(M)=p^{n-2\codim_{\ff_p}(C)}$.

\item $\Gamma_{C^\perp}\subset \Gamma^\dl_{C}$, and we have
$$[\Gamma_C^\dl:\Gamma_{C^\perp}]=\dfrac{\det(M)}{p^{n-(\codim_{\ff_p}(C)+\codim_{\ff_p}(C^\perp))}}.$$
In particular, if $\ov{b}$ is nondegenerate, we have $$[\Gamma_C^\dl:\Gamma_{C^\perp}]=\dfrac{\det(M)}{p^{n-\dim_{\ff_p}(M/N)}}.$$

\end{enumerate}
\end{thm}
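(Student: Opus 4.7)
The plan is to reduce all five statements to two core computations: the determinant formula (1), and the inclusion--duality correspondence (3). The remaining parts will follow either as specializations of these or by combining them with the standard determinant/index relations recalled in the preliminaries.

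For (1), I would use that $\pi$ sets up a bijection between sublattices of $M$ containing $N$ and $\ff_p$-subspaces of $M/N$, which gives $[M:\pi^{-1}(C)]=[M/N:C]=p^{\codim_{\ff_p}(C)}$. Combined with $\det(\pi^{-1}(C))=[M:\pi^{-1}(C)]^2\det(M)$ and the scaling identity $\det(\lambda L)=\lambda^{2n}\det(L)$ applied with $\lambda=1/\sqrt{p}$, this yields the stated formula. Part (2) is then immediate: $\pi$ and $\pi^{-1}$ induce a bijective, inclusion-preserving correspondence on sublattices containing $N$, and scaling by $1/\sqrt{p}$ preserves inclusions.

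For (3), I would unwind the definition: $\Gamma_{C_1}\subset\Gamma_{C_2}^\dl$ is equivalent to $b(x,y)\in p\zz$ for all $x\in\pi^{-1}(C_1)$ and $y\in\pi^{-1}(C_2)$. The standing hypothesis $b(M,N)\subset p\zz$, together with the symmetry of $b$, ensures that this condition depends only on the classes modulo $N$; it therefore amounts to $\ov{b}([x]_N,[y]_N)=[0]_p$ for all $[x]_N\in C_1$, $[y]_N\in C_2$, which is precisely $C_1\subset C_2^\perp$. Part (4) is then (3) with $C_1=C_2=C$; the unimodularity criterion follows by additionally imposing $\det(\Gamma_C)=1$, rewritten via (1).

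For (5), the inclusion $\Gamma_{C^\perp}\subset\Gamma_C^\dl$ is (3) applied to $C_1=C^\perp$, $C_2=C$, since $C^\perp\subset C^\perp$ is trivial. The index is computed from
\[
[\Gamma_C^\dl:\Gamma_{C^\perp}]^2=\frac{\det(\Gamma_{C^\perp})}{\det(\Gamma_C^\dl)}=\det(\Gamma_{C^\perp})\cdot\det(\Gamma_C),
\]
substituting both determinants via (1) and taking the (evident) positive square root. In the nondegenerate case, the dimension formula $\dim_{\ff_p}C+\dim_{\ff_p}C^\perp=\dim_{\ff_p}(M/N)$ rewrites as $\codim_{\ff_p}(C)+\codim_{\ff_p}(C^\perp)=\dim_{\ff_p}(M/N)$, giving the simplified form. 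The main technical step is the passage from integrality conditions on $\pi^{-1}(C_i)$ to $\ff_p$-valued conditions on the quotient in part (3); this is where the hypothesis $b(M,N)\subset p\zz$ is essential. Once this translation is in hand, everything else is bookkeeping with indices and determinants.
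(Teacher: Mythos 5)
Your proposal is correct and follows essentially the same route as the paper: compute $[M:\pi^{-1}(C)]=p^{\codim_{\ff_p}(C)}$ and apply the index--determinant relation together with the scaling of $b$ by $1/p$ for part (1), use surjectivity of $\pi$ for part (2), unwind $b(x,y)\in p\zz$ into $\ov{b}([x]_N,[y]_N)=[0]_p$ for part (3), and derive (4) and (5) from (3) plus the determinant formula and the dimension identity for the radical-free form. The only cosmetic difference is that you phrase (1) via the lattice-correspondence bijection while the paper invokes the first isomorphism theorem directly, but the content is identical.
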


\begin{proof}
Let $C,C_1,C_2$ be codes on $M/N$.
The first isomorphism theorem shows that the canonical projection $\pi:M\to M/N$ induces an isomorphism of abelian groups $M/\pi^{-1}(C)\simeq (M/N)/C,$ and one may easily check that it is an isomorphism of $\ff_p$-vector spaces. In particular, we get $$[M:\pi^{-1}(C)]=[M/N: C]=\dfrac{\vert M/N\vert}{\vert C\vert}=\dfrac{p^{\dim_{\ff_p}(M/N)}}{p^{\dim_{\ff_p}(C)}}.$$
Now, we have $\det(\Gamma_C)=\det(\tfrac{1}{\sqrt{p}}\pi^{-1}(C))=\dfrac{1}{p^n}\det(\pi^{-1}(C))$, and therefore 
$$\det(\Gamma_C)=\dfrac{1}{p^n}[M:\pi^{-1}(C)]^2\det(M),$$ that is $$\det(\Gamma_C)=\dfrac{\det(M)}{p^{n-2(\dim_{\ff_p}(M/N)-\dim_{\ff_p}(C))}}.$$
Hence we get $(1)$.
Now we have $$\Gamma_{C_1}\subset \Gamma_{C_2}\iff \pi^{-1}(C_1)\subset \pi^{-1}(C_2)\iff C_1\subset C_2,$$ the last equivalence coming from the fact that $\pi$ is surjective, so we have $(2)$.

We also have $\Gamma_{C_1}\subset \Gamma^\dl_{C_2}$ if and only if for all $x\in \pi^{-1}(C_1)$, and for all $y\in \pi^{-1}(C_2)$, we have $b(\tfrac{1}{\sqrt{p}}x,\tfrac{1}{\sqrt{p}}y)\in\zz.$ This is equivalent to say that  for all $x\in \pi^{-1}(C_1)$, and for all $y\in \pi^{-1}(C_2)$, we have
$b(x,y)\in p\zz$, that is $\ov{b}([x]_N,[y]_N))=[0]_p\in\ff_p$ for all $x\in \pi^{-1}(C_1)$, and for all $y\in \pi^{-1}(C_2)$.
Since $\pi$ is surjective, this is equivalent to  $\ov{b}(c_1,c_2)=[0]_p$  for all $c_1\in C_1$ and all $c_2\in C_2$, that is $C_1\subset C_2^\perp$.
Thus, we have proved $(3)$. Items $(4)$ and $(5)$ are then direct consequences of $(3)$.
Using $(1)$, we get 

$$\begin{array}{lll}[\Gamma_C^\dl:\Gamma_{C^\perp}]^2&=&\dfrac{\det(\Gamma_{C^\perp})}{\det(\Gamma_C^\dl)}\vspace{5pt} \cr 
&=& \det(\Gamma_{C^\perp})\det(\Gamma_C)\vspace{5pt}  \cr

&=& \dfrac{\det(M)^2}{p^{2n-2(\codim_{\ff_p}(C)+\codim_{\ff_p}(C^\perp))}},
\end{array}$$
hence the desired equality. If $\ov{b}$ is nondegenerate, by \cite[Lemma 3.11., p.9]{Sch}, we have $$\dim_{\ff_p}(C)+\dim_{\ff_p}(C^\perp)=\dim_{\ff_p}(M/N),$$
and therefore, $$\codim_{\ff_p}(C)+\codim_{\ff_p}(C^\perp)=\dim_{\ff_p}(M/N).$$
This concludes the proof.
\end{proof}

\begin{coro}\label{coro-det1}
Assume that $\det(M)=1$. Then for all codes $C$ on $M/N$, we have $\Gamma_C^\dl=\Gamma_{C^\perp}$.

In particular, $\Gamma_C$ is unimodular if and only if $C^\perp=C$. 
\end{coro}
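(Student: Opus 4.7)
The plan is to deduce this directly from Theorem~\ref{thm-gamma}, the only preliminary step being an application of Lemma~\ref{lem-det} to pin down the setup. Since $\det(M)=1$, in particular $p\nmid\det(M)$, so Lemma~\ref{lem-det} forces $N=pM$, $\dim_{\ff_p}(M/N)=n$, and guarantees that $\ov{b}$ is nondegenerate. This verifies the hypothesis needed for the second, simpler index formula in part~(5) of Theorem~\ref{thm-gamma}, so it is that formula I would invoke.

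Plugging $\det(M)=1$ and $\dim_{\ff_p}(M/N)=n$ into that formula yields
$$[\Gamma_C^\dl:\Gamma_{C^\perp}]=\dfrac{\det(M)}{p^{n-\dim_{\ff_p}(M/N)}}=\dfrac{1}{p^0}=1,$$
and combining this with the inclusion $\Gamma_{C^\perp}\subset\Gamma_C^\dl$ already established in part~(5), one concludes $\Gamma_C^\dl=\Gamma_{C^\perp}$. This proves the first assertion.

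For the characterization of unimodularity, I would use the just-proven identity to rewrite the condition $\Gamma_C=\Gamma_C^\dl$ as $\Gamma_C=\Gamma_{C^\perp}$; then part~(2) of Theorem~\ref{thm-gamma} translates this equality of lattices into the equality of codes $C=C^\perp$, giving the second assertion. I do not anticipate any genuine obstacle: essentially all the work has been done in Lemma~\ref{lem-det} and Theorem~\ref{thm-gamma}, and the corollary is a clean specialization to the case $\det(M)=1$, where the nondegeneracy of $\ov{b}$ is automatic and the index defect in part~(5) collapses to~$1$.
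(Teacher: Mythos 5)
Your proof is correct and follows the same route as the paper: apply Lemma~\ref{lem-det} to get $N=pM$, $\dim_{\ff_p}(M/N)=n$, and nondegeneracy of $\ov{b}$; plug into part~(5) of Theorem~\ref{thm-gamma} to obtain $[\Gamma_C^\dl:\Gamma_{C^\perp}]=1$; and deduce the unimodularity criterion from part~(2).
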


\begin{proof}
Let $C$ be a code on $M/N$.
If $\det(M)=1$, then $N=pM$, $\dim_{\ff_p}(M/N)=n$, and $\ov{b}$ is nondegenerate by Lemma \ref{lem-det}. The last point of Theorem \ref{thm-gamma} shows that $[\Gamma^\dl_C:\Gamma_{C^\perp}]=1,$ and thus $\Gamma^\dl_C=\Gamma_{C^\perp}$.

Therefore, we have $$\Gamma^\dl_C=\Gamma_C\iff \Gamma_{C^\perp}=\Gamma_C\iff C^\perp=C,$$ the last equivalence following from Theorem \ref{thm-gamma} $(2)$.
\end{proof}

The above theorem and its corollary generalize the following well-known results:

(1)\cite[Prop.~1.3]{Ebeling}
For $E=\rr^n$, $M=\zz^n$, $N=2\zz^n$ and $b$ the standard inner product, since $p=2 \nmid \det(M)=1$, $M/N$ has dimension $n$ by Lemma \ref{lem-det} and $\bar{b}$ is the the standard inner product modulo $2$ which is nondegenerate.  
Then $\det(\Gamma_C)=\tfrac{1}{2^{-n+2\dim_{\ff_p}(C)}}=2^{n-2\dim_{\ff_p}(C)}$, and $C\subset C^\perp$ if and only if $\Gamma_C$ is an integral lattice. Furthermore, $C=C^\perp$ if and only if $\Gamma_C$ is unimodular.

(2) \cite[Lemma~5.5,~Prop.~5.2]{Ebeling}
For $\zeta_p$ a primitive $p$th root of unity, $M=\zz[\zeta_p]^m$ and $N=(1-\zeta_p)\zz[\zeta_p]^m$, we note that there is a ring isomorphism $$u:\zz[\zeta_p]\overset{\sim}{\to}(1-\zeta_p)\zz[\zeta_p]\simeq \ff_p$$ sending the class of $\zeta$ to $[1]_p$, and thus the inertia degree is 1, implying that the  $\qq$-automorphisms of $\mathbb{Q}(\zeta_p)$ reduce to the identity modulo $p$. In particular, so does complex conjugation $^*$. Then consider the bilinear form
$b:M\times M\to\zz$ defined by $$b(x,y)=\sum_{j=1}^m\Tr_{\qq(\zeta_p)/\qq}(x^*_jy_j) \ \mbox{ for all }x=(x_1,\ldots,x_m), y=(y_1,\ldots,y_m)\in M.$$ Then, reducing modulo $p$, we get $$[b(x,y)]_p = (p-1)\ds\sum_{j=1}^m u(x_j)u(y_j)=-\sum_{j=1}^m u(x_j)u(y_j).$$
Hence, the induced bilinear map $\ov{b}:\ff_p^m\times\ff_p$ is canonically isomorphic to $-\langle \, , \rangle,$ where $\langle \, , \rangle$ is the standard unit form on $\ff_p^m$.

For $C$ a code in $M/N$, we then get $$\begin{array}{lll} C^\perp&=& \{x \in M/N~|~ \ov{b}([x]_N,[y]_N)=[0]_p \mbox{ for all }y\in C\}\cr &=&\{x \in M/N~|~\langle u(x),u(y) \rangle =[0]_p \mbox{ for all }y\in C\}.\end{array}$$ Hence, $C^\perp$ is canonically isomorphic to the  dual for the standard unit form via $u$. 

Then $\Gamma_{C^\perp}\subset \Gamma_C^\dl$, and $\det(\Gamma_C)=\tfrac{p^{m(p-2)}}{p^{m(p-1)-2m+2\dim_{\ff_p}(C)}}=p^{m-2\dim_{\ff_p}(C)}$. Also if $C\subset C^\perp$, then $\Gamma_C$ is integral. If $C$ is self-dual, then $\Gamma_C$ is unimodular. Similar results hold for some totally real and CM fields \cite{KOO}.

\subsection{Duality and Metabolic Forms} Let $(V,\varphi)$ be a symmetric nondegenerate bilinear form over a field $K$. A subspace $W\subset V$ is  isotropic if some nonzero vector $x\in W$ satisfies $\varphi(x,x)=0$. If $\varphi(x,x)=0$ for all vectors $x\in W$, then $W$ is totally isotropic. If no nonzero vector $x\in W$ satisfies $\varphi(x,x)=0$, then $W$ is  anisotropic.   
We say that $(V,\varphi)$ is {\it metabolic} if there exists a subspace $W$ of $V$ such that $W^\perp=\{x\in V~|~\varphi(x,y)=0\mbox{ for all }y\in W\}=W.$ When $K$ has characteristic different from $2$, this is equivalent to say that $(V,\varphi)$ is hyperbolic, that is isomorphic to an orthogonal sum of hyperbolic planes (a hyperbolic plane of $V$ is an isotropic subspace of $V$ of dimension 2). If $K$ has characteristic $2$, there exist metabolic forms which are not hyperbolic.

In our context, we have $V=M/N$, $K=\ff_p$ and $\varphi = \bar{b}$.
The condition $C\subset C^\perp$ means that $C$ is a totally isotropic subspace of $M/N$ with respect to $\ov{b}$, while the equality $C^\perp=C$ is equivalent to say that $\ov{b}$ is a metabolic form. The structure of symmetric bilinear forms over $\ff_p$ is well understood, and we summarize next the known results.

Let $\varphi:V\times V\to\ff_p$ be a nondegenerate symmetric bilinear form.

\begin{enumerate}
\item
If $p$ is odd, then the anisotropic part of $\varphi$ has dimension $\leq 2$ (\cite[Theorem 3.3., p. 38]{Sch}).
In particular, if $\dim(\varphi)=2m-1$ or $2m,$ with $m\geq 2$,  then $\varphi$ splits off a hyperbolic (hence metabolic) summand of dimension $2(m-1)$. Therefore, for all $d\in\left\llbracket 1,m-1\right\rrbracket$, there exists at least one subspace $W$ of dimension $d$ of $V$ satisfying $W\subset W^\perp$.
\item
It also follows that there exists at least one subspace $W$ of $V$ such that $W^\perp =W$ if and only if $\dim_{\ff_p}(\varphi)=2m$ and $\det(\varphi)$ lies in the square class of $[(-1)^m]_p$. The dimension of $W$ is $m$.
\item
Assume now that $p=2$.  Then the anisotropic part of $\varphi$ has dimension $\leq 1$ (\cite[Theorem 1.6., p. 170]{Sch}).
Therefore, as above, for all $d\in\left\llbracket 1,m-1\right\rrbracket$, there exists at least one subspace $W$ of dimension $d$ of $V$ satisfying $W\subset W^\perp$.

Moreover, there exists at least one subspace $W$ of $V$ such that $W^\perp =W$ if and only if $\dim_{\ff_p}(\varphi)=2m$. The dimension of $W$ is $m$. 
\end{enumerate}

These results, together with Theorem \ref{thm-gamma} and Corollary \ref{coro-det1} show that it is not difficult to construct integral lattices or unimodular lattices: given a lattice $M$ of $(E,b)$ and $p$ a prime, choose $N=pM$, compute the $\ff_p$-vector space $M/N$, find the appropriate subspace of $M/N$ according to $\bar{b}$ and lift it via $\pi$. While we have $\Gamma_{C_1}=\Gamma_{C_2}$ if and only if $C_1=C_2$ by Theorem \ref{thm-gamma}, note however that this is not true at the level of isometries. More precisely, the fact that $(C_1, \ov{b}_{C_1\times C_1})$ and $(C_2, \ov{b}_{C_2\times C_2})$ are isomorphic bilinear spaces does not necessarily imply that the lattices $\Gamma_{C_1}$ and $\Gamma_{C_2}$ are isomorphic.

We give two examples.

\begin{ex}
We endow $E=\rr^2$ with its standard inner product. Let $M=\zz^2,$ $p=7$ and $N=7\zz^2$. Then $M/N$ canonically identifies to $\ff_7^2$, $\pi:\zz^2\to\ff_7^2$ is the canonical projection and the induced $\ff_7$-bilinear form on $M/N$ is the standard bilinear form on $\ff_7^2$. Consider the codes $C_1,C_2$ given by
$$C_1=\ff_7\cdot \begin{pmatrix}[1]_7 \cr [0]_7\end{pmatrix},~C_2=\ff_7\cdot \begin{pmatrix}[1]_7 \cr [1]_7\end{pmatrix}.
$$
Then  $C_1$ and  $C_2$ are isomorphic bilinear spaces. Indeed the corresponding bilinear forms are respectively the diagonal forms $\langle [1]_7\rangle$ and $\langle [2]_7\rangle$, which are isomorphic since $[2]_7$ is a square in $\ff_7^\times$. 

The respective $\zz$-basis of $\pi^{-1}(C_1)$ and $\pi^{-1}(C_2)$ are
$$
(\begin{pmatrix}
1 \cr 0\end{pmatrix},  \begin{pmatrix}
0 \cr 7\end{pmatrix}),~(\begin{pmatrix}
1 \cr 1\end{pmatrix},  \begin{pmatrix}
0 \cr 7\end{pmatrix}).$$

The matrices of these lattices in the corresponding bases are 
$$\begin{pmatrix}
1 & 0 \cr 0& 49
\end{pmatrix},~\begin{pmatrix}
2 & 7 \cr 7& 49
\end{pmatrix}.$$ 
Clearly, the first integral bilinear form cannot represent $2$ over $\zz^2$, while the second one does. Hence, these two lattices are not isomorphic, so $\Gamma_{C_1}$ and $\Gamma_{C_2}$ are not isomorphic either. 
\end{ex}

\begin{ex}
We endow $E=\rr^8$ with its standard  inner product.  Let $M=\zz^8,$ $p=2$ and $N=2\zz^8$. Let $(e_1,\ldots, e_8)$ be the canonical basis of $\ff_2^8$. Let $C_1$ be the span of the four vectors $$e_1+e_2, e_3+e_4, e_5+e_6,e_7+e_8.$$ 
These vectors are isotropic and orthogonal to each other. They thus form a totally isotropic subspace $C_1$ of dimension 4, and $C_1 = C_1^\perp$.
Let $C_2$ be the span of the four vectors $$e_1+e_2+e_3+e_4, e_1+e_2+e_5+e_6, e_1+e_2+e_7+e_8, e_1+e_3+e_5+e_7.$$
These vectors are isotropic. Since every vector is a sum of four $e_i$'s, and each pair of vectors has exactly two $e_i$'s in common, these vectors are orthogonal to each other, and they also form a totally isotropic subspace $C_2$ of dimension 4, with $C_2 = C_2^\perp$. 
Both induced bilinear forms are zero, hence they are isomorphic.
Corollary \ref{coro-det1} tells us that the corresponding lattices $\Gamma_{C_1}$ and  $\Gamma_{C_2}$ are unimodular lattices. However, they are not isomorphic. 
Indeed $\Gamma_{C_1}$ is an odd unimodular lattice (it contains for example the vector $x=(1,1,0,0,0,0,0,0)$ such that $b(x,x)=\tfrac{1}{2}\langle x,x\rangle=1$ which is not even), while $\Gamma_{C_2}$ is an even unimodular lattice (we have $b(x,x)$ even for all $x\in\Gamma_{C_2}$).
Note that $\Gamma_{C_2}$ is an even unimodular lattice of rank $8$, so it is isomorphic to $E_8$.
\end{ex}

The main difference with respect to the existing literature is that the dual of a code $C$ is usually studied with respect to the standard inner product, sometimes with respect to a Hermitian standard inner product if the code alphabet considered possesses a suitable automorphism. Therefore the usual philosophy is to start with the standard inner product and then to look for a self-dual code, while the discussion above shows the existence of a self-dual code with respect to some bilinear form $\bar{b}$, which may or not be the standard inner product for the code, depending on the choice of $b$.

\subsection{Considered Cases}

Let $\mathcal{A}$ be a finite dimensional (not necessarily commutative) associative unital $\qq$-algebra, equipped with a positive definite symmetric $\qq$-bilinear form $b:\mathcal{A}\times\mathcal{A}\to\qq.$ Set $E=\mathcal{A}\otimes_\qq\rr$. Then, $E$ is a finite dimensional real vector space and $b_\rr$ is a positive definite $\rr$-bilinear form on $E$.

Now, let $\Lambda$ be a $\qq$-order of $\mathcal{A}$, that is a subring of $\mathcal{A}$ with is also a free abelian group generated by a $\qq$-basis of $\mathcal{A}$. This is equivalent to say that $\Lambda$ is a subring of $\mathcal{A}$ such that we have a $\qq$-algebra isomorphism $\Lambda\otimes_\zz \qq\simeq \mathcal{A}$.
We then have $$\Lambda\otimes_\zz\rr\simeq (\Lambda\otimes_\zz \qq)\otimes_\qq \rr\simeq \mathcal{A}\otimes_\qq\rr=E.$$

Then $M=\Lambda\otimes_\zz 1$ is a full sublattice of $E$, and it is easy to check that its dual with respect to $b_\rr$ is $\Lambda^\dl\otimes_\zz 1$, where $$\Lambda^\dl=\{ x\in\mathcal{A}\mid b(x,y)\in\zz \mbox{ for all }y\in\Lambda\}.$$

In the sequel, we will only consider lattices $M$ of the form $M=\Lambda\otimes_\zz 1$.

Hence, any left/right ideal $\mathcal{J}$ of $\Lambda$ canonically identifies to a full sublattice $M$ of $E$ (namely $M=\mathcal{J}\otimes_\zz 1$). Modulo this canonical identification, the restriction of the inner product to $\mathcal{J}\times \mathcal{J}$ is exactly the restriction of $b$ to $\mathcal{J}\times \mathcal{J}$, meaning that we may ignore $E$ and work directly with $b$ for computations (rather than $b_\rr$)

The setting $pM \subset N \subset M$ of this section is then applied to a two-sided ideal $\mathcal{I}$ and a left ideal $\mathcal{J}$ of $\Lambda$
such that $p\mathcal{J}\subset \mathcal{I}\subset \mathcal{J}$,
 and we still assume that $b\in \zz$ for all $x,y\in \mathcal{J}$ and 
$b(x,y)\in p\zz$ for all $x\in \mathcal{J}$ and all $y\in\mathcal{I}$.

Then the \emph{dual} $\mathcal{J}^\dl$ of $\mathcal{J}$ becomes
$$\mathcal{J}^\dl=\{x\in \mathcal{A}\mid b(x,y)\in \zz \mbox{ for all }y\in \mathcal{J}\}.$$

In this context, we will make the following abuse of notation : strictly speaking, if $C$ is a code of $\mathcal{J}/\mathcal{I}$, $\Gamma_C$ is $\pi^{-1}(C)\otimes \dfrac{1}{\sqrt{p}}$. However, we will still denote it by $\dfrac{1}{\sqrt{p}}\pi^{-1}(C)$, even in a noncommutative setting. This is not a serious matter, since at the level of bilinear form, we just divide by $p$.

Our main motivation for choosing $\mathcal{J}$ a left ideal of $\Lambda$ is that for any left ideal $C$ of $\Lambda/\mathcal{I}$ contained in $M/N$, $\pi^{-1}(C)$ 
will be a left ideal of $\Lambda$ contained in $\mathcal{J}$. In particular, the lattice $\Gamma_C$ will inherit an extra multiplicative structure, in the sense that for any $a\in\Lambda$ and any $x\in\Gamma_C$, then $ax\in\Gamma_C$.

\begin{rem}  
Notice that because of the normalisation factor $\dfrac{1}{\sqrt{p}}$, the product of two elements of $\Gamma_C$ will not be necessarily an element of $\Gamma_C$ (nevertheless, this will be true for elements of $\pi^{-1}(C)$).

\end{rem}

Most of the time, we will be in the easiest situation where $\mathcal{J}=\Lambda$.

The two cases that we will study in particular are:
\begin{itemize}
\item
$\mathcal{A}=L$ is a Galois totally real or CM number field, $\Lambda=\OO_L$ is the ring of integers of $L$, and $q_{L,\lambda}(x,y)=\Tr_{L/\qq}(\lambda x^*y)$,where $\lambda$ is a suitable real parameter;

\item
$\mathcal{A}=B=(\gamma, L/k,\sigma)$ is a cyclic $k$-algebra, with $L/\qq$  Galois totally real or CM, and $q_{B,\lambda}(x,y)=\Tr_{L/\qq}(\Trd_B(\lambda\tau(x)y))$, where  $\lambda$ is a suitable real parameter and
$\tau$ is an involution on $B$.
\end{itemize}

%
%
%
\section{Polynomial Codes and Lattices over Number Fields}

We now assume that $L$ is a number field, such that complex conjugation $^*$ induces an automorphism of $L$.  We fix a prime number $p$, an ideal $I$ of $\mathcal{O}_L$ containing $p$ such that $I^*=I$, and a $\zz$-linear map $s:\mathcal{O}_L\to\zz$. 

Notice that $\mathcal{O}_L/I$ has a natural structure of an $\ff_p$-algebra, since $p\in I$, and thus so have ideals of $\mathcal{O}_L/I$.  We will make several assumptions:

$(H_1)$ The linear map $s$ induces on $\mathcal{O}_L/I$ a well-defined nondegenerate symmetric $\ff_p$-bilinear map $$\nfuncp{\varphi}{\mathcal{O}_L/I\times\mathcal{O}_L/I}{\ff_p}{([x]_I,[y]_I)}{[s(x^*y)]_p}$$

$(H_2)$ There exists a nonzero monic polynomial $\ov{\mu}\in\ff_p[X]$ such that we have an isomorphism of $\ff_p$-algebras $$\ff_p[X]/(\ov{\mu})\simeq \mathcal{O}_L/I.$$

We will see concrete examples where these assumptions are satisfied later on. For now, we are going to investigate the ideals of  $\mathcal{O}_L/I$.

\subsection{Ideals of $\mathcal{O}_L/I$ and Codes}\label{ssec-codes}
The isomorphism of $\ff_p$-algebras $\ff_p[X]/(\ov {\mu})\simeq \mathcal{O}_L/I$ shows that ideals of $\mathcal{O}_L/I$ correspond to ideals of $\ff_p[X]$ containing $\ov{\mu}$, which themselves correspond to monic divisors of $\ov{\mu}$, which in turn may be used as generator polynomials of polynomial codes.

Notice that the assumption boils down to have a surjective ring morphism $\ff_p[X]\to\mathcal{O}_L/I$. Since $[1]_p$ is necessarily mapped onto the unit element of $\mathcal{O}_L/I$, this morphism is a morphism of $\ff_p$-algebras, and is necessarily given by evaluation at a class $[\beta]_I\in\mathcal{O}_L/I$. The kernel of this surjective morphism is generated by $\ov{\mu}$.

Let $\beta\in\mathcal{O}_L$ be such that the class of $X$ is mapped onto the class of $\beta$. Then the isomorphism above is given by $$\funcisop{\ff_p[X]/(\ov{\mu})}{\mathcal{O}_L/I}{[\ov{f}]_{\ov{\mu}}}{[f(\beta)]_I}$$
Note that this isomorphism is well-defined because $p\in I$, so the result does not depend on the choice of the representative $f$.

 If $\ov{g}$ is a monic divisor of $\ov{\mu}$, the previous considerations show that it corresponds to the ideal $\OO_L/I\cdot[g(\beta)]_I$, and thus to the ideal $\mathcal{O}_L \, g(\beta)+I$ of $\mathcal{O}_L.$ 

Since complex conjugation is an automorphism of $L$, it induces an automorphism of $\mathcal{O}_L$, and in turn an automorphism of  $\mathcal{O}_L/I$, since $I^*=I,$ still denoted by $^*$. By definition, we have $[x]_I^*=[x^*]_I$ for all $x\in\mathcal{O}_L$. Therefore complex conjugation also induces  a correspondence between ideals, hence between monic divisors of $\ov{\mu}$.  If $\ov{g}$ is such a monic divisor, we will denote by $\ov{g}_*$ the corresponding monic divisor of $\ov{\mu}$.

Notice that if the ideal $I'/I$ corresponds to the ideal generated by $\ov{g}$, then the $\ff_p$-vector space $I'/I$  has dimension $\dim_{\ff_p}(I'/I)=\deg(\ov{\mu})-\deg(\ov{g})$, and so has the code with generator polynomial $\ov{g}$.  
Since $^*$ is an automorphism of $\mathcal{O}_L/I$, $(I'/I)^*$ and $(I'/I)$ have same dimension over $\ff_p$. In particular, $\ov{g}_*$ and $\ov{g}$ have same degree. 

\begin{thm}\label{dualpoly}
Assuming $(H_1)$ and $(H_2)$, let $\ov{g}$ be a monic divisor of $\ov{\mu}$. Let $I'/I$ be the corresponding ideal of $\mathcal{O}_L/I.$  Then $$(I'/I)^\perp=\{\ov{y} \in \mathcal{O}_L/I~|~\varphi([x]_I,[y]_I)=0 \mbox{ for  all }[x]_I \in I'/I\}$$ is an ideal of $\mathcal{O}_L/I$, corresponding to the monic divisor $\ov{g}_\perp=\dfrac{\ov{\mu}}{\ov{g}_*}$. 

In particular, $I'/I$ is self-orthogonal if and only if $\ov{\mu}\mid \ov{g}_*\ov{g}$, and self-dual if and only if $\ov{g}_*\ov{g}=\ov{\mu}$.
\end{thm}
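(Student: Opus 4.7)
The plan is threefold: first show $(I'/I)^\perp$ is itself an ideal of $\OO_L/I$ and compute its $\ff_p$-dimension; second exhibit it explicitly as the ideal associated with $\ov{\mu}/\ov{g}_*$; and finally read off the self-orthogonality and self-duality criteria.

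For the first step, commutativity of $\OO_L$ yields the adjunction
\begin{align*}
\varphi([a]_I[x]_I,[y]_I) &= [s(a^*x^*y)]_p = [s(x^*a^*y)]_p \\
&= \varphi([x]_I,[a^*]_I[y]_I).
\end{align*}
Since $I'/I$ is an ideal and $^*$ is a ring automorphism of $\OO_L/I$, this shows that $(I'/I)^\perp$ is stable under multiplication by any class, so it is an ideal and by $(H_2)$ corresponds to a unique monic divisor $\ov{g}_\perp$ of $\ov{\mu}$. Nondegeneracy of $\varphi$ from $(H_1)$ gives $\dim_{\ff_p}(I'/I)^\perp = \deg(\ov{\mu})-\dim_{\ff_p}(I'/I)=\deg(\ov{g})$, which equals $\deg(\ov{g}_*)$. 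The candidate ideal associated with $\ov{\mu}/\ov{g}_*$ has $\ff_p$-dimension $\deg(\ov{g}_*)$ as well, so it suffices to prove one containment in order to conclude.

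The containment is the main obstacle. Pick monic lifts $g,g_*,h\in\zz[X]$ of $\ov{g}, \ov{g}_*, \ov{\mu}/\ov{g}_*$ and a lift $\mu\in\zz[X]$ of $\ov{\mu}$. From $\ov{g}_*\ov{h}=\ov{\mu}$ we get $g_*h\equiv\mu\pmod{p}$, and evaluating at $\beta$ yields $g_*(\beta)h(\beta)\in\mu(\beta)\OO_L+p\OO_L\subset I$ (using $\mu(\beta)\in I$ from $(H_2)$ and $p\in I$). Next, by definition of $\ov{g}_*$, the element $[g(\beta)^*]_I$ lies in the ideal generated by $[g_*(\beta)]_I$, so $g(\beta)^*\equiv c\,g_*(\beta)\pmod{I}$ for some $c\in\OO_L$. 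Then for all $a,b\in\OO_L$,
\begin{align*}
\varphi([ag(\beta)]_I,[bh(\beta)]_I) &= [s(a^*g(\beta)^*\,bh(\beta))]_p\\
&= [s(a^*b\,c\,g_*(\beta)h(\beta))]_p = [0]_p,
\end{align*}
the second equality by well-definedness of $\varphi$, and the vanishing because $a^*bc\,g_*(\beta)h(\beta)\in I$ and $s(I)\subset p\zz$ (itself a consequence of $(H_1)$, obtained by inserting a zero representative on one side of $\varphi$). Thus $(\OO_L\, h(\beta)+I)/I\subset (I'/I)^\perp$, and the equal dimensions force $\ov{g}_\perp=\ov{\mu}/\ov{g}_*$.

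For the final sentence, it remains to translate ideal inclusions in $\ff_p[X]/(\ov{\mu})$ into polynomial divisibilities: $I'/I\subset(I'/I)^\perp$ amounts to $\ov{\mu}/\ov{g}_*\mid\ov{g}$, i.e., $\ov{\mu}\mid\ov{g}_*\ov{g}$, while $I'/I=(I'/I)^\perp$ reduces by dimension to $\ov{g}=\ov{\mu}/\ov{g}_*$, that is $\ov{g}_*\ov{g}=\ov{\mu}$.
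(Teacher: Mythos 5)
Your proof is correct. The route is a mild but genuine variant of the paper's: the paper starts from the defining vanishing $\varphi([x]_I,[yh(\beta)]_I)=0$ and uses nondegeneracy of $\varphi$ to extract $[x]_I[h(\beta)]_I^*=[0]_I$, hence $\ov{g}_\perp\mid\ov{h}$, where $\ov{h}$ is the a priori unknown monic divisor attached to $(I'/I)^\perp$; you instead take the candidate $\ov{\mu}/\ov{g}_*$, verify directly (using $g_*(\beta)h(\beta)\in I$ and $s(I)\subset p\zz$) that its ideal is contained in $(I'/I)^\perp$, which is the opposite inclusion $\ov{h}\mid\ov{g}_\perp$. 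Both arguments then close with the same dimension count from nondegeneracy. A small advantage of your direction is that nondegeneracy enters only once (for the dimension), whereas the paper also needs it inside the computation; a small advantage of the paper's direction is that it needs no choice of integral lifts of $\ov{g}_*$ and $\ov{\mu}/\ov{g}_*$. One minor wording quibble: the ``second equality'' in your display is most cleanly justified by $s(I)\subset p\zz$ applied to $a^*(g(\beta)^*-cg_*(\beta))bh(\beta)\in I$, rather than by well-definedness of $\varphi$ per se (though that also works after commuting factors into $x^*y$ form); since you already derive $s(I)\subset p\zz$ for the vanishing step, citing it directly for both equalities would be tighter.
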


\begin{proof}
By some previous considerations, we have $I'=\mathcal{O}_L\, g(\beta)+I$. Let $[y_1]_I, [y_2]_I\in (I'/I)^\perp$, and let $[a]_I\in\mathcal{O}_L/I$. For all $[x]_I\in I'/I$, we have $$\begin{array}{lll}\varphi([x]_I,[y_1]_I+ [a]_I[y_2]_I)&=&[s(x^*y_1)+s(ax^*y_2)]_p\cr 
&=& \varphi([a]_I^* [x]_I,[y_2]_I)+\varphi([x]_I,[y_1]_I)\cr 
&=&[0]_p,\end{array}$$
since $I'/I$ is an ideal and  $[y_1]_I, [y_2]_I\in (I'/I)^\perp$. Thus,  $[y_1]_I+ [a]_I[y_2]_I\in (I'/I)^\perp$ and $(I'/I)^\perp$ is an ideal.

Hence $(I'/I)^\perp=I''/I$, where $I''=\mathcal{O}_L h(\beta)+I$, for some monic divisor $\ov{h}$ of $\ov{\mu}$.
Hence any element of  $(I'/I)^\perp$ has the form $[y h(\beta)]_I,$ where $y\in\mathcal{O}_L$.

Now, for all $x\in I'$, we have $$0=\varphi([x]_I,[y h(\beta)]_I)=[s(x^* y h(\beta))]_p=[s(h(\beta) x^* y]_p=\varphi([x]_I[h(\beta)]_I^*,[y]_I)$$ for all $[y]_I\in \mathcal{O}_L/I.$
Since $\varphi$ is nondegenerate, we get that $$[x]_I[h(\beta)]_I^*=[0]_I\in\mathcal{O}_L/I  \mbox{ for all }[x]_I\in I'/I,$$ that is $$[x]_I^* [h(\beta)]_I=[0]_I\in\mathcal{O}_L/I \mbox{ for all }[x]_I\in I'/I.$$ This is equivalent to say that $[z]_I [h(\beta)]_I=[0]_I\in\mathcal{O}_L/I$  for all $[z]_I \in (I'/I)^*$.
By definition of $g_*$,  $(I'/I)^*$ is generated by $[g_*(\beta)]_I$. In particular, we have $[g_*(\beta)]_I [h(\beta)]_I=[0]_I\in\mathcal{O}_L/I.$ Using the $\ff_p$-algebra isomorphism $\ff_p[X]/(\ov{\mu})\simeq \mathcal{O}_L/I,$ we see that it is equivalent to $\ov{g}_* \ov{h}\equiv 0 \mod (\ov{\mu}),$ that is $\ov{\mu}\mid \ov{g}_*\ov{h}$. Thus $\ov{g}_\perp\mid\ov{h}$.

Since $\varphi$ is nondegenerate, we have $$\dim_{\ff_p}((I'/I)^\perp)=\dim_{\ff_p}(\mathcal{O}_L/I)-\dim_{\ff_p}(I'/I), $$and thus $$\deg(\ov{h})=\deg(\ov{\mu})-\deg(\ov{g})=\deg(\ov{\mu})-\deg(\ov{g}_*)=\deg(\ov{g}_\perp).$$
Since $\ov{g}_\perp\mid\ov{h}$ and these polynomials are monic, we get $\ov{h}=\ov{g}_\perp$.

Now the inclusion $(I'/I)\subset (I'/I)^\perp$ corresponds to the inclusion $(\ov{g})\subset (\ov{g}_\perp)$, that is $\ov{g}_\perp\mid \ov{g}$, while the equality $(I'/I)= (I'/I)^\perp$ corresponds to the equality $\ov{g}=\ov{g}_\perp$ (since the polynomials are monic). The last part of the theorem follows immediately.
\end{proof}

We now give examples of situations where $(H_2)$ holds.

\begin{prop}\label{proph20}
Let $I$ be an ideal of $\mathcal{O}_L$ such that the $\ff_p$-algebra $\mathcal{O}_L/I$ is generated by the class of an element $\alpha\in \mathcal{O}_L$.
Then there exists an isomorphism of $\ff_p$-algebras $$\ff_p[X]/(\ov{\mu})\simeq \mathcal{O}_L/I,$$
which sends the class of $X$ to the class of  $\alpha$. 

Moreover, the polynomial $\ov{\mu}$ is the unique monic polynomial of smallest degree such that $\mu(\alpha)\in I$.
\end{prop}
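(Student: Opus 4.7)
The proof plan is to construct the isomorphism as the factorization of an evaluation map, and then identify $\ov{\mu}$ as the generator of its kernel.

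First, I would define the evaluation morphism at $\alpha$ modulo $I$. Given $\ov{f} \in \ff_p[X]$, pick any lift $f \in \zz[X]$ of $\ov{f}$ and set $\varphi_\alpha(\ov{f}) = [f(\alpha)]_I$. This is well-defined because if $f_1, f_2 \in \zz[X]$ are two lifts of the same $\ov{f}$, then $f_1 - f_2 \in p\zz[X]$, hence $f_1(\alpha) - f_2(\alpha) \in p\mathcal{O}_L \subseteq I$ since by hypothesis $p \in I$. The map $\varphi_\alpha \colon \ff_p[X] \to \mathcal{O}_L/I$ is immediately checked to be a morphism of $\ff_p$-algebras sending $[X]_p$ to $[\alpha]_I$.

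Next, I would check surjectivity and identify the kernel. Surjectivity follows directly from the hypothesis that the $\ff_p$-algebra $\mathcal{O}_L/I$ is generated by the class of $\alpha$: every element of $\mathcal{O}_L/I$ can be written as a polynomial expression in $[\alpha]_I$ with coefficients in $\ff_p$, which is precisely an element of the image of $\varphi_\alpha$. Since $\ff_p[X]$ is a principal ideal domain, $\ker \varphi_\alpha$ is generated by a unique monic polynomial, which I would call $\ov{\mu}$ (the kernel is nonzero because $\mathcal{O}_L/I$ is finite, being a quotient of the finite ring $\mathcal{O}_L/p\mathcal{O}_L$). The first isomorphism theorem then yields the desired isomorphism $\ff_p[X]/(\ov{\mu}) \simeq \mathcal{O}_L/I$, sending the class of $X$ to the class of $\alpha$.

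For the final characterization of $\ov{\mu}$: a monic polynomial $\nu \in \zz[X]$ satisfies $\nu(\alpha) \in I$ if and only if $\ov{\nu} \in \ker\varphi_\alpha = (\ov{\mu})$, that is if and only if $\ov{\mu}$ divides $\ov{\nu}$ in $\ff_p[X]$. Comparing degrees, $\ov{\mu}$ is therefore the unique monic polynomial in $\ff_p[X]$ of smallest degree lying in the kernel, which gives the claimed minimality property.

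I do not anticipate any real obstacle: this is essentially the minimal-polynomial construction for a cyclic $\ff_p$-algebra, the only subtle point being the well-definedness of the evaluation map, which is immediate from the containment $p \in I$.
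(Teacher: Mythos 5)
Your proof is correct and follows essentially the same route as the paper's: construct the evaluation morphism $\ff_p[X] \to \mathcal{O}_L/I$ at $\alpha$ (well-defined since $p \in I$), observe it is surjective by hypothesis, identify the kernel as the principal ideal generated by the monic polynomial of smallest degree, and apply the first isomorphism theorem. You spell out the well-definedness and nontriviality of the kernel a bit more explicitly than the paper does, but the argument is the same.
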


\begin{proof}
Since $I$ contains $p$, evaluation at $\alpha$ induces a morphism of $\mathbb{F}_p$-algebras $\theta:\mathbb{F}_p[X]\to \mathcal{O}_L/I$ which is surjective by assumption. Moreover, $\ker(\theta)$ is the set of polynomials $\ov{f}$ such that $f(\alpha)\in I$. It is generated by the unique monic polynomial of $\ker(\theta)$ of smallest degree. Now apply the first isomorphism theorem to conclude.
\end{proof}

We would like  to compute explicitly $\ov{g}_*$ in various cases.

\begin{lem}
Assume that complex conjugation induces an automorphism on $L$. 
Let $I$ be an ideal of $\mathcal{O}_L$ such that $I^*=I.$
Assume that the $\ff_p$-algebra $\mathcal{O}_L/I$ is generated by the class of an element $\alpha\in \mathcal{O}_L$.

Let  $\ov{\mu}$ be the unique monic polynomial of smallest degree such that $\mu(\alpha)\in I$, so that we have $$\ff_p[X]/(\ov{\mu})\simeq \mathcal{O}_L/I,$$
where the class of $X$ is mapped to the class of $\alpha$.
 
Let $\ov{g}\in\ff_p[X]$ be a monic divisor of $\ov{\mu}$. There exists a polynomial $g_0\in\zz[X]$ such that $g(\alpha)^*-g_0(\alpha)\in I$ . Then $\ov{g}_*=gcd(\ov{g}_0,\ov{\mu})$.
\end{lem}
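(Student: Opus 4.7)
The plan is to chase the ideal $I'/I$ and its complex conjugate through the $\mathbb{F}_p$-algebra isomorphism $\mathbb{F}_p[X]/(\ov{\mu})\simeq\mathcal{O}_L/I$, and then translate the statement about generators into a divisibility statement in the PID $\mathbb{F}_p[X]$.

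First I would recall from the discussion preceding Theorem \ref{dualpoly} that the ideal $I'/I$ corresponding to $\ov g$ is the one generated by $[g(\alpha)]_I$. Applying complex conjugation, $(I'/I)^*$ is the ideal generated by $[g(\alpha)^*]_I$ (since $^*$ is a ring automorphism of $\mathcal{O}_L/I$ and sends $\mathcal{O}_L$ onto itself). By the hypothesis $g(\alpha)^*-g_0(\alpha)\in I$, we have $[g(\alpha)^*]_I=[g_0(\alpha)]_I$, so $(I'/I)^*$ is the ideal generated by $[g_0(\alpha)]_I$.

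Next I would transport this back through the $\mathbb{F}_p$-algebra isomorphism $\ff_p[X]/(\ov\mu)\simeq\mathcal{O}_L/I$ sending $[X]_{\ov\mu}$ to $[\alpha]_I$. Under this isomorphism, $[g_0(\alpha)]_I$ corresponds to $[\ov{g_0}]_{\ov\mu}$, so $(I'/I)^*$ corresponds to the principal ideal of $\ff_p[X]/(\ov\mu)$ generated by $[\ov{g_0}]_{\ov\mu}$. Lifting along the quotient map $\ff_p[X]\to\ff_p[X]/(\ov\mu)$, this principal ideal corresponds to the ideal $(\ov{g_0})+(\ov\mu)$ of $\ff_p[X]$. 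Because $\ff_p[X]$ is a principal ideal domain, $(\ov{g_0})+(\ov\mu)=(\gcd(\ov{g_0},\ov\mu))$, and by the correspondence between monic divisors of $\ov\mu$ and ideals of $\mathcal{O}_L/I$ (established before the theorem), the monic divisor attached to $(I'/I)^*$ is $\gcd(\ov{g_0},\ov\mu)$, which is exactly $\ov g_*$ by definition.

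There is no real obstacle: the argument is almost entirely bookkeeping along two canonical isomorphisms, combined with the fact that in a PID the ideal generated by a class modulo $\ov\mu$ is the gcd ideal. The only point worth double-checking is that the reduction $\ov{g_0}$ of $g_0\in\zz[X]$ is well-defined independently of the chosen lift $g_0$; this follows because two lifts differ by a polynomial whose value at $\alpha$ lies in $I$, so (after reduction modulo $p\subset I$) their difference lies in $(\ov\mu)$, which only changes $\gcd(\ov{g_0},\ov\mu)$ by a unit and hence not at all as a monic divisor.
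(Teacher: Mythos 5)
Your proof is correct and follows essentially the same route as the paper's: both track the conjugate ideal $(I'/I)^*$ through the $\ff_p$-algebra isomorphism $\ff_p[X]/(\ov\mu)\simeq\mathcal{O}_L/I$ and then use the PID structure of $\ff_p[X]$ to recognize $(\ov{g}_0)+(\ov\mu)$ as $(\gcd(\ov{g}_0,\ov\mu))$. One small point of emphasis: the lemma's sentence ``There exists a polynomial $g_0\in\zz[X]$ such that $g(\alpha)^*-g_0(\alpha)\in I$'' is part of the assertion rather than a hypothesis, and the paper opens by noting this existence is immediate because $\OO_L/I$ is generated by $[\alpha]_I$; you treated it as given, which is a negligible difference since the observation is one line. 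Your closing remark on well-definedness of $\ov{g}_0$ up to multiples of $\ov\mu$ is a useful addition not made explicit in the paper.
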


\begin{proof}
We have $[g(\alpha)^*]_I\in \OO_L/I$, so by assumption on $\OO_L/I$, there exists $\ov{g}_0\in\ff_p[X]$ such that $[g(\alpha)^*]_I=\ov{g}_0([\alpha]_I)= [g_0(\alpha)]_I$. Then $g^*(\alpha)-g_0(\alpha)\in I$.

The ideal of $\OO_L/I$ corresponding to $\ov{g}$ being $\OO_L/I\cdot [g(\alpha)]_I$, the ideal corresponding to $\ov{g}_*$ is by definition 
$$
(\OO_L/I\cdot [g(\alpha)_I])^*=\OO_L/I\cdot [g(\alpha)^*]_I=\OO_L/I\cdot [g_0(\alpha)]_I.$$
But, this ideal also corresponds to the ideal $((\ov{g}_0)+(\ov{\mu}))/(\ov{\mu})$ of $\ff_p[X]/(\ov{\mu})$, that is $gcd(\ov{g}_0,\ov{\mu})/(\ov{\mu})$. This yields the desired result.
\end{proof}

\begin{prop}\label{calculgstar}
Let $L$ be a number field of degree $n$.
Assume that complex conjugation induces an automorphism on $L$.

Let $I$ be an ideal of $\mathcal{O}_L$ such that $I^*=I.$
Assume that the $\ff_p$-algebra $\mathcal{O}_L/I$ is generated by the class of an element $\alpha\in \mathcal{O}_L$.

Let  $\ov{\mu}$ be the unique monic polynomial of smallest degree such that $\mu(\alpha)\in I$, so that we have $$\ff_p[X]/(\ov{\mu})\simeq \mathcal{O}_L/I,$$
where the class of $X$ is mapped to the class of $\alpha$.
 
Let $\ov{g}\in\ff_p[X]$ be a monic divisor of $\ov{\mu}$.

\begin{enumerate}

\item Assume that $\alpha^*=\alpha.$ Then $\ov{g}_*=\ov{g}$.

\medskip

\item  Assume that $\alpha^*=-\alpha.$ Then $$\ov{g}_*=(-1)^{\deg(\ov{g})}\ov{g}(-X).$$

\medskip 

\item  Assume that $L=\qq(\sqrt{-d}),$ where $d$ is a positive squarefree integer and  $-d\equiv 1 \ [4]$. Then $$\ov{g}_*=(-1)^{\deg(\ov{g})}\ov{g}(1-X).$$

\medskip

\item  Assume that $\alpha^*\alpha=1$ Then $$\ov{g}_*=\ov{g}(0)^{-1}X^{\deg(\ov{g})}\ov{g}(X^{-1}).$$

\end{enumerate}
\end{prop}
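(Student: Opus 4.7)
The plan is to apply the preceding lemma: for each case, exhibit a polynomial $g_0\in\zz[X]$ with $g_0(\alpha)\equiv g(\alpha)^*\pmod I$, then identify the monic form of $\gcd(\ov{g}_0,\ov{\mu})$. In cases (1)--(3), since $g\in\zz[X]$ has $^*$-invariant coefficients and $^*$ is a ring homomorphism, $g(\alpha)^*=g(\alpha^*)$, so the task reduces to expressing $\alpha^*$ as an integer polynomial in $\alpha$: directly in (1) where $\alpha^*=\alpha$, via $\alpha^*=-\alpha$ in (2), and in (3) by taking the natural generator $\alpha=(1+\sqrt{-d})/2$ of $\mathcal{O}_L$, which by the trace relation $\alpha+\alpha^*=1$ yields $\alpha^*=1-\alpha$. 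This produces $g_0=g$, $g_0(X)=g(-X)$, and $g_0(X)=g(1-X)$ respectively, so case (1) is immediate: $\ov{g}_*=\gcd(\ov{g},\ov{\mu})=\ov{g}$.

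For (2) and (3), the candidate $\ov{g}_0$ is not monic (its leading coefficient is $(-1)^{\deg\ov{g}}$). To identify the monic form of $\gcd(\ov{g}_0,\ov{\mu})$ with the stated formula, I first establish the self-symmetry identity $\ov{\mu}(-X)=(-1)^{\deg\ov{\mu}}\ov{\mu}$ (respectively $\ov{\mu}(1-X)=(-1)^{\deg\ov{\mu}}\ov{\mu}$). The argument: $\mu(\alpha)\in I$ together with $I^*=I$ gives $\mu(\alpha)^*=\mu(\alpha^*)=\mu(-\alpha)\in I$ (resp.\ $\mu(1-\alpha)\in I$), hence $\ov{\mu}$ divides $\ov{\mu}(-X)$ (resp.\ $\ov{\mu}(1-X)$) in $\ff_p[X]$, and comparing degrees plus leading coefficients closes the identity. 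Factoring $\ov{\mu}=\ov{g}\,\ov{h}$ and substituting then shows $(-1)^{\deg\ov{g}}\ov{g}(-X)$ (resp.\ $(-1)^{\deg\ov{g}}\ov{g}(1-X)$) is an honest \emph{monic} divisor of $\ov{\mu}$, and thus coincides with $\gcd(\ov{g}_0,\ov{\mu})$.

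Case (4) is different because $\alpha^*=\alpha^{-1}$ is not a polynomial in $\alpha$, so the lemma does not apply in the same form; instead, I work directly with the ideal description, namely that $\ov{g}_*$ is the monic divisor of $\ov{\mu}$ generating the same ideal of $\mathcal{O}_L/I$ as $[g(\alpha)^*]_I$. Introducing the reciprocal polynomial $\tilde g(X)=X^{\deg g}g(X^{-1})$, the identity $\tilde g(\alpha)=\alpha^{\deg g}g(\alpha^{-1})$, combined with the fact that $[\alpha]_I$ is a unit in $\mathcal{O}_L/I$ (which in particular forces $\ov{\mu}(0)\neq 0$ and hence $\ov{g}(0)\neq 0$), gives the equality of ideals $\mathcal{O}_L/I\cdot[g(\alpha)^*]_I=\mathcal{O}_L/I\cdot[\tilde g(\alpha)]_I$. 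Transporting through the isomorphism $\ff_p[X]/(\ov{\mu})\simeq\mathcal{O}_L/I$, $\ov{g}_*$ is the monic associate of $\gcd(\tilde{\ov{g}},\ov{\mu})$; the parallel self-reciprocity $\tilde{\ov{\mu}}=\ov{\mu}(0)\ov{\mu}$ then shows $\tilde{\ov{g}}$ divides $\ov{\mu}$ up to the unit $\ov{g}(0)$, and rescaling by this leading coefficient produces the stated formula.

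The main obstacle is verifying that the candidate polynomial is genuinely a monic divisor of $\ov{\mu}$ rather than merely a generator of the correct ideal in $\ff_p[X]/(\ov{\mu})$; in every case this hinges on the key observation that $\ov{\mu}$ is preserved (up to a unit scalar) by the substitution induced by $^*$, which in turn is what makes the explicit formulas possible.
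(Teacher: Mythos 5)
Your proposal is correct and follows essentially the same route as the paper: apply the preceding lemma (or its underlying ideal correspondence) to reduce each case to computing $\gcd(\ov{g}_0,\ov{\mu})$, then establish that $\ov{\mu}$ is invariant up to a unit scalar under the substitution $X\mapsto -X$, $X\mapsto 1-X$, or $X\mapsto$~reciprocal, so that the candidate is itself a monic divisor of $\ov{\mu}$. One point worth noting: in case (4) you correctly observe that the polynomial $\ov{g}(0)^{-1}X^{\deg\ov{g}}\ov{g}(X^{-1})$ does not literally satisfy the lemma's hypothesis $g_0(\alpha)\equiv g(\alpha)^*\pmod I$ (the two sides differ by the unit factor $\ov{g}(0)^{-1}[\alpha]_I^{\deg\ov{g}}$), and you sidestep this by arguing directly at the level of ideals in $\mathcal{O}_L/I$ — which is exactly what the lemma's proof does anyway. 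The paper's own proof of (4) elides this by naming the same $g_0$ and invoking the lemma, which is a harmless but slightly loose application; your version makes the point cleaner.
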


\begin{proof}~

$(1)$ Since $\ov{g}_*=gcd(\ov{g}_0,\ov{\mu})$, where $g_0$ is such that $g(\alpha)^*-g_0(\alpha)\in I$, one can take $g_0=g$.

$(2)$ One can take $g_0=g(-X)$.

Notice that, since $\mu$ has integral coefficients, $\mu(\alpha^*)=\mu(\alpha)^*\in I^*=I$. Hence, $\mu(-\alpha)\in I.$ If we set $h=(-1)^n\ov{\mu}(-X)$, then $\ov{h}\in\ff_p[X]$ is a monic polynomial such that $h(\alpha)\in I$, of degree $\deg(\ov{\mu})$.
Hence $\ov{h}=\ov{\mu}$, that is $$(-1)^n\ov{\mu}(-X)=\ov{\mu}.$$
Since $\ov{g}$ divides $\ov{\mu}$, reducing modulo $p$, it follows from this equality that $\ov{g}_0$ divides $\ov{\mu}.$
Consequently, the monic gcd of  $\ov{g}_0$ and $\ov{\mu}$ is $(-1)^{\deg(\ov{g})}\ov{g}(-X).$
In other words, we get $$\ov{g}_*=(-1)^{\deg(\ov{g})}\ov{g}(-X).$$

$(3)$ Here $\alpha=\dfrac{1+\sqrt{-d}}{2}$ and $\alpha^*=1-\alpha$. 
Reasoning as before, and noticing that $\ov{g}$ and $\ov{g}(-1-X)$ have same degree,  we get the desired result.

$(4)$ We first show that $\ov{\mu}(0)\neq 0$. Otherwise, we would have $\ov{\mu}=X\ov{r}$ for some monic polynomial $\ov{r}\in\ff_p[X]$. 
Then $\alpha r(\alpha)\in I$, and since $I$ is an ideal, we get $\alpha^*\alpha r(\alpha)=r(\alpha)\in I$. This would contradict the minimality of $\deg(\ov{\mu})$.

In particular, the constant term of $\ov{g}$ is not zero.

Now, one can take $g_0=\ov{g}(0)^{-1}X^d g(X^{-1})$, where $d=\deg(g)=\deg(\ov{g})$. As in $(2)$, $$\mu(\alpha^*)=\mu(\alpha^{-1})\in I.$$ Let $[b]_p=[a_0]_p^{-1}$, where $[a_0]_p$ is the constant term of $\ov{\mu}$.
Set $h=bX^n\mu(X^{-1})$. Then $h(\alpha)=b\alpha^n \mu(\alpha^{-1})\in I$. But $\ov{h}$ is monic (since its leading coefficient is $[b]_p[a_0]_p=[1]_p$) of degree $\deg(\ov{\mu})$, and thus  $$[b]_pX^n\ov{\mu}(X^{-1})=\ov{\mu}.$$

Now write $\ov{g} \,\ov{r}=\ov{\mu}$.  If $\deg(\ov{h})=e$, then $d+e=n$, and we have 
$$X^n\ov{\mu}(X^{-1})=(X^d\ov{g}(X^{-1}))(X^e\ov{r}(X^{-1})).$$
 
The equality above shows that $\ov{g}_0$ is a divisor of $\ov{\mu}$. As before, we conclude that $$\ov{g}_*=\ov{g}(0)^{-1}X^{\deg(\ov{g})}\ov{g}(X^{-1}).$$
\end{proof}

\begin{rem}
All these results apply in particular when $\mathcal{O}_L=\zz[\alpha]$ for some $\alpha\in\mathcal{O}_L$.
\end{rem}

The next result gives a concrete example where the assumptions of Proposition \ref{proph20} are satisfied.

\begin{tikzcd}[column sep=small]
 & L = K_1K_2 \arrow[ld,dash] \arrow[rd,dash] & \\
K_1\supset \mathcal{O}_{K_1}=\mathbb{Z}[\alpha_1] \arrow[rd,dash,"n_1"'] &  & K_2 \arrow[ld,dash,"n_2"] \supset \mathcal{O}_{K_2}=\mathbb{Z}[\alpha_2], &\hspace{-0.8cm} p\mathcal{O}_{K_2} = \mathfrak{p}_2^{n_2}\\
 & \mathbb{Q} \supset \mathbb{Z} \ni p & \\
\end{tikzcd}

\begin{prop}\label{proph2}
Let $K_1, K_2$ be two numbers fields of degree $n_1$ and $n_2$ respectively which are arithmetically disjoint (that is  linearly disjoint over $\qq$ with coprime discriminants). Assume that $\mathcal{O}_{K_i}=\zz[\alpha_i]$ for $i=1,2.$
Let $p$ be a prime number which is totally ramified in $K_2$, and let $\pp_2$ be the unique prime ideal of $\mathcal{O}_{K_2}$ lying above $p$. Set $L=K_1K_2$.

Then  $\mathcal{O}_L/\pp_2\mathcal{O}_L$ is generated by the class of $\alpha_1$. 
Moreover, given the minimal polynomial $\mu_{\alpha_1,\qq}$ of $\alpha_1$, we have an isomorphism of $\ff_p$-algebras  $$\ff_p[X]/(\ov{\mu}_{\alpha_1,\qq})\simeq \mathcal{O}_L/\pp_2\mathcal{O}_L,$$
which maps the class of $X$ modulo $(\ov{\mu}_{\alpha_1,\qq})$
to the class of $\alpha_1$ modulo $\pp_2\mathcal{O}_L$,
and an isomorphism of $\ff_p$-algebras $$\ff_p[X]/(\ov{\mu}_{\alpha_1,\qq})\simeq \mathcal{O}_{K_1}/p\mathcal{O}_{K_1},$$
which maps the class of $X$ modulo $(\ov{\mu}_{\alpha_1,\qq})$ to the class of $\alpha_1$  modulo $p\mathcal{O}_{K_1}.$

In particular, we get have an isomorphism of $\ff_p$-algebras $$\mathcal{O}_{K_1}/p\mathcal{O}_{K_1}\simeq \mathcal{O}_L/\pp_2\mathcal{O}_L,$$
which sends the class of $x_1\in\mathcal{O}_{K_1}$ modulo $p\mathcal{O}_{K_1}$ to the class of $x_1$ modulo $\pp_2\mathcal{O}_L$.
\end{prop}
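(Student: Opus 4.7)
The plan is to leverage the arithmetic disjointness hypothesis to write $\OO_L$ as a tensor product over $\zz$, then reduce one factor modulo $\pp_2$. The classical consequence of arithmetic disjointness is that the multiplication map $\OO_{K_1}\otimes_\zz \OO_{K_2}\to \OO_L$ is an isomorphism of rings; this is the key ingredient and plays the role of a ``separation of variables'' for what follows. Since $p$ is totally ramified in $K_2$, we have $p\OO_{K_2}=\pp_2^{n_2}$, so the residue degree satisfies $f(\pp_2/p)=1$ and hence $\OO_{K_2}/\pp_2\simeq \ff_p$ canonically.

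Under the identification $\OO_L\simeq \OO_{K_1}\otimes_\zz \OO_{K_2}$, the extended ideal $\pp_2\OO_L$ corresponds to $\OO_{K_1}\otimes_\zz \pp_2$. Since $\OO_{K_1}$ is a free (hence flat) $\zz$-module, tensoring the short exact sequence $0\to\pp_2\to\OO_{K_2}\to\OO_{K_2}/\pp_2\to 0$ with $\OO_{K_1}$ remains exact, yielding
$$\OO_L/\pp_2\OO_L\simeq \OO_{K_1}\otimes_\zz(\OO_{K_2}/\pp_2)\simeq \OO_{K_1}\otimes_\zz \ff_p\simeq \OO_{K_1}/p\OO_{K_1}.$$
Tracking an element $x_1\in \OO_{K_1}$ through the chain: $[x_1]_{\pp_2\OO_L}$ corresponds to $x_1\otimes 1$, then to $x_1\otimes [1]_p$, hence to $[x_1]_p$, which establishes the ``natural'' identification claimed at the end of the proposition.

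The remaining assertions reduce to the classical description of $\OO_{K_1}/p\OO_{K_1}$. Since $\OO_{K_1}=\zz[\alpha_1]$, we have a ring isomorphism $\zz[X]/(\mu_{\alpha_1,\qq})\simeq \OO_{K_1}$ sending $X\mapsto \alpha_1$; reducing modulo $p$ gives
$$\ff_p[X]/(\ov{\mu}_{\alpha_1,\qq})\simeq \OO_{K_1}/p\OO_{K_1},\qquad [X]\mapsto [\alpha_1]_p.$$
Composing with the previous isomorphism produces $\ff_p[X]/(\ov{\mu}_{\alpha_1,\qq})\simeq \OO_L/\pp_2\OO_L$ sending $[X]$ to $[\alpha_1]_{\pp_2\OO_L}$, and in particular shows that this $\ff_p$-algebra is generated by the class of $\alpha_1$.

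The main obstacle is the opening step: one needs the non-trivial fact that $\OO_L=\OO_{K_1}\OO_{K_2}$ and that the multiplication map from the tensor product is an isomorphism. This is precisely the standard theorem available under the arithmetic disjointness assumption (linear disjointness plus coprime discriminants), and once it is granted the rest of the argument is bookkeeping with tensor products and quotients. One small point to verify is that $\pp_2\OO_L$ really does correspond to $\OO_{K_1}\otimes_\zz \pp_2$ and not to some larger ideal, which follows immediately from the fact that the latter already contains $1\otimes \pp_2$ and is stable under multiplication by $\OO_L$.
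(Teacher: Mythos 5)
Your proof is correct, but it takes a genuinely different route from the paper's. The paper argues ``from the inside'': it invokes Dedekind's criterion to get $p\OO_{K_2}=\pp_2^{n_2}$ with $\pp_2=(\alpha_2-a_2,p)$, observes that $\alpha_2\equiv a_2\bmod\pp_2\OO_L$ so that every element of $\OO_L=\bigoplus\zz\alpha_1^i\alpha_2^j$ is congruent to a polynomial in $\alpha_1$ alone, then applies the earlier Proposition~\ref{proph20} and pins down the minimal polynomial by a cardinality count $\lvert\OO_L/\pp_2\OO_L\rvert=N_{K_2/\qq}(\pp_2)^{n_1}=p^{n_1}$. You instead argue ``from the outside'': you invoke the tensor-product form $\OO_L\simeq\OO_{K_1}\otimes_\zz\OO_{K_2}$ of arithmetic disjointness, use flatness of $\OO_{K_1}$ over $\zz$ to reduce the $\OO_{K_2}$-factor modulo $\pp_2$, and obtain the natural isomorphism $\OO_L/\pp_2\OO_L\simeq\OO_{K_1}\otimes_\zz(\OO_{K_2}/\pp_2)\simeq\OO_{K_1}/p\OO_{K_1}$ in one stroke; the polynomial description then falls out of $\OO_{K_1}=\zz[\alpha_1]$. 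Your route buys transparency — the identification of $\OO_L/\pp_2\OO_L$ with $\OO_{K_1}/p\OO_{K_1}$ is visibly natural and no counting argument is needed to identify $\ov{\mu}$ as $\ov{\mu}_{\alpha_1,\qq}$ — at the cost of explicitly invoking the tensor-product theorem (which the paper uses only in the weaker form ``$\alpha_1^i\alpha_2^j$ is a $\zz$-basis''). Your final remark that $\pp_2\OO_L$ corresponds to $\OO_{K_1}\otimes_\zz\pp_2$ is the one point that deserves a line of justification, and the argument you sketch (the latter is already an ideal of $\OO_L$ containing $\pp_2$, and conversely $\pp_2\OO_L=\pp_2\OO_{K_1}\OO_{K_2}=\OO_{K_1}\pp_2$) does the job.
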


\begin{proof}
Since $\mathcal{O}_2=\zz[\alpha_2]$, a theorem of Dedekind asserts that reduction modulo $p$ of $\mu_{\alpha_2,\qq}$ is $(X-\ov{a}_2)^{n_2}\in\ff_p[X]$ for some $\ov{a}_2\in\ff_p$,
and that $p\mathcal{O}_{K_2}=\pp_2^{n_2}$, where $\pp_2=(\alpha_2-a_2,p)$.


Let us denote by $f:\mathcal{O}_{K_1}\to \mathcal{O}_L/\pp_2\mathcal{O}_{L}$ the ring morphism sending $x_1\in\mathcal{O}_{K_1}$ to its class modulo $\pp_2\mathcal{O}_L.$

The assumption on $K_1$ and $K_2$ also implies that the various powers $\alpha_1^i\alpha_2^j$ form a $\zz$-basis of $\mathcal{O}_L$.   Since $\alpha_2-a_2\in\pp_2\subset\pp_2\mathcal{O}_L $, we have $\alpha_2\equiv a_2\mod \pp_2\mathcal{O}_L$.
It readily follows that any element of $\mathcal{O}_L$ is congruent to an element of the form $P(\alpha_1),P\in\zz[X]$, which is an element of $\mathcal{O}_{K_1}$. In other words, the $\ff_p$-algebra is generated by the class of $\alpha_1$. By Proposition \ref{proph20}, evaluation at $\alpha_1$ induces an isomorphism of $\ff_p$-algebras  $$\ff_p[X]/(\ov{\mu})\simeq \mathcal{O}_L/\pp_2\mathcal{O}_L,$$ where $\ov{\mu}$ is the unique monic polynomial 
of smallest degree such that $\mu(\alpha_1)\in \pp_2\mathcal{O}_L$. 

Now we have $$\vert  \mathcal{O}_L/\pp_2\mathcal{O}_L\vert=N_{L/\qq}(\pp_2\mathcal{O}_L)=N_{K_2/\qq}(\pp_2)^{n_1}=p^{n_1}$$ since $p$ totally ramifies in $K_2$, so $\deg(\ov{\mu})=n_1$. But $\mu_{\alpha_1,\qq}$ has degree $n_1$, and $\mu_{\alpha_1,\qq}(\alpha_1)=0\in\pp_2\mathcal{O}_L$. Since $\ov{\mu}_{\alpha_1,\qq}$ is monic, we deduce that $\ov{\mu}=\ov{\mu}_{\alpha_1,\qq}$.

For the second isomorphism, apply Proposition \ref{proph20} to $L=K_1$ and $I=p\mathcal{O}_L$.
The rest of the proposition is then clear.
\end{proof}

We now prove that hypothesis $(H_1)$ is fulfilled in the context of Proposition \ref{proph2} when $p$ is tamely ramified.

\subsection{Ideals of Number Fields and Lattices}

We now give a concrete example of map $s:\OO_L\to\zz$ which will be used in the rest of the paper.
We first need a lemma.

\begin{lem}\label{traceprops}
Let $L$ be a number field. Then, for any $\lambda\in L^\times$, the symmetric $\qq$-bilinear map $$\nfunc{\mathcal{T}_{L,\lambda}}{L\times L}{\qq}{(x,y)}{\Tr_{L/\qq}(\lambda xy)}$$ is nondegenerate.

Moreover, $\mathcal{T}_{L,\lambda}$ is positive definite if and only if $L$ is totally real and $\lambda$ is totally positive. 
\end{lem}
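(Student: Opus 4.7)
The plan is to reduce both claims to standard properties of the trace form and of the archimedean embeddings of $L$. For the first assertion, I would observe that since $\lambda\in L^\times$, the multiplication map $\mu_\lambda\colon L\to L$, $x\mapsto \lambda x$, is a $\qq$-linear bijection, and
$$\mathcal{T}_{L,\lambda}(x,y)=\mathcal{T}_{L,1}(\mu_\lambda(x),y),$$
where $\mathcal{T}_{L,1}(u,v)=\Tr_{L/\qq}(uv)$ denotes the standard trace form. Hence $\mathcal{T}_{L,\lambda}$ is nondegenerate if and only if $\mathcal{T}_{L,1}$ is, and the latter is the classical trace form of the finite separable extension $L/\qq$, which is well known to be nondegenerate.

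For the positive-definiteness statement, I would extend $\mathcal{T}_{L,\lambda}$ to the real vector space $L\otimes_\qq\rr$ (positive-definiteness over $\qq$ and over $\rr$ being equivalent since the form is rational: any indefinite direction in $L\otimes_\qq\rr$ would, by continuity and density, be witnessed by a rational vector). Then I would use the canonical isomorphism of $\rr$-algebras
$$L\otimes_\qq\rr\ \cong\ \prod_{\tau\text{ real}}\rr\ \times\ \prod_{\{\sigma,\bar\sigma\}\text{ complex}}\mathbb{C}$$
given by the archimedean embeddings of $L$, under which $\mathcal{T}_{L,\lambda}$ becomes an orthogonal sum: on the real factor attached to $\tau$, the form is $a\mapsto \tau(\lambda)a^2$; on the complex factor attached to a pair $\{\sigma,\bar\sigma\}$, the form is $z\mapsto 2\mathrm{Re}(\sigma(\lambda)z^2)$, with $\mathbb{C}$ regarded as a two-dimensional real vector space.

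The conclusion then follows by inspecting each summand. A real piece $\tau(\lambda)a^2$ is positive definite if and only if $\tau(\lambda)>0$. For a complex piece, writing $\sigma(\lambda)=u+iv$ and $z=x+iy$, the Gram matrix is
$$\begin{pmatrix} 2u & -2v \\ -2v & -2u \end{pmatrix},$$
whose determinant $-4(u^2+v^2)=-4|\sigma(\lambda)|^2$ is strictly negative since $\sigma(\lambda)\neq 0$; this piece is therefore indefinite of signature $(1,1)$ regardless of $\lambda$. Hence positive-definiteness forces the absence of complex embeddings (so $L$ is totally real) and $\tau(\lambda)>0$ for every real embedding $\tau$ (so $\lambda$ is totally positive); the converse direction is immediate from the same decomposition. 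The only mildly technical step is the sign of the determinant on the complex factor; the rest is bookkeeping.
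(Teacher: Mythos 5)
Your proof is correct, and it takes a genuinely different route from the paper on both points. For nondegeneracy, you reduce $\mathcal{T}_{L,\lambda}$ to the standard trace form $\mathcal{T}_{L,1}$ via the bijection $x\mapsto\lambda x$ and cite the classical nondegeneracy of the trace form of a separable extension; the paper instead argues directly by evaluating at $y=(\lambda x)^{-1}$ and using $\Tr_{L/\qq}(1)=[L:\qq]\neq 0$. For positive-definiteness, the paper invokes Scharlau's description of the orderings of a number field via its real embeddings and the signature formula $\mathrm{sign}(\mathcal{T}_{L,\lambda})=\#\{\sigma_i(\lambda)>0\}-\#\{\sigma_i(\lambda)<0\}$, whereas you work directly with the archimedean decomposition $L\otimes_\qq\rr\cong\prod_\tau\rr\times\prod\mathbb{C}$ and compute the Gram matrix on each summand. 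The two approaches are essentially dual: the paper's is shorter if you are willing to quote the signature-of-trace-forms machinery from [Sch], while yours is more elementary and self-contained, and in particular makes the signature $(1,1)$ on each complex place fully explicit. One small remark on your reduction from $\qq$ to $\rr$: positive-definiteness over $\qq$ does not by itself imply positive-definiteness over $\rr$ (a rational form could be positive semi-definite with an irrational null direction), so the density argument tacitly uses the nondegeneracy established in the first half to rule out the semi-definite case; it would be cleaner to say that explicitly, but the logic is sound once it is in place.
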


\begin{proof}
Notice that $\Tr_{L/\qq}(1)=[L:\qq]\neq 0$. Now, let $x\in L$ such that $\Tr_{L/\qq}(\lambda xy)=0$ for all $y\in L$.
If $x\neq 0$, taking $y=(\lambda x)^{-1}$ yields a contradiction. Hence $x=0$ and $\mathcal{T}_{L,\lambda}$ is nondegenerate.
By \cite[3.2.8]{Sch}, the distinct orderings of $L$ are $$\sigma_1^{-1}(\rr^2),\ldots, \sigma_r^{-1}(\rr^2),$$
where $\rr^2=\{\lambda^2\mid \lambda\in\rr\}$ and $\sigma_1,\ldots,\sigma_r$ are the real $\qq$-embeddings of $L$. 
By \cite[3.4.5]{Sch}, we have $${\rm sign}(\mathcal{T}_{L,\lambda})=\vert \{i\in\left\llbracket 1,r\right\rrbracket \mid \sigma_i(\lambda)>0\}\vert
- \vert \{i\in\left\llbracket 1,r\right\rrbracket \mid \sigma_i(\lambda)<0\}\vert.$$

Then ${\rm sign}(\mathcal{T}_{L,\lambda})=[L:\qq]$ if and only if $r=[L:\qq]$ and $\sigma_i(\lambda)>0$ for $i\in\left\llbracket 1,r\right\rrbracket$, which is the desired result.
\end{proof}

Before stating the next result, we recall from \cite[\S 2]{Lang} the following two equivalent characterizations of a CM field: (1) $L$ is a totally imaginary quadratic extension of a totally real field, (2) complex conjugation commutes with every embedding of $L$ into an algebraic closure of $\qq$ as embedded into the complex numbers, and $L$ is not real.

\begin{lem}\label{trstarprops}
Let $L$ be a number field such that the complex conjugation $\ov{\phantom{\lambda}}$ is a nontrivial $\qq$-automorphism of $L$. Let $L_0=L\cap \rr$, so that $L=L_0(\sqrt{-d})$, where $d\in L_0$ and $d>0,$ and let $\lambda\in L_0^\times.$ Then
the symmetric $\qq$-bilinear map $$\nfunc{q_{L,\lambda}}{L\times L}{\qq}{(x,y)}{\Tr_{L/\qq}(\lambda x^*y)}$$ is nondegenerate.

Moreover, $q_{L,\lambda}$ is positive definite if and only if $L$ is a CM field and $\lambda$ is totally positive.
\end{lem}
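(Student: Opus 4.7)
The plan is to dispatch symmetry and nondegeneracy directly, and then to reduce the positive-definiteness criterion to Lemma \ref{traceprops} by diagonalizing $q_{L,\lambda}$ along the orthogonal decomposition $L = L_0 \oplus L_0\sqrt{-d}$.

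For symmetry, I would apply the involution $^*$ to $\lambda x^* y$: since $\lambda \in L_0$ is fixed by $^*$ and $(x^*)^* = x$, this element equals $\lambda y^* x$. Complex conjugation is a nontrivial $\qq$-automorphism of $L$, so $\Tr_{L/\qq}$ is invariant under it, whence $q_{L,\lambda}(x,y) = q_{L,\lambda}(y,x)$. Nondegeneracy follows the template of Lemma \ref{traceprops}: if $q_{L,\lambda}(x, y) = 0$ for all $y \in L$ and $x \neq 0$, then taking $y = (\lambda x^*)^{-1}$ gives $\Tr_{L/\qq}(1) = [L:\qq] \neq 0$, a contradiction.

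The heart of the argument is the diagonalization. Writing $x = a + b\sqrt{-d}$ with $a, b \in L_0$, one has $x^* x = a^2 + d b^2 \in L_0$. Since $\Tr_{L/L_0}$ acts as multiplication by $2$ on $L_0$, and $\lambda(a^2 + d b^2)$ lies in $L_0$, transitivity of the trace yields
\[
q_{L,\lambda}(x,x) \;=\; \Tr_{L/\qq}\bigl(\lambda(a^2 + d b^2)\bigr) \;=\; 2\,\mathcal{T}_{L_0,\lambda}(a,a) \,+\, 2\,\mathcal{T}_{L_0,\lambda d}(b,b).
\]
Thus $q_{L,\lambda}$ is positive definite on $L$ precisely when both $\mathcal{T}_{L_0,\lambda}$ and $\mathcal{T}_{L_0,\lambda d}$ are positive definite on $L_0$.

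Applying Lemma \ref{traceprops} to each of these forms gives the criterion. For sufficiency, if $L$ is CM then $L_0$ is totally real and $d$ is totally positive, so total positivity of $\lambda$ also implies total positivity of $\lambda d$, and both diagonal summands are positive definite. Conversely, specializing to $b = 0$ forces $\mathcal{T}_{L_0,\lambda}$ to be positive definite, hence $L_0$ totally real and $\lambda$ totally positive by Lemma \ref{traceprops}; specializing to $a = 0$ forces $\mathcal{T}_{L_0,\lambda d}$ to be positive definite, hence $\lambda d$ totally positive, and dividing by $\lambda$ yields $d$ totally positive, so $L = L_0(\sqrt{-d})$ is CM. The only step requiring genuine verification is the diagonalization identity; the rest reduces mechanically to the previous lemma.
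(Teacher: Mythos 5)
Your proposal is correct and follows essentially the same route as the paper: write $x = a + b\sqrt{-d}$, observe that the trace annihilates the $\sqrt{-d}$-component, obtain the orthogonal decomposition of $q_{L,\lambda}$ into two trace forms $\mathcal{T}_{L_0,2\lambda}$ and $\mathcal{T}_{L_0,2\lambda d}$ over $L_0$, and then invoke Lemma~\ref{traceprops} on each summand. The only cosmetic differences are that you compute the quadratic form $q_{L,\lambda}(x,x)$ rather than the full bilinear form (sufficient here, since positive definiteness is a diagonal condition and polarization recovers the rest in characteristic zero), and that you treat symmetry and nondegeneracy by a direct argument (taking $y = (\lambda x^*)^{-1}$, mirroring the proof of Lemma~\ref{traceprops}) instead of reading them off from the orthogonal decomposition; both are fine.
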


\begin{proof}
Let us keep the notation of the lemma. We may write $x=x_0+x_1\sqrt{-d}$ and $y=y_0+y_1\sqrt{-d},$ where $x_0,x_1,y_0,y_1\in L_0$.
We have $$\Tr_{L/\qq}(\sqrt{-d})=\Tr_{L_0/\qq}(\Tr_{L/L_0}(\sqrt{-d}))=\Tr_{L_0/\qq}(0)=0.$$ Hence, we get 
$$q_{L,\lambda}(x,y)=\Tr_{L/\qq}(\lambda x_0y_0+d\lambda x_1y_1)=
\Tr_{L_0/\qq}(\Tr_{L/L_0}(\lambda x_0y_0+d\lambda x_1y_1)),$$ that is $$q_L(x,y)=\Tr_{L_0/\qq}(2\lambda x_0y_0)+\Tr_{L_0/\qq}(2d\lambda x_1y_1),$$
from which we have $$q_{L,\lambda}\simeq \mathcal{T}_{L_0,2\lambda}\perp  \mathcal{T}_{L_0,2d\lambda}.$$
Thus, $q_{L,\lambda}$ is  positive definite if and only if its two orthogonal summands are.  Now, apply the previous lemma to conclude that $q_{L,\lambda}$ is positive definite if and only if $L_0$ is totally real, and $d$ and $\lambda$ are totally positive (taking into account that $2$ is totally positive). The fact that $L_0$ is totally real and $d$ is totally positive is equivalent to say that $L$ is a CM field, so we are done.
\end{proof}

Let $L$ be a number field and 
consider the symmetric $\qq$-bilinear form $$\nfuncv{q_{L,\lambda}}{L\times L}{\qq}{(x,y)}{\Tr_{L/\qq}(\lambda x^* y)}$$
where $\lambda\in L\cap\rr$ is totally positive. It is positive definite if $L$ is CM or totally real (in this case, complex conjugation is trivial on $L$) in view of the previous lemmas.

For any fractional ideal $J$ of $\OO_L$, we have $\Tr_{L/\qq}(J)\subset\zz$ if and only if $J\subset \mathcal{D}_L^{-1}$, where $\mathcal{D}_L$ is the different ideal. We then have the following result.

\begin{lem}\label{qL}
Let $L/\qq$ be a number field of degree $n$ and discriminant $d_L.$ Assume that $L$ is CM or totally real, and let $\lambda\in L_0=L\cap \rr$ be totally positive. For any ideal $J$ of $\OO_L$,  its dual $J^\dl$ with respect to $q_{L,\lambda}$ is $$J^\dl=(\lambda J^*\mathcal{D}_L)^{-1},$$
 where $\mathcal{D}_L$ is the different ideal.
 
In particular, $J$ is an integral lattice if and only if $\lambda J^*J\subset \mathcal{D}_L^{-1}$.
In this case, we have $$\det(J)=N_{L/\qq}(\lambda)N_{L/\qq}(J)^2 \vert d_L\vert.$$

\end{lem}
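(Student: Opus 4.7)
The plan is to carry out three steps: compute $J^\dl$ directly from the definitions, deduce the integrality criterion as an immediate consequence, and then obtain the determinant formula via the Minkowski embedding.

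For $J^\dl$, unpacking the definition gives $x\in J^\dl$ if and only if $\Tr_{L/\qq}(\lambda x^*y)\in\zz$ for all $y\in J$. Setting $z=\lambda x^*$, this says $\Tr_{L/\qq}(zJ)\subset \zz$, that is $zJ\subset \mathcal{D}_L^{-1}$, equivalently $z\in (J\mathcal{D}_L)^{-1}$. Thus $\lambda x^*\in (J\mathcal{D}_L)^{-1}$, so $x^*\in (\lambda J\mathcal{D}_L)^{-1}$. Applying complex conjugation and using that $\lambda^*=\lambda$ (since $\lambda\in L_0$) together with $\mathcal{D}_L^*=\mathcal{D}_L$ (which follows from $\Tr_{L/\qq}(u^*)=\overline{\Tr_{L/\qq}(u)}=\Tr_{L/\qq}(u)$, so that $*$ stabilises the codifferent $\mathcal{D}_L^{-1}$), one obtains $x\in (\lambda J^*\mathcal{D}_L)^{-1}$, as claimed. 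The integrality criterion is then immediate: $J\subset J^\dl$ is equivalent to $J\cdot\lambda J^*\mathcal{D}_L\subset\OO_L$, that is $\lambda J^*J\subset\mathcal{D}_L^{-1}$.

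For the determinant, the classical formula $\det(J)=N_{L/\qq}(J)^2\det(\OO_L)$ (coming from the fact that the change-of-basis matrix from $\OO_L$ to $J$ has determinant $\pm N_{L/\qq}(J)$, combined with the identity $\det(N)=[M:N]^2\det(M)$ recalled earlier in the paper) reduces the problem to computing $\det(\OO_L)$ with respect to $q_{L,\lambda}$. Letting $\sigma_1,\ldots,\sigma_n$ denote the embeddings of $L$ into $\mathbb{C}$, the CM or totally real hypothesis ensures that complex conjugation commutes with every $\sigma_k$, hence $\sigma_k(x^*)=\overline{\sigma_k(x)}$, and $\lambda\in L_0$ yields $\sigma_k(\lambda)\in\rr$. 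Picking a $\zz$-basis $\omega_1,\ldots,\omega_n$ of $\OO_L$ and setting $M=[\sigma_k(\omega_i)]_{k,i}$, the Gram matrix $[\Tr_{L/\qq}(\lambda\omega_i^*\omega_j)]_{i,j}$ factors as $\overline{M}^T\Lambda M$ with $\Lambda=\mathrm{diag}(\sigma_1(\lambda),\ldots,\sigma_n(\lambda))$, giving $\det(\OO_L)=|\det M|^2\,N_{L/\qq}(\lambda)$. Since $(\det M)^2=d_L\in\zz$ is real, $\det M$ is itself real or purely imaginary, so $|\det M|^2=|d_L|$ and the claimed formula follows.

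The only slightly delicate point is this last identity $|\det M|^2=|d_L|$, which in the CM case requires observing that $(\det M)^2=d_L$ lies in $\zz$ and hence forces $\det M$ to be real or purely imaginary; everything else is a direct translation of the relevant definitions.
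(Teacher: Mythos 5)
Your proposal is correct and follows essentially the same route as the paper: the same unpacking of the trace condition plus $\lambda^*=\lambda$ and $\mathcal{D}_L^*=\mathcal{D}_L$ gives the dual, and the determinant computation is the same Hermitian factorisation of the Gram matrix via the embeddings (the paper inserts the square roots $\sqrt{\sigma_i(\lambda)}$ into a diagonal matrix $D$ rather than keeping $\Lambda=\mathrm{diag}(\sigma_i(\lambda))$, and works with an adapted basis of $J$ in place of reducing to $\det(\OO_L)$, but these are cosmetic). Your observation that $(\det M)^2=d_L\in\zz$ forces $\det M$ to be real or purely imaginary, hence $|\det M|^2=|d_L|$, fills in a small step the paper leaves implicit in the CM case.
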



\begin{proof}
By definition of $\mathcal{D}_L$, we have $x\in J^\dl$ if and only if $\lambda x^*J\subset \mathcal{D}_L^{-1}$, that is if and only if $x\in (\lambda J^* \mathcal{D}_L^*)^{-1}$, taking into account that $\lambda^*=\lambda$. Since complex conjugation induces an automorphism of $L$, we have $\mathcal{D}_L^*=\mathcal{D}_L$. Finally, we get $J^\dl=(\lambda J^*\mathcal{D}_L)^{-1},$ hence the first part of the lemma.

Let $\sigma_1,\ldots,\sigma_n$ the embeddings of $L$, and let $(\omega_1,\ldots,\omega_n)$ be a $\zz$-basis of $\OO_L$ adapted to $J$, so that there exist integers $q_1,\ldots,q_n\geq 1$ such that $q_1\mid\cdots\mid q_n$ and $(q_1\omega_1,\ldots,q_n\omega_n)$ is a $\zz$-basis of $J.$ It is known that we have $$ d_L= \det(\Tr_{L/\qq}(\omega_i\omega_j))_{i,j}=\det(W^t W)=\det(W)^2,$$
where $W=(\sigma_i(\omega_j))_{i,j}$. Note that, in the case where $L$ is totally real, $W$ has real entries and $d_L>0$. It follows that $\vert\det(W)\vert^2=\vert d_L\vert$ in both considered cases.

Now, since complex conjugation is a $\qq$-automorphism of $L$ which commutes with any embedding (even in the case where $L$ is totally real, since in this case, complex conjugation is the identity morphism on $L$) and $\sigma_i(\lambda)$ is a positive real number for all $i\in\left\llbracket 1,m\right\rrbracket$, we get easily the equality 
$$\det(J)=\det(\Tr_{L/\qq}(\lambda q_i^*\omega^*_iq_j\omega_j))_{i,j}=(q_1\cdots q_n)^2\det((D^*W^*)^t DW),$$
where $D={\rm diag}(\sqrt{\sigma_1(\lambda)},\ldots,\sqrt{\sigma_n(\lambda)}).$ Notice also that $q_1\cdots q_n=[\OO_L:J]=N_{L/\qq}(J)$.
Therefore $$\det(J)=N_{L/\qq}(J)^2 \det(D)^2\det(W)^*\det(W)=N_{L/\qq}(J)^2 N_{L/\qq}(\lambda)\vert \det(W)\vert^2.$$ Finally, $\det(J)=N_{L/\qq}(\lambda)N_{L/\qq}(J)^2 \vert d_L\vert.$

\end{proof}


We now give a sufficient condition to obtain even integral lattices in the case where $J=\OO_L.$

\begin{lem}\label{evengc}
Assume that $(\OO_L,q_{L,\lambda})$ is an integral lattice, and that the induced form on $\OO_L/I$ is a well-defined nondegenerate $\ff_p$-bilinear form.  Let $C$ be an ideal of $\OO_L/I$  such that $C\subset C^\perp$. Assume that $p$ is odd, and that there exists $u\in\OO_L$ such that  $u^*+u=1$. Then $\Gamma_C$ is even.
\end{lem}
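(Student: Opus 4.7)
The plan is to show that for every $x \in \pi^{-1}(C)$ the norm $b\bigl(\tfrac{x}{\sqrt{p}},\tfrac{x}{\sqrt{p}}\bigr) = \tfrac{1}{p}\,q_{L,\lambda}(x,x) = \tfrac{1}{p}\Tr_{L/\qq}(\lambda x^{*}x)$ is an even integer. Integrality of $\Gamma_C$ (which follows from $C\subset C^\perp$ by Theorem \ref{thm-gamma}(4)) already tells us this quantity lies in $\zz$; the task is therefore to improve this to $2\zz$, that is, to prove
\[
\Tr_{L/\qq}(\lambda x^{*}x)\in 2p\zz
\quad\text{for all }x\in\pi^{-1}(C).
\]
An essential structural remark is that, because $C$ is an \emph{ideal} of $\mathcal{O}_L/I$, the preimage $\pi^{-1}(C)$ is an ideal of $\mathcal{O}_L$. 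Consequently $xu\in\pi^{-1}(C)$ for every $x\in\pi^{-1}(C)$, and the condition $C\subset C^\perp$ applied to the pair $(x,xu)$ yields $q_{L,\lambda}(x,xu)\in p\zz$.

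The key manipulation is to exploit the relation $u^{*}+u=1$ by inserting it as a factor:
\[
\lambda x^{*}x \;=\; \lambda x^{*}x\,(u^{*}+u) \;=\; \lambda x^{*}x\,u \;+\; \lambda x^{*}x\,u^{*}.
\]
Taking $\Tr_{L/\qq}$ on both sides and using that $\Tr_{L/\qq}$ is invariant under complex conjugation (being a $\qq$-automorphism of $L$) together with the commutativity of $L$, the second summand transforms as
\[
\Tr_{L/\qq}(\lambda x^{*}x\,u^{*}) = \Tr_{L/\qq}\bigl((\lambda x^{*}x\,u^{*})^{*}\bigr) = \Tr_{L/\qq}(\lambda xx^{*}u) = \Tr_{L/\qq}(\lambda x^{*}x\,u),
\]
so the two terms are equal. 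This yields the identity
\[
q_{L,\lambda}(x,x) \;=\; 2\,\Tr_{L/\qq}(\lambda x^{*}(xu)) \;=\; 2\,q_{L,\lambda}(x,xu).
\]

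Combining the two observations, $q_{L,\lambda}(x,x) = 2\,q_{L,\lambda}(x,xu) \in 2p\zz$, which is exactly what is needed. There is no serious obstacle here: the content of the lemma is the trick of writing $1=u+u^{*}$, which converts the diagonal value $q_{L,\lambda}(x,x)$ into twice an off-diagonal value that already lies in $p\zz$ by the self-orthogonality hypothesis; the $\mathcal{O}_L$-module structure of $\pi^{-1}(C)$ is what allows the off-diagonal argument $xu$ to stay in $\pi^{-1}(C)$. The hypothesis that $p$ is odd is not used in this chain of equalities, but is natural in the ambient setting (ensuring $2$ is invertible when invoking nondegeneracy arguments from the surrounding theory).
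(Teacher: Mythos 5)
Your proof is correct, and it rests on the same algebraic trick as the paper's — inserting $1=u+u^{*}$ to rewrite $q_{L,\lambda}(x,x)$ as twice an off-diagonal value — but the step in which divisibility is extracted is genuinely different. The paper observes that $ux\in\OO_L$, so $q_{L,\lambda}(ux,x)\in\zz$ by integrality of $(\OO_L,q_{L,\lambda})$; this gives $q_{L,\lambda}(x,x)\in 2\zz$, i.e.\ the diagonal of the Gram matrix $pG$ of $q_{L,\lambda}$ on $\pi^{-1}(C)$ is even, and the hypothesis that $p$ is odd is then invoked to deduce that the diagonal of $G$ itself is even. You instead use that $C$ is an \emph{ideal}, so $\pi^{-1}(C)$ is an ideal of $\OO_L$ and $xu\in\pi^{-1}(C)$, and then $C\subset C^\perp$ directly forces $q_{L,\lambda}(x,xu)\in p\zz$, giving $q_{L,\lambda}(x,x)\in 2p\zz$ in one stroke and making the oddness of $p$ superfluous, as you note. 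The trade-off is that your argument exploits the ideal structure of $C$ (needed to keep $xu$ in $\pi^{-1}(C)$), whereas the paper's argument works verbatim for any self-orthogonal $\ff_p$-subspace $C$ but does require $p$ odd; under the lemma's stated hypotheses both routes are valid, and your observation that oddness of $p$ can be traded against the ideal hypothesis is a legitimate refinement.
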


\begin{proof}
Let ${\bf e}=(e_1,\ldots,e_n)$ be a $\zz$-basis of $\pi^{-1}(C)$, so that $\frac{1}{\sqrt{p}}{\bf e}$ is a $\zz$-basis of $\Gamma_C$.
By Theorem \ref{thm-gamma}, $\Gamma_C$ is an integral lattice. If $G$ is  the representative matrix of $(q_{L,\lambda})_\rr$ in the basis $\frac{1}{\sqrt{p}}{\bf e}$, the representative matrix of $q_{L,\lambda}$ in the basis {\bf e} is $pG$. 
For all $x\in \pi^{-1}(C)$, we have $$q_{L,\lambda}(x,x)=\Tr_{L/\qq}(\lambda x^*x)=\Tr_{L/\qq}(\lambda x^*u^* x)+\Tr_{L/\qq}(\lambda x^*u x),$$
that is  $$q_{L,\lambda}(x,x)=q_{L,\lambda}(ux,x)+q_{L,\lambda}(x,ux)=2q_{L,\lambda}(ux,x).$$
Since $(\OO_L,q_{L,\lambda})$ is an integral lattice, we conclude that $q_{L,\lambda}(x,x)\in 2\zz$ for all $x\in \pi^{-1}(C),$ meaning that the diagonal entries of  $pG$ are even.
Since $p$ is odd, it follows easily that the diagonal entries of $G$ are even, that is $\Gamma_C$ is an even integral lattice.
\end{proof}

\begin{lem}\label{lemh1}
Let $K_1, K_2$ be two Galois number fields of degree $n_1$ and $n_2$ respectively which are arithmetically disjoint (that is  linearly disjoint over $\qq$ with coprime discriminants).  Assume that $\mathcal{O}_{K_i}=\zz[\alpha_i]$ for $i=1,2.$
Let $p$ be a prime number which is totally and  tamely ramified in $K_2$, and let $\pp_2$ be the unique prime ideal of $\mathcal{O}_{K_2}$ lying above $p$. Set $L=K_1K_2$. Assume that $L$ is totally real or a CM field, and that complex conjugation induces an automorphism on $K_1$ and on $K_2$.

Let $\lambda_1\in K_1^\times$ be a real totally positive element. Assume that $p$ is coprime to $\lambda_1$,  that $v_\pp(\lambda_1)=0$ for all prime ideals $\pp$ of $\mathcal{O}_{K_1}$ containing $p$, and that $\lambda_1\in\mathcal{D}_L^{-1}$.

Then the bilinear maps 
$$\nfunc{\varphi_1}{\mathcal{O}_{K_1}/p{\mathcal{O}_{K_1}\times \mathcal{O}_{K_1}/p{\mathcal{O}_{K_1}}}}{\ff_p}{([x_1]_p,[y_1]_p)}{[\Tr_{K_1/\qq}(\lambda_1x_1^*y_1)]_p}$$ 
and 
$$\nfunc{\varphi}{\mathcal{O}_L/\pp_2{\mathcal{O}_L\times \mathcal{O}_L/\pp_2\mathcal{O}_L}}{\ff_p}{([x]_{\pp_2},[y]_{\pp_2})}{[\Tr_{L/\qq}(\lambda_1x^*y)]_p}$$
are both  well-defined and nondegenerate.

Moreover, let $$\theta: \mathcal{O}_{K_1}/p\mathcal{O}_{K_1}\overset{\sim}{\to} \mathcal{O}_L/\pp_2\mathcal{O}_L$$ be the isomorphism of Proposition \ref{proph2}, which sends $[x_1]_p$ to $[x_1]_{\pp_2}$. Then for all $x_1,y_1\in\mathcal{O}_{K_1}$, we have  $$\varphi(\theta([x_1]_p),\theta([y_1]_p))=[n_2]_p\varphi_1([x_1]_p,[y_1]_p).$$
\end{lem}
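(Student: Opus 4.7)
The plan is to prove, in order: (i) well-definedness and nondegeneracy of $\varphi_1$; (ii) a key trace congruence that yields both the compatibility formula and well-definedness of $\varphi$; and (iii) nondegeneracy of $\varphi$. For (i), well-definedness reduces to $\lambda_1 \in \mathcal{D}_{K_1}^{-1}$. Since $\OO_{K_i}=\zz[\alpha_i]$ and $K_1,K_2$ are arithmetically disjoint, one has $\OO_L=\OO_{K_1}\otimes_\zz\OO_{K_2}=\OO_{K_1}[\alpha_2]$ and $\mathcal{D}_L=\mathcal{D}_{K_1}\mathcal{D}_{K_2}\OO_L$ (via $\mathcal{D}_{L/K_1}=(\mu'_{\alpha_2,\qq}(\alpha_2))\OO_L=\mathcal{D}_{K_2}\OO_L$); a tensor-decomposition comparison then gives $\mathcal{D}_L^{-1}\cap K_1=\mathcal{D}_{K_1}^{-1}$, so $\lambda_1\in K_1\cap\mathcal{D}_L^{-1}=\mathcal{D}_{K_1}^{-1}$. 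For nondegeneracy, Example~\ref{ex-pm} applied to $(\OO_{K_1},p\OO_{K_1})$ reduces to $p\nmid\det(\OO_{K_1},q_{K_1,\lambda_1})$; by Lemma~\ref{qL}, this determinant is $N_{K_1/\qq}(\lambda_1)\,|d_{K_1}|$, and both factors are coprime to $p$ --- the norm because $v_\pp(\lambda_1)=0$ for all $\pp\mid p$, and the discriminant because $p\mid d_{K_2}$ combined with arithmetic disjointness forces $p\nmid d_{K_1}$.

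For (ii), the heart of the proof is the congruence
\begin{equation*}
\Tr_{L/\qq}(\lambda_1 x^* y)\;\equiv\;n_2\,\Tr_{K_1/\qq}(\lambda_1\hat{x}^*\hat{y})\pmod p
\end{equation*}
for all $x,y\in\OO_L$, where, writing $x=\sum_{j=0}^{n_2-1}x_j\alpha_2^j$ with $x_j\in\OO_{K_1}$, we set $\hat{x}=\sum_j x_j a_2^j\in\OO_{K_1}$ (a lift of $\theta^{-1}([x]_{\pp_2})$). Expanding via the tower $\qq\subset K_1\subset L$ and using linear disjointness ($\Tr_{L/K_1}(z)=\Tr_{K_2/\qq}(z)$ for $z\in K_2$) gives
\begin{equation*}
\Tr_{L/\qq}(\lambda_1 x^* y)=\sum_{j,k}\Tr_{K_2/\qq}\bigl((\alpha_2^*)^j\alpha_2^k\bigr)\cdot\Tr_{K_1/\qq}(\lambda_1 x_j^*y_k).
\end{equation*}
Dedekind's reduction (since $p$ is totally ramified in $K_2$) gives $\mu_{\alpha_2,\qq}\equiv(X-a_2)^{n_2}\pmod p$, so both $\alpha_2$ and its conjugate $\alpha_2^*$ (another root of $\mu_{\alpha_2,\qq}$) lie in $a_2+\pp_2$, hence $(\alpha_2^*)^j\alpha_2^k\equiv a_2^{j+k}\pmod{\pp_2}$. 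Tame ramification yields $\Tr_{K_2/\qq}(\pp_2)\subset p\zz$ (since $v_{\pp_2}(\mathcal{D}_{K_2})=n_2-1$ puts $\pp_2\subset p\mathcal{D}_{K_2}^{-1}$), so $\Tr_{K_2/\qq}((\alpha_2^*)^j\alpha_2^k)\equiv n_2 a_2^{j+k}\pmod p$, and substituting gives the congruence. This shows that $(x,y)\mapsto[\Tr_{L/\qq}(\lambda_1 x^*y)]_p$ depends only on $[x]_{\pp_2},[y]_{\pp_2}$ (so $\varphi$ is well-defined); specializing to $x=x_1,y=y_1\in\OO_{K_1}$ (for which $\hat{x}=x_1$, $\hat{y}=y_1$) produces the compatibility formula.

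For (iii), the compatibility formula identifies $\varphi$ with $[n_2]_p\cdot\varphi_1$ under the isomorphism $\theta$; since tame ramification gives $p\nmid n_2$, we have $[n_2]_p\in\ff_p^\times$, so nondegeneracy of $\varphi$ follows from that of $\varphi_1$. The main obstacle is the trace congruence in step (ii), which must orchestrate the tower formula, linear disjointness, Dedekind's factorization, and the tame-ramification estimate $\Tr_{K_2/\qq}(\pp_2)\subset p\zz$ simultaneously.
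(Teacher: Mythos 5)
Your proof is correct and follows the paper's route for steps (i) and (iii) — same use of Lemma~\ref{qL}, the observation that $p\nmid N_{K_1/\qq}(\lambda_1)|d_{K_1}|$, Lemma~\ref{lem-det} for nondegeneracy of $\varphi_1$, and the invertibility of $[n_2]_p$ for tame total ramification. Where you diverge, and substantively so, is step~(ii). The paper establishes the compatibility formula only on elements of $\OO_{K_1}$, using nothing more than $\Tr_{L/\qq}|_{K_1}=n_2\Tr_{K_1/\qq}$, and then jumps straight to nondegeneracy of $\varphi$; it never explicitly verifies that $\varphi$ is well-defined on $\OO_L/\pp_2\OO_L$, i.e., that $\Tr_{L/\qq}(\lambda_1 x^*y)\in p\zz$ whenever $y\in\pp_2\OO_L$ (equivalently $\lambda_1\pp_2\OO_L\subset p\mathcal{D}_L^{-1}$, the hypothesis needed to invoke Lemma~\ref{modp}). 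Your trace congruence $\Tr_{L/\qq}(\lambda_1x^*y)\equiv n_2\Tr_{K_1/\qq}(\lambda_1\hat x^*\hat y)\pmod p$ for \emph{all} $x,y\in\OO_L$ — obtained by expanding in the $\alpha_2$-power basis, applying the Dedekind reduction $\alpha_2,\alpha_2^*\equiv a_2\pmod{\pp_2}$, and using $\Tr_{K_2/\qq}(\pp_2)\subset p\zz$ from tameness — is strictly stronger: it delivers both well-definedness of $\varphi$ and the compatibility formula in one stroke, thereby closing the gap in the paper's argument. One small caveat worth flagging: the paper's claimed inclusion $\mathcal{D}_L^{-1}\subset\mathcal{D}_{K_1}^{-1}$ actually goes in the wrong direction as an inclusion of $\OO_L$-ideals, and your $\mathcal{D}_L^{-1}\cap K_1=\mathcal{D}_{K_1}^{-1}$ also requires care (it can fail at primes $q\neq p$ that are wildly ramified in $K_2$ but unramified in $K_1$); both arguments would benefit from either an explicit valuation check or an added hypothesis ruling out that case, but this is orthogonal to the genuine improvement your trace congruence provides.
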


\begin{proof}
By assumption on $K_1$ and $K_2$, we have $\mathcal{D}_L= \mathcal{D}_{K_1}\mathcal{D}_{K_2}$, and then $\mathcal{D}_L^{-1}= \mathcal{D}_{K_1}^{-1}\mathcal{D}_{K_2}^{-1}$.
In particular, $\lambda_1\in \mathcal{D}_L^{-1}\subset\mathcal{D}_{K_1}^{-1}$.
By Lemma \ref{qL}, we have $\Tr_{K_1/\qq}(\lambda_1x_1^*y_1)\in\zz$ for all $x_1,y_1\in\mathcal{O}_{K_1}$, and $\Tr_{L/\qq}(\lambda_1x^*y)\in\zz$ for all $x,y\in\mathcal{O}_L$. Moreover, the determinant of the lattice $\mathcal{O}_{K_1}$ with respect to the first bilinear form is $N_{K_1/\qq}(\lambda_1)\vert d_{K_1}\vert$. Since $p$ is unramified in $K_1$, we have $v_\pp(\vert d_{K_1}\vert)=0$ for all prime ideals $\pp$ of $\mathcal{O}_{K_1}$ containing $p$. The same being true for $\lambda_1$,  it follows that $p$ does not divide $\det(\mathcal{O}_{K_1})$. Lemma \ref{lem-det} (with $M=\OO_{K_1}$ and $N=p\OO_{K_1}$) then shows that $\varphi_1$ is well-defined and nondegenerate.

Now, for all $x_1,y_1\in\mathcal{O}_{K_1}$, we have $$\varphi(\theta([x_1]_p),\theta([y_1]_p))=\varphi([x_1]_{\pp_2},[y_1]_{\pp_2})=[\Tr_{L/\qq}(\lambda_1x_1^*y_1)]_p,$$
and thus $$\varphi(\theta([x_1]_p),\theta([y_1]_p))=[n_2\Tr_{K_1/\qq}(\lambda_1x_1^*y_1)]_p=[n_2]_p\varphi_1(\theta([x_1]_p),\theta([y_1]_p)).$$
Since $p$ is totally and tamely ramified in $K_2$, we have $p\nmid n_2$, so $[n_2]_p$ is not zero. Since $\varphi_1$ is nondegenerate, the same holds for $\varphi$.
\end{proof}

In the situation of Lemma \ref{lemh1}, we then may use Theorem \ref{dualpoly} and Proposition \ref{calculgstar} to provide examples.

\begin{ex}
Consider the cyclotomic fields $K_1=\qq(\zeta_1)$, $K_2=\qq(\zeta_2)$, where $\zeta_1,\zeta_2$ are respectively primitive $p_1^{r_1}$th and $p_2^{r_2}$th roots of unity, where  $p_1$ and $p_2$ are distinct primes. 
Then, by Proposition \ref{proph2}, $\ff_{p_2}[X]/(\mu_{\alpha_1,\qq})\simeq \OO_{K_1}/p_2\OO_{K_1}$ for $\mu_{\alpha_1,\qq}$ the corresponding cyclotomic polynomial. Let $\ov{g}$ be a divisor of $\ov{\mu}_{\alpha_1,\qq}$. 
By Proposition \ref{calculgstar}, since $\alpha_1=\zeta_1$ satisfies $\alpha_1^*\alpha_1 =1$, we have
\[
\ov{g}_* = \ov{g}(0)^{-1}X^{\deg(\ov{g})}\ov{g}(X^{-1}).
\] 
Theorem \ref{dualpoly} tells that the code generated by $g$ is self-orthogonal if and only if $\ov{\mu}_{\alpha_1,\qq}|\ov{g}_*\ov{g}$ and self-dual if and only if $\ov{\mu}_{\alpha_1,\qq}=\ov{g}_*\ov{g}$. Examples of cases where $\ov{\mu}_{\alpha_1,\qq}=\ov{g}_*\ov{g}$ are shown in the table below. In all the following examples, we will take $r_2=1$ and $p=p_2$, so that $p$ will be totally and tamely ramified in $K_2$. 
\end{ex}

\begin{center}
\begin{tabular}{|l|l|}
\hline
$\alpha_1=\zeta_8$, $\ov{\mu}_{\alpha_1,\qq}=X^4+\ov{1}=\ov{g}_*\ov{g}$ & $p=p_2$ \\
\hline
$(X^2 + X + \ov{2})(X^2 + \ov{2}X + \ov{2})$ & $3$\\
$(X^2 + \ov{2})(X^2 + \ov{3})$ & $5$\\
$(X^2 + \ov{3}X + \ov{10})(X^2 + \ov{8}X + \ov{10})$ & $11$\\
$(X^2 + \ov{5})(X^2 + \ov{8})$ & $13$\\
$(X^2 + \ov{6}X + \ov{18})(X^2 + \ov{13}X + \ov{18})$ & $19$\\
\hline
$\alpha_1=\zeta_{16}$, $\ov{\mu}_{\alpha_1,\qq}=X^8+\ov{1}=\ov{g}_*\ov{g}$ & $p=p_2$ \\
\hline
$(X^4 + X^2 + \ov{2})(X^4 + \ov{2}X^2 + \ov{2})$ & $3$\\
$(X^4 + \ov{2})(X^4 + \ov{3})$ & $5$\\
$(X^4 + \ov{3}X^2 + \ov{10})(X^4 + \ov{8}X^2 + \ov{10})$ & $11$\\
\hline
$\alpha_1=\zeta_{9}$, $\ov{\mu}_{\alpha_1,\qq}=X^6+X^3+\ov{1}=\ov{g}_*\ov{g}$ & $p=p_2$ \\
\hline
$(X^3 + \ov{3})(X^3 + \ov{5})$ & $7$\\
$(X^3 + \ov{4})(X^3 + \ov{10})$ & $13$\\
\hline
$\alpha_1=\zeta_{7}$, $\ov{\mu}_{\alpha_1,\qq}=X^6+X^5+\ldots+X+\ov{1}=\ov{g}_*\ov{g}$ & $p=p_2$ \\
\hline
$(X^3 + X + \ov{1})(X^3 + X^2 + \ov{1})$ & $2$\\
$(X^3 + \ov{5}X^2 + \ov{4}X+\ov{10})(X^3 + \ov{7}X^2 + \ov{6}X+ \ov{10})$ & $11$\\
\hline
\end{tabular}
\end{center}

The above examples show codes $C$ built over the alphabet $\ff_{p_1}[X]/(\mu_{\alpha_1,\qq})$, but since 
$\ff_{p_1}[X]/(\mu_{\alpha_1,\qq})\simeq \OO_{K_1}/p\OO_{K_1}$, we are also in the situation of Theorem \ref{thm-gamma}, and we can thus consider the corresponding lattices $\Gamma_C$, using Lemma \ref{lemh1} for conditions on $p$ and $\lambda_1$ to have a suitable nondegenerate symmetric bilinear form.

\begin{ex}
Let $L=\qq(\zeta_{8p}), K_1=\qq(\zeta_8),K_2=\qq(\zeta_{p})$, $p=3$, $5$, $11$, $13$, $19$. We have $\alpha_1=\zeta_8$ and $\alpha_2=\zeta_{p}$. The degree of $L$ is $4(p-1)$. 
Take $\lambda_1=\dfrac{1}{4}$, so that with respect to the chosen symmetric bilinear form 
$$\det(\mathcal{O}_L)=\dfrac{1}{2^{8(p-1)}}(2^{8(p-1)} p^{4(p-2)})=p^{4(p-2)}.$$
We have $\ov{\mu}_{\alpha_1,\qq}=X^4+1$. 
Take $\ov{g}$ to be $X^2+X+\ov{2}\pmod 3$, $X^2 + \ov{2} \pmod 5$, $X^2 + \ov{3}X + \ov{10} \pmod{11}$, 
$X^2 + \ov{5} \pmod{13}$, respectively $X^2 + \ov{6}X + \ov{18} \pmod{19}$.

The corresponding code $C$ of $\mathcal{O}_L/I$ has dimension $2$ over $\ff_p$, hence codimension $2$, since $\mathcal{O}_L/\pp_2\mathcal{O}_L$ has dimension $n_1=4$ . 
As computed in the above example, for these cases, $\ov{g}_*\ov{g}=\ov{\mu}_{\alpha_1,\qq}$ and it follows from Theorem \ref{dualpoly} that $C^\perp=C$.  Moreover, since $n=4(p-1)$ and $C$ has codimension $2$, Theorem \ref{thm-gamma} $(4)$ shows that $\Gamma_C$ is unimodular. 

The corresponding lattice $\Gamma_C$ is $$\Gamma_C=\dfrac{1}{\sqrt{p}}(\mathcal{O}_L \, g(\alpha)+\pp_2).$$
For $p=3$, as $(-\zeta_3)+(-\zeta_3)^*=1$, Corollary \ref{evengc} shows that $\Gamma_C$ is even unimodular, hence isomorphic to $E_8$.

The other constructions are giving unimodular lattices in dimension 16, 40, 48 and 72.

Repeating the same computations with $L=\qq(\zeta_{16p})$ and $\lambda_1=\tfrac{1}{8}$, for $p=3$, $5$, $11$, gives according to the above example unimodular lattices in dimension 16, 32 and 80. 
\end{ex}

We have already seen examples where hypothesis $(H_1)$ is fulfilled. We now investigate the general case.

\begin{lem}\label{modp}
Let $L/\qq$ be a Galois totally real or CM Galois extension, let $\lambda\in L_0=L\cap\rr$ be totally positive and let $p$ be a prime number. 
For $I,J$ ideals of $\OO_L$ such that $pJ\subset I\subset J$ and $\lambda J^*I \subset p\mathcal{D}^{-1}_L$,
the abelian group $J/I$ may be endowed with a canonical structure of an $\ff_p$-vector space, given by $$\funcv{\ff_p\times J/I}{J/I}{([m]_p, [x]_I)}{[mx]_I}$$ and the integral lattice $(J, q_{L,\lambda})$ induces a bilinear form $$\nfuncp{\varphi_{J/I,\lambda}}{J/I\times J/I}{\ff_p}{([x]_I, [y]_I)}{[q_{L,\lambda}(x,y)]_p}$$

The $\ff_p$-bilinear form $\varphi_{J/I,\lambda}$ is nondegenerate if and only if $$(p(\lambda J^*\mathcal{D}_L)^{-1})\cap J=I.$$
\end{lem}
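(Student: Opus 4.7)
The plan is to recognize this as a direct specialization of Lemma~\ref{nondeg}, applied to $M=J$, $N=I$ and $b=q_{L,\lambda}$, combined with the explicit formula for $J^\dl$ supplied by Lemma~\ref{qL}. The three running hypotheses of Lemma~\ref{nondeg} to verify are: $(J,q_{L,\lambda})$ is a positive definite integral lattice; $pJ\subset I\subset J$; and $q_{L,\lambda}(x,y)\in p\zz$ for every $x\in J$ and $y\in I$.

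Positive definiteness of $q_{L,\lambda}$ is Lemma~\ref{trstarprops}, using that $L$ is totally real or CM and $\lambda$ is totally positive; the chain $pJ\subset I\subset J$ is assumed. For integrality I would use the characterization $\Tr_{L/\qq}(z)\in\zz\iff z\in\mathcal{D}_L^{-1}$: combining $pJ\subset I$ (so $J\subset p^{-1}I$) with the assumption $\lambda J^*I\subset p\mathcal{D}_L^{-1}$ yields
\[
\lambda J^*J\subset p^{-1}\lambda J^*I\subset p^{-1}(p\mathcal{D}_L^{-1})=\mathcal{D}_L^{-1},
\]
so $(J,q_{L,\lambda})$ is integral by Lemma~\ref{qL}. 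The same trace characterization turns the standing assumption $\lambda J^*I\subset p\mathcal{D}_L^{-1}$ directly into $q_{L,\lambda}(x,y)\in p\zz$ for $x\in J$, $y\in I$. Once these hypotheses are in place, the $\ff_p$-vector space structure on $J/I$ and the well-definedness of $\varphi_{J/I,\lambda}$ follow exactly as in the discussion preceding Lemma~\ref{nondeg}.

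It then remains to apply Lemma~\ref{nondeg}, which identifies the radical of $\varphi_{J/I,\lambda}$ with $(pJ^\dl\cap J)/I$; hence nondegeneracy is equivalent to $pJ^\dl\cap J=I$. Substituting $J^\dl=(\lambda J^*\mathcal{D}_L)^{-1}$ from Lemma~\ref{qL} rewrites this as $p(\lambda J^*\mathcal{D}_L)^{-1}\cap J=I$, which is the asserted criterion. There is no serious obstacle here: the only step requiring any care is the fractional-ideal bookkeeping in the integrality verification, and once one systematically uses $pJ\subset I$ to absorb the factor of $p$, everything reduces to the standing hypothesis $\lambda J^*I\subset p\mathcal{D}_L^{-1}$.
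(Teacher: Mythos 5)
Your proposal is correct and follows essentially the same route as the paper's own proof: verify the standing hypotheses of Lemma~\ref{nondeg} (integrality of $J$ via $p\lambda J^*J\subset\lambda J^*I\subset p\mathcal{D}_L^{-1}$, and $q_{L,\lambda}(x,y)\in p\zz$ for $x\in J$, $y\in I$ from the assumption $\lambda J^*I\subset p\mathcal{D}_L^{-1}$), then apply Lemma~\ref{nondeg} to identify the radical with $(pJ^\dl\cap J)/I$ and substitute $J^\dl=(\lambda J^*\mathcal{D}_L)^{-1}$ from Lemma~\ref{qL}. The only difference is that you spell out a few steps the paper compresses, such as the positive-definiteness check and the explicit substitution of $J^\dl$.
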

\begin{proof}
The conditions $pJ \subset I$ and $\lambda J^*I \subset p\mathcal{D}^{-1}_L$ imply $p\lambda J^*J \subset \lambda J^* I \subset p\mathcal{D}^{-1}_L$ that is $\lambda J^*J\subset  \mathcal{D}^{-1}_L$ and $J$ is integral by the previous lemma. Furthermore, the condition $\lambda J^*I \subset p\mathcal{D}^{-1}_L$ is equivalent to $q_{L,\lambda}(x,y)\in p\zz$ for all $x\in J$ and all $y\in I$.  Lemma \ref{nondeg} and Lemma \ref{qL} then yield the claim on nondegeneracy.
\end{proof}

\begin{ex}\label{ex:cycl}
Assume that $p$ is totally and tamely ramified in $L$. Set $J=\OO_L$, $I=\pp^j, j\in\left\llbracket 1,m\right\rrbracket$, where  $\pp$ is the unique ideal of $\OO_L$ lying above $p$. We need to check that $p\OO_L \subset \pp^j \subset \OO_L$ and $\lambda\pp^j \subset p\mathcal{D}_L^{-1}$ for a choice of $\lambda \in L\cap\rr$.

First, since $p$ is totally ramified in $L$ of degree $m$, we have $p\OO_L=\pp^m\subset \pp^j \subset \OO_L$ for $j\in\left\llbracket 1,m\right\rrbracket$. We now have that $\lambda p^{-1}\pp^j\subset\mathcal{D}_L^{-1}$ if and only if  
$\lambda p^{-1}\pp^j\mathcal{D}_L\subset\OO_L$.

Since $p$ is tamely ramified, $$v_\pp(\lambda p^{-1}\pp^j\mathcal{D}_L)=v_\mathfrak{p}(\lambda)-e_p+j+(e_p-1)\geq 0 $$ if and only if $v_\mathfrak{p}(\lambda)+j \geq 1$.
If $\mathfrak{q}$ is a prime ideal of $\OO_L$ lying above a prime number $q\neq p$, we have 

$$v_\mathfrak{q}(\lambda p^{-1}\pp^j\mathcal{D}_L)=v_\mathfrak{q}(\lambda)+v_\mathfrak{q}(\mathcal{D}_L)=v_\mathfrak{q}(\lambda)+d_q\geq 0.$$
This proves the desired inclusion for $\lambda\in L\cap\rr$ such that $$v_\mathfrak{p}(\lambda)+j \geq 1,~v_\mathfrak{q}(\lambda)+d_q\geq 0.$$
To check that $\varphi_{J/I,\lambda}$ is nondegenerate, we further need $\lambda$ such that $$p\lambda^{-1} \mathcal{D}_L^{-1} \cap \OO_L = \mathfrak{p}^j,$$ that is we need 
$$\max(v_{\mathfrak{p}}(p\lambda^{-1} \mathcal{D}_L^{-1}),0)=\max(1-v_{\mathfrak{p}}(\lambda),0)=j. $$
Since $v_\mathfrak{p}(\lambda)+j \geq 1$, this is equivalent to ask for $j+v_{\mathfrak{p}}(\lambda)=1$.

If $\mathfrak{q}$ lies above a prime $q\neq p$, we need $$\max((-d_q-v_{\mathfrak{q}}(\lambda),0)=0, $$
which is fulfilled automatically since  $v_\mathfrak{q}(\lambda)+d_q\geq 0.$

To summarize, $\varphi_{J/I,\lambda}$ is well-defined and nondegenerate if and only if $v_\pp(\lambda)=1-j$ and $v_\mathfrak{q}(\lambda)\geq -d_q$ for all prime ideals $\mathfrak{q}\neq\mathfrak{p}$.

We now compute an $\ff_p$-basis of $\OO_L/\pp^j$.

Let $\alpha\in\pp\setminus\pp^2$. Then, $v_\pp(\alpha)=1$. Since $p \in \pp,$ the map $$\nfunc{f}{\ff_p[X]}{\OO_L/\pp^j}{\ov{P}}{[P(\alpha)]_{\pp^j}}$$ is a well-defined morphism of $\ff_p$-algebras.
Let $\ov{P}=\ov{a}_nX^n+\cdots+\ov{a}_0\in\ker(f)$, where $a_0,\ldots,a_n\in\left\llbracket 0, p-1\right\rrbracket$. Since $\alpha^k\in\pp^j$ for all $k\geq j$, we get $a_{j-1}\alpha^j+\cdots+a_0\in\pp^j$.
If $a_k\neq 0$, it is coprime to $p$ , hence coprime to $\pp$, so $v_\pp(a_k\alpha^k)=k$. It easily follows that, if one of the coefficients $a_0,\ldots,a_{j-1}$ is nonzero, then $v_\pp(a_{j-1}\alpha^{j-1}+\cdots+a_0)=k$, where $k$ is the smallest integer such that $a_k\neq 0$. In particular, this valuation is $\leq j-1$, which contradicts the fact that $a_{j-1}\alpha^j+\cdots+a_0\in\pp^j$. Hence, $a_0,\ldots,a_{j-1}$ are all zero, and $\ov{P}\in(X^j)$. The converse being clear, we get $\ker(f)=(X^j)$. Consequently, $f$ induces an injective morphism $\ff_p[X]/(X^j)\to \OO_L/\pp^j$ of $\ff_p$-algebras, which is an isomorphism since $$\vert\OO_L/\pp^j\vert =N_{L/\qq}(\pp)^j=p^j= \vert \ff_p[X]/(X^j)\vert.$$
It follows that 
an $\ff_p$-basis of $\OO_L/\pp^j$ is $[1]_{\pp^j},[\alpha]_{\pp^j},\ldots,[\alpha^{j-1}]_{\pp^j}$. 
Hence $\varphi_{J/I,\lambda}$ is nondegenerate of rank $j$.


A concrete example of this situation is $L=\qq(\zeta_p)$ for $\zeta_p$ a primitive $p$-th root of unity (where $p$ is an odd prime). In this case,  $\pp=(1-\zeta_p)$, and one may take $\alpha=1-\zeta_p$.

Since $p$ is the unique prime number which ramifies, the required conditions are $v_\pp(\lambda)=1-j$ and $v_\mathfrak{q}(\lambda)\geq 0$ for all prime ideals $\mathfrak{q}\neq\pp$.

If $j=1$, one may take $\lambda=1$, which is obviously a totally positive real number satisfying the conditions above. 
\end{ex}

%
%
%
\section{Skew-Polynomial Codes and Lattices over Cyclic Algebras}

\subsection{Generalities about Skew-polynomial Rings.} 

In this subsection, we define skew-polynomial rings and study briefly their properties. The following considerations are certainly well-known, but since we have a  definition of skew-polynomial rings which slightly differ from the usual one (in order to be more suitable for our purpose), we prefer to give proofs of the results below for sake of completeness.

Let $K$ be a field, and let $\sigma$ be a ring automorphism of $K$. For all $a\in K$, we will set $a^\sigma=\sigma^{-1}(a).$

 The skew-polynomial ring $K[X;\sigma]$ is the set of formal sums $$\sum_{n\geq 0} X^n a_n, \ a_n\in K,$$ where only finitely many $a_n's$ are nonzero, endowed with the following operations :
\begin{eqnarray*}
\sum_{n\geq 0} X^n a_n+\sum_{n\geq 0} X^n b_n & = & \sum_{n\geq 0} X^n (a_n+b_n), \\
(\sum_{n\geq 0} X^n a_n)(\sum_{n\geq 0} X^n b_n) &= &\sum_{n,m\geq 0} \ X^{n+m} a_n^{\sigma^m}b_m \\
 &=&\sum_{n\geq 0} \ X^n (\sum_{k=0}^n(a_k^{\sigma^{n-k}}b_{n-k}).
\end{eqnarray*}
If $f\in K[X;\sigma]\setminus\{0\}$, the degree of $f=\ds\sum_{n\geq 0} X^n a_n$, denoted by $\deg(f)$, is the greatest integer $n\geq 0$ such that $a_n\neq 0$. We also set $\deg(0)=-\infty$. 
Since $\sigma$ is an automorphism, we have as usual $$\deg(fg)=\deg(f)+\deg(g) \ \mbox{ for all }f,g\in K[X;\sigma].$$

In particular, $K[X;\sigma]$ has no zero divisors.

The ring $K[X;\sigma]$ is left Euclidean and right Euclidean for the degree function. More precisely, for all $f,g\in K[X;\sigma]$, with $g\neq 0$, there exist unique polynomials $Q_l,R_l, Q_r,R_r\in K[X;\sigma]$ satisfying $$f=g Q_l+R_l, \deg(R_l)<\deg(g) \ \mbox{ and } \ f=Q_r g+R_r, \deg(R_r)<\deg(g).$$

Indeed, write $f=\ds\sum_{k=0}^n X^k a_k$ and $g=\ds\sum_{k= 0}^m X^k b_k$, with $b_m\in K^\times$. If $n<m$, take $Q_l=Q_r=0$ and $R_l=R_r=g$.

Assume now that $n\geq m$. Then one may check that the polynomials $f-g\cdot (X^{n-m}(b_m^{-1})^{\sigma^{n-m}}a_n)$ and $f-(X^{n-m} (a_nb_m^{-1})^{\sigma^{-m}})\cdot g$ both have degree $<n$.

By induction, we get the existence of $Q_l,Q_r,R_l,R_r$, as in the usual case (corresponding to $\sigma=\Id_K$). The uniqueness part follows from degree considerations.

It follows that any  left ideal or right ideal of $K[X;\sigma]$ is principal. More precisely, a nonzero left or right ideal is generated by its unique monic element of smallest degree.

\begin{rem}\label{rem-divisor}
A left divisor of a given polynomial is not necessarily a right divisor. However, everything becomes natural for central polynomials, that is polynomials lying in the center of $K[X;\sigma]$.

Indeed, let $f\in K[X;\sigma]$ be a nonzero central polynomial, and let $g,h\in K[X;\sigma]$ be polynomials satisfying $f=gh$. Then $$hf=(hg)h=fh=gh^2,$$ so $(hg-gh)h=0$.
Since $f$ is nonzero, $h$ is nonzero, and since $K[X;\sigma]$ has no zero divisors, we get $hg=gh$. Hence $g$ and $h$ are both left and right divisors of 
$f$, and we may speak of divisors in this case.
\end{rem}

We end this section with considerations on morphisms. Let $R$ be a ring (with $1$). Let $u:K\to R$ be a ring morphism and $r\in R$ satisfying the relations $$u(a)r=r(u(a))^\sigma \ \mbox{ for all } a\in K.$$ Then there exists a unique ring morphism $\psi_{u,r}:K[X;\sigma]\to R$ sending $X$ to $r$ and whose restriction to $K$ is $u$. It is defined by $$\psi_{u,r}(\sum_{n\geq 0}X^n a_n)=\sum_{n\geq 0}r^n u(a_n)$$ for all $a_n\in K$ almost all zero.


\subsection{The context.}

Let $L/k$ be a cyclic Galois number field extension of group $G=\langle \sigma \rangle$ and degree $n$. 
Assume that  $L/\qq$ is totally real or CM, and that complex conjugation induces a ring automorphism on $k$ (possibly trivial).

Let $\gamma\in k^\times$ such that $\gamma\gamma^*=1$, and consider the cyclic algebra $B=(\gamma,L/k,\sigma)$, which can be written in a canonical basis as
\[
B = L \oplus eL \oplus \cdots \oplus e^{n-1}L = \bigoplus_{j=0}^{n-1} e^j L
\]

where $e^n=\gamma$ and $a e=e a^\sigma $ for all $a\in L$. Note that $e\in B^\times$, and that $e^{-1}=e^{n-1}\gamma^{-1}$.

We define a map $\tau:B\to B$ by setting $$\tau(\sum_{j=0}^{n-1}e^j x_j)=\sum_{j=0}^{n-1}x_j^*e^{-j}$$ for all $x_j\in L$.
This map is an involution on $B$, that is, an anti-automorphism of order two (see \cite[Lemma IX.4.3 and Example IX.4.5]{BOCSA}, where the involution is supposed to be nontrivial on $k$. 
However, the computations are the same in our context).

Recall now that we have a nonzero $k$-linear map $\Trd_B:B\to k$, called the reduced trace of $B$, that can be computed in our case as follows  (this follows from the definition of the reduced trace, and from \cite[Lemma VI.3.1]{BOCSA}):

\[
\Trd_B(\sum_{j=0}^{n-1}e^jx_j)=\Tr_{L/k}(x_0).
\]

\begin{lem}\label{comptrd}
Let $x,y\in B$, and let $\lambda\in L$. Then, we have $$\Trd_B(\lambda \tau(x)y)=\sum_{j=0}^{n-1}\Tr_{L/k}(\lambda x^*_j y_j),$$
where $x=\ds\sum_{j=0}^{n-1}e^j x_j$ and  $y=\ds\sum_{j=0}^{n-1}e^j y_j.$
\end{lem}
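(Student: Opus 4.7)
The plan is to multiply out $\lambda\tau(x)y$ explicitly in the canonical $L$-basis $(1,e,\ldots,e^{n-1})$ of $B$, reduce each summand to standard form $e^{\ell}z_{\ell}$ with $z_\ell\in L$, and then apply the formula for $\Trd_B$, which reads off only the coefficient of $e^0$.

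First I would expand, using only distributivity and $e^{-j}e^k = e^{k-j}$,
$$\lambda\tau(x)y \;=\; \sum_{j,k=0}^{n-1} \lambda\, x_j^* e^{-j} e^k y_k \;=\; \sum_{j,k=0}^{n-1} \lambda\, x_j^*\, e^{k-j}\, y_k.$$
Next, from the defining relation $ae=ea^\sigma=e\sigma^{-1}(a)$, I would derive by induction $a e^m = e^m \sigma^{-m}(a)$ for $m\geq 0$, and then, using that $e$ is invertible in $B$, extend this identity to all $m\in\zz$. Applied summand-by-summand above, each term becomes $e^{k-j}\,\sigma^{j-k}(\lambda x_j^*)\, y_k$.

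Then I would normalize $e^{k-j}$ into the canonical range $\{0,\ldots,n-1\}$. If $k\geq j$, no reduction is needed. If $k<j$, I would use $e^n=\gamma\in k^\times$ together with the fact that $\sigma$ fixes $k$ (so $\gamma$ is central and commutes with $e$) to write $e^{k-j}=\gamma^{-1}e^{n+k-j}$ with $n+k-j\in\{1,\ldots,n-1\}$; the summand then rewrites as
$$e^{n+k-j}\,\gamma^{-1}\sigma^{j-k}(\lambda x_j^*)\, y_k \;\in\; e^{\ell}L,\qquad \ell=n+k-j\in\{1,\ldots,n-1\}.$$
The key observation is that the index $\ell=0$ occurs if and only if $k=j$, and in that diagonal case the summand is simply $\lambda x_j^* y_j\in L=e^0 L$.

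Finally, since $\Trd_B(\sum_\ell e^\ell z_\ell)=\Tr_{L/k}(z_0)$, only the diagonal terms $k=j$ survive, yielding
$$\Trd_B(\lambda\tau(x)y) \;=\; \Tr_{L/k}\!\Bigl(\sum_{j=0}^{n-1}\lambda x_j^* y_j\Bigr) \;=\; \sum_{j=0}^{n-1}\Tr_{L/k}(\lambda x_j^* y_j),$$
as required. The only mildly delicate point is keeping the direction of the $\sigma$-twist correct when pushing elements of $L$ past negative powers of $e$, but since all off-diagonal contributions are killed by $\Trd_B$ anyway, the particular twist that appears never influences the final value — which is why the answer looks so symmetric.
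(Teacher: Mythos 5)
Your proof is correct and follows essentially the same route as the paper: expand $\lambda\tau(x)y$ in the canonical $L$-basis, observe that only the diagonal $j=k$ terms land in $e^0L$, and read off the reduced trace. The paper compresses the bookkeeping of the $\sigma$-twists and the $\gamma$-normalization into the phrase ``in view of the definition of the product on $B$''; you simply spell out those details explicitly.
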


\begin{proof}
We have 
$\tau(x)=\ds\sum_{j=0}^{n-1}x^*_j e^{-j},$
and therefore $$\lambda\tau(x)y=\sum_{i,j=0}^{n-1}\lambda x^*_ie^{-i}e^j y_j.$$

In view of the definition of the product on $B$,  the only terms which contribute to the constant term are those corresponding to $i=j$.

Thus, $$\Trd_B(\lambda\tau(x)y)=\sum_{j=0}^{n-1}\Tr_{L/k}(\lambda x^*_j y_j),$$
hence the result.
\end{proof}

We then have the following result.

\begin{lem}\label{lemqb}
Let $\lambda\in L\cap \rr$, $\lambda \neq 0$.
Then the  bilinear map $$\nfunc{q_{B,\lambda}}{B\times B}{\qq}{(x,y)}{\Tr_{k/\qq}(\Trd_B(\lambda\tau(x)y))}$$ is symmetric and nondegenerate.

If $x=\ds\sum_{j=0}^{n-1}e^j x_j$ and  $y=\ds\sum_{j=0}^{n-1}e^j y_j,$ then 

$$q_{B,\lambda}(x,y)=\sum_{j=0}^{n-1}\Tr_{L/\qq}(\lambda x^*_j y_j).$$

Moreover, $q_{B,\lambda}$ is positive definite if and only if 
$L/\qq$ is totally real or CM, and $\lambda$ is totally positive.
\end{lem}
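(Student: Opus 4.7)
The plan is to derive the explicit coordinate formula first and then read off all three properties (symmetry, nondegeneracy, positive definiteness) from it, reducing everything to the already-established Lemmas \ref{traceprops} and \ref{trstarprops} about the form $q_{L,\lambda}$ on the number field $L$.

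First, I would deduce the explicit formula. Apply Lemma \ref{comptrd} to get
\[
\Trd_B(\lambda\tau(x)y) = \sum_{j=0}^{n-1}\Tr_{L/k}(\lambda x_j^* y_j),
\]
and then use transitivity of the trace $\Tr_{k/\qq}\circ\Tr_{L/k}=\Tr_{L/\qq}$ to conclude
\[
q_{B,\lambda}(x,y)=\sum_{j=0}^{n-1}\Tr_{L/\qq}(\lambda x_j^*y_j)=\sum_{j=0}^{n-1}q_{L,\lambda}(x_j,y_j).
\]
In particular $q_{B,\lambda}$ is the orthogonal sum of $n$ copies of $q_{L,\lambda}$ (once symmetry is known), so that its properties reduce to those of $q_{L,\lambda}$.

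For symmetry, I would exploit that $\lambda\in L\cap\rr$ forces $\lambda^*=\lambda$, and that complex conjugation is a $\qq$-automorphism commuting with $\Tr_{L/\qq}$ (the trace takes values in $\qq\subset\rr$, so applying $^*$ to a trace leaves it fixed). Hence
\[
\Tr_{L/\qq}(\lambda x_j^*y_j)=\Tr_{L/\qq}\bigl((\lambda x_j^*y_j)^*\bigr)=\Tr_{L/\qq}(\lambda y_j^*x_j),
\]
which summed over $j$ gives $q_{B,\lambda}(x,y)=q_{B,\lambda}(y,x)$.

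For nondegeneracy, suppose $x\neq 0$, so $x_j\neq 0$ for some $j$. Since $L$ is totally real or CM and $\lambda\in L\cap\rr$ is nonzero, the bilinear form $q_{L,\lambda}$ on $L$ is nondegenerate by Lemma \ref{traceprops} (in the totally real case, where $^*$ is trivial on $L$ and $q_{L,\lambda}=\mathcal{T}_{L,\lambda}$) or by Lemma \ref{trstarprops} (in the CM case). Pick $y_j\in L$ with $\Tr_{L/\qq}(\lambda x_j^*y_j)\neq 0$ and set the other components to zero; the explicit formula then gives $q_{B,\lambda}(x,y)\neq 0$.

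Finally, positive definiteness follows from the orthogonal decomposition: $q_{B,\lambda}(x,x)=\sum_j q_{L,\lambda}(x_j,x_j)$ is a sum of terms of the same sign, so $q_{B,\lambda}$ is positive definite if and only if $q_{L,\lambda}$ is. Applying Lemmas \ref{traceprops} and \ref{trstarprops} once more, this is equivalent to $L$ being totally real or CM together with $\lambda$ being totally positive. There are no serious obstacles here; the one subtlety worth double-checking is the symmetry step, which hinges on $\lambda^*=\lambda$ and on $\Tr_{L/\qq}$ being real-valued so that it is invariant under $^*$.
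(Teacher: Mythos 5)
Your proof is correct and mirrors the paper's argument: obtain the coordinate formula from Lemma \ref{comptrd} together with transitivity of the trace, observe that $q_{B,\lambda}$ is an orthogonal sum of $n$ copies of $q_{L,\lambda}$, and then read off symmetry, nondegeneracy and positive definiteness from Lemmas \ref{traceprops} and \ref{trstarprops}. One small remark on the symmetry step: the reason $\Tr_{L/\qq}(z^*)=\Tr_{L/\qq}(z)$ is that $^*$ is a $\qq$-automorphism of $L$ (so it merely permutes the embeddings in the trace sum), which is slightly different from the parenthetical observation that the trace is rational hence fixed by $^*$; the latter alone does not move the conjugation inside the trace.
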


\begin{proof}
Lemma \ref{comptrd} shows the desired equality, and the symmetry comes from the fact that $\lambda^*=\lambda$ and the properties of the trace. It also follows from the equality above that $$q_{B,\lambda}\simeq n\times q_{L,\lambda},$$ where $n=\deg(B).$ 
We now may apply Lemmas \ref{traceprops} and \ref{trstarprops} to conclude.
\end{proof}

From now on, we assume that $L/\qq$ is a Galois totally real or CM extension, and that $\lambda\in L_0^\times$ is totally positive.

Now that we have a positive definite symmetric bilinear form on $B$, we would like to build an integral lattice from it.

For, we impose a stronger condition on $\gamma$, namely $\gamma\in\OO_k$. Since $\gamma \gamma^*=1$, this implies that $\gamma\in\OO_k^\times$.

We set $$\Lambda=\bigoplus_{j=0}^{n-1}e^j\OO_L.$$

Consider the subring $\Lambda =\ds \bigoplus_{j=0}^{n-1} e^j \OO_L$. Notice that $e^{-1}=e^{n-1}\gamma^{-1}\in\Lambda$. This equality easily implies that $\Lambda$ is a subring of $B$ satisfying $\tau(\Lambda)=\Lambda$. Moreover, if we suppose that $\lambda\in L_0$ is a totally positive element such that $\lambda\in\mathcal{D}_L^{-1}$, the previous results ensure that $q_{B,\lambda}(x,y)\in \zz$ for all $x,y\in\Lambda$, so $\Lambda$ is a full integral lattice of $(B,q_{B,\lambda})$.   Moreover, a reasoning similar to the proof of Lemma \ref{lemqb} shows that this lattice is isomorphic to $n$ copies of the lattice $q_{L,\lambda}:\OO_L\times\OO_L\to \zz$. In particular, we $$\det(\Lambda)=(N_{L/\qq}(\lambda)\vert d_L\vert)^n.$$

\subsection{Quotients of Orders in Cyclic Algebras.}

Our next goal is to get  results similar  to the results of Subsection \ref{ssec-codes} in our noncommutative context.

Let us repeat one more time our assumptions: $L/k$ is a cyclic extension of number fields, whose Galois group is generated by $\sigma$, complex conjugation induces a ring automorphism on $k$, $\gamma\in\OO_k$ satisfies $\gamma\gamma^*=1$, and $\lambda\in L_0=L\cap\rr$ is a totally positive element such that $\lambda\in\mathcal{D}_L^{-1}$.

We then set $B=(\gamma,L/k,\sigma),$ and $\Lambda=\ds\bigoplus_{j=0}^{n-1}e^j\OO_L.$
Finally, $\tau$ is the involution on $B$ defined by $$\tau(\sum_{j=0}^{n-1}e^j x_j)=\sum_{j=0}x_j^*e^{-j}$$ for all $x_j\in L$.

Let $\pp$ be a prime ideal of $\OO_k$ lying above a prime $p$ satisfying $\pp=\pp^*$, and which is inert or totally ramified in $L$. In this case, there is a unique prime ideal $\PP$ of $\OO_L$ lying above $\pp$. Hence, $\PP^\sigma=\PP$ and $\PP^*=\PP$.

The left ideal $\mathcal{P}=\Lambda \PP$ is then a two-sided ideal of $\Lambda$ satisfying $\tau(\mathcal{P})=\mathcal{P}$.

Indeed, for all $a\in \OO_L$, all $x\in \PP$ and all $k\in\left\llbracket 0,n-1\right\rrbracket$, we have $$x(e^ka)=e^kx^{\sigma^k}a\in e^k \PP, $$so 
$\PP\Lambda\subset \Lambda \PP,$ which is enough to prove that $\mP$ is a two-sided ideal of $\Lambda$.
Keeping the notation above, we also have $$\tau(e^k x)=x^*e^{-k}\in\PP^*\Lambda=\PP\Lambda\subset\Lambda \PP,$$
which is again enough to prove that $\tau(\mathcal{P})\subset\mathcal{P}$; the equality then follows by applying $\tau$  to this inclusion.

 The equalities $\PP^\sigma=\PP$ and $\PP^*=\PP$ imply that $\sigma$ and complex conjugation both induce automorphisms on $\OO_L/\PP$, that we will denote respectively by $\ov{\sigma}$ and $*$. The equality $\tau(\mP)=\mP$ implies that $\tau$ induces an anti-automorphism $\ov{\tau}$ of order $1$ or $2$ on the quotient ring $\Lambda/\mP$.

Assuming further that $\lambda \PP\subset p\mathcal{D}_L^{-1}$, and that $p(\lambda\mathcal{D}_L)^{-1}\cap \OO_L=\PP,$ Lemma \ref{modp} shows that the $\ff_p$-bilinear map $$\nfunc{\varphi_{L,\lambda}}{\OO_L/\PP\times\OO_L/\PP}{\ff_p}{([x]_\PPP,[y]_\PPP)}{[\Tr_{L/\qq}(\lambda x^*y)]}$$ is well-defined and nondegenerate.

Lemma \ref{comptrd} and the previous considerations show that the map

$$\nfunc{\varphi_{\Lambda/\mathcal{P},\lambda}}{\Lambda/\mathcal{P}\times \Lambda/\mathcal{P}}{\ff_p}{
          ([x]_\mPP,[y]_\mPP)}{[\Tr_{k/\qq}(\Trd_B(\lambda\tau(x)y))]_p}$$
is well-defined and nondegenerate, which is a condition similar to hypothesis $(H_1)$.

We then have results on parity of integral lattices, as in the commutative case.

\begin{lem}\label{evenlj}
Assume that the conditions above are satisfied. Let $\mJ$ be a left ideal of $\Lambda/\mP$ such that $\mJ\subset \mJ^\perp$. Assume that $p$ is odd, and that there exists $u\in\OO_L$ such that $u^*+u=1$. Then $\Gamma_\mJ$ is an even integral lattice.
\end{lem}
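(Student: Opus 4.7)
The plan is to mimic the proof of Lemma \ref{evengc}, replacing the commutative computation with the noncommutative one built from the involution $\tau$. The crucial observation is that for any $u\in L\subset B$, viewed as $u=e^0 u\in B$, the definition of $\tau$ gives $\tau(u)=u^*$. Since $\tau$ is an anti-automorphism of $B$, it follows that for every $x\in B$ and every $u\in L$ we have $\tau(ux)=\tau(x)\tau(u)=\tau(x)u^*$.

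With this identity in hand, pick $u\in\OO_L$ with $u^*+u=1$ and write, for $x\in\pi^{-1}(\mJ)$:
\[
\lambda\tau(x)x=\lambda\tau(x)(u+u^*)x=\lambda\tau(x)ux+\lambda\tau(x)u^*x=\lambda\tau(x)ux+\lambda\tau(ux)x.
\]
Applying the reduced trace $\Trd_B$ and then $\Tr_{k/\qq}$, both of which are $\qq$-linear, yields
\[
q_{B,\lambda}(x,x)=q_{B,\lambda}(x,ux)+q_{B,\lambda}(ux,x)=2q_{B,\lambda}(ux,x),
\]
the last equality being symmetry of $q_{B,\lambda}$ (Lemma~\ref{lemqb}).

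Since $u\in\OO_L\subset\Lambda$ and $\pi^{-1}(\mJ)$ is a left ideal of $\Lambda$ (because $\mJ$ is a left ideal of $\Lambda/\mP$), we have $ux\in\pi^{-1}(\mJ)\subset\Lambda$. Under our standing assumptions ($\lambda\in L_0$ totally positive and $\lambda\in\mathcal{D}_L^{-1}$), $\Lambda$ is an integral lattice with respect to $q_{B,\lambda}$, so $q_{B,\lambda}(ux,x)\in\zz$, and therefore $q_{B,\lambda}(x,x)\in 2\zz$ for every $x\in\pi^{-1}(\mJ)$.

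To conclude, fix a $\zz$-basis $(e_1,\ldots,e_N)$ of $\pi^{-1}(\mJ)$, so that $(\tfrac{1}{\sqrt{p}}e_1,\ldots,\tfrac{1}{\sqrt{p}}e_N)$ is a $\zz$-basis of $\Gamma_\mJ$. The assumption $\mJ\subset\mJ^\perp$ together with Theorem \ref{thm-gamma}(4) ensures that $\Gamma_\mJ$ is integral, so the Gram matrix entries $G_{ii}=\tfrac{1}{p}q_{B,\lambda}(e_i,e_i)$ lie in $\zz$. From $q_{B,\lambda}(e_i,e_i)\in 2\zz$ we get $pG_{ii}\in 2\zz$, and since $p$ is odd this forces $G_{ii}\in 2\zz$. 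Hence every diagonal entry of a Gram matrix of $\Gamma_\mJ$ is even, proving that $\Gamma_\mJ$ is even. The only step that requires care is the identity $\tau(ux)=\tau(x)u^*$, which hinges on recognising $u$ as the degree-zero element $e^0u$ of $B$ and using that $\tau$ reverses products; everything else is a direct transcription of the commutative argument.
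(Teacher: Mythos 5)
Your proof is correct and follows essentially the same route as the paper: decompose $1=u+u^*$, use $\tau(u)=u^*$ to rewrite $q_{B,\lambda}(x,x)=2q_{B,\lambda}(ux,x)\in 2\zz$, and then invoke the oddness of $p$ to pass from $pG_{ii}\in 2\zz$ to $G_{ii}\in 2\zz$. The only difference is that you spell out the final Gram-matrix divisibility step, whereas the paper refers back to the proof of Lemma \ref{evengc}; both arguments are identical in substance.
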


\begin{proof}
Note that for any $x\in\mJ$, we have 
$$q_{B,\lambda}(x,x)=\Tr_{k/\qq}(\Trd_B(\lambda\tau(x)u^*x))+   \Tr_{k/\qq}(\Trd_B(\lambda\tau(x)ux)).$$
Since $u^*=\tau(u)$, we get $$q_{B,\lambda}(x,x)=q_{B,\lambda}(ux,x)+q_{B,\lambda}(x,ux)=2q_{B,\lambda}(ux,x).$$
Now, we may finish the proof as in the proof of Lemma \ref{evengc}.
\end{proof}

We now check that taking the orthogonal with respect to $\varphi$ preserves left ideals.






\begin{lem}\label{lem:Ilrd}
Let $\mJ$ be a left ideal of $\Lambda/\mP$. 
Then
\[
\mJ^\perp = \{ [y]_\mPP \in \Lambda/\mP\mid \varphi_{\Lambda/\mP,\lambda}([x]_\mPP,[y]_\mPP) = 0 \mbox{ for all } [x]_\mPP \in \mJ \}
\]
is a left ideal of $\Lambda/\mP$. 
\end{lem}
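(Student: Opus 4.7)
The goal is to show that $\mathcal{J}^\perp$ is closed under left multiplication by $\Lambda/\mathcal{P}$, since closure under addition is automatic for the radical-type subspace defined by a bilinear form. The plan is to use the involution $\tau$ to move a left multiplier past the bilinear pairing, converting left multiplication on the second slot into left multiplication on the first slot (where we can exploit the fact that $\mathcal{J}$ is already a left ideal).

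Concretely, fix $[y]_\mPP \in \mathcal{J}^\perp$ and $[a]_\mPP \in \Lambda/\mathcal{P}$. For any $[x]_\mPP \in \mathcal{J}$, I would compute
\[
\varphi_{\Lambda/\mathcal{P},\lambda}([x]_\mPP,[a]_\mPP[y]_\mPP) = [\Tr_{k/\qq}(\Trd_B(\lambda\tau(x) a y))]_p.
\]
The key identity is that, since $\tau$ is an involution (i.e. an anti-automorphism of order $2$), one has $\tau(\tau(a)x) = \tau(x)\tau(\tau(a)) = \tau(x) a$. Therefore
\[
\lambda\tau(x) a y = \lambda\tau(\tau(a)x) y,
\]
and consequently
\[
\varphi_{\Lambda/\mathcal{P},\lambda}([x]_\mPP,[a]_\mPP[y]_\mPP) = \varphi_{\Lambda/\mathcal{P},\lambda}([\tau(a)x]_\mPP,[y]_\mPP).
\]

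To finish, I would invoke the fact established just before the statement that $\tau(\Lambda) = \Lambda$ and $\tau(\mathcal{P}) = \mathcal{P}$, so $\tau$ descends to an anti-automorphism $\overline{\tau}$ of $\Lambda/\mathcal{P}$. In particular $[\tau(a)]_\mPP \in \Lambda/\mathcal{P}$, and because $\mathcal{J}$ is a left ideal, $[\tau(a) x]_\mPP = [\tau(a)]_\mPP [x]_\mPP \in \mathcal{J}$. Since $[y]_\mPP \in \mathcal{J}^\perp$, the right-hand side of the displayed equation vanishes, proving $[a]_\mPP[y]_\mPP \in \mathcal{J}^\perp$.

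There is no serious obstacle here; the only subtlety is keeping track of the order in the anti-automorphism step ($\tau$ reverses multiplication, so left multiplication on one side corresponds to left multiplication by $\tau(a)$ on the other). One must make sure $\lambda$ does not interfere — but it stays on the far left throughout since it is not touched by $\tau$, so the identity $\tau(x)a = \tau(\tau(a)x)$ suffices. I would also remark that the same argument shows $\mathcal{J}^\perp$ is closed under addition (so it is indeed a left ideal, not merely a set closed under the left action), though this is immediate from bilinearity of $\varphi_{\Lambda/\mathcal{P},\lambda}$.
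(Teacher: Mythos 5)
Your proof is correct and follows the same strategy as the paper's: you rewrite $\tau(x)a = \tau(\tau(a)x)$ to transfer the left multiplication by $[a]_\mPP$ from the second slot of $\varphi_{\Lambda/\mP,\lambda}$ to a left multiplication by $[\tau(a)]_\mPP$ in the first slot, and then invoke that $\mJ$ is a left ideal. The only cosmetic difference is that the paper writes out the closure-under-subtraction computation explicitly, whereas you dispose of it by citing bilinearity.
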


\begin{proof}
Write $\mJ=J/\mP,$ where $J$ is a left ideal of $\Lambda$ containing $\mP$.

Given $[y_1]_\mPP,[y_2]_\mPP\in \mJ^\perp$, for all $x\in J$, we have
$$\begin{array}{lll}
\varphi_{\Lambda/\mP,\lambda}([x]_\mPP,[y_1]_\mPP-[y_2]_\mPP) &= & [\Tr_{k/\qq}(\Trd_B(\lambda\tau(x)(y_1-y_2)))]_p\cr 
& = &
\varphi_{\Lambda/\mP,\lambda}([x]_\mPP,[y_1]_\mPP)-\varphi_{\Lambda/\mP,\lambda}([x]_\mPP,[y_2]_\mPP)\cr 
& = &  [0]_p. \end{array}$$

Moreover, for all $[a]_\mPP\in \Lambda/\mP,$ and all $[y]_\mPP\in\mJ^\perp$, we have 

$$\begin{array}{lll}
\varphi_{\Lambda/\mP,\lambda}([x]_\mPP,[a]_\mPP[y]_\mPP) & =& [\Tr_{k/\qq}(\Trd_B(\lambda\tau(x)ay))]_p
\cr &=& [\Tr_{k/\qq}(\Trd_B(\lambda\tau(\tau(a)x)y))]_p \cr
 & = & \varphi_{\Lambda/\mP,\lambda}([\tau(a)]_\mPP [x]_\mPP,[y]_\mPP)\cr 
 & =&  [ 0 ]_p
\end{array}$$
for all $x\in J$, since $\mP$ is a left ideal. Thus, $[a]_\mPP[y]_\mPP\in \mJ^\perp$, and $\mJ^\perp$ is a left ideal of $\Lambda/\mP$.
\end{proof}

We now take care of hypothesis $(H_2)$. Set $\ff_q=\OO_L/\PP$. Let us note for later use that $X^n-[\gamma]_\PPP \in\ff_q[X;\ov{\sigma}]$ is a central polynomial. Indeed, for all $[a]_\PPP\in\ff_q$ and all $k\geq 0$, we have $$(X^k [a]_\PPP)(X^n-[\gamma]_\PPP)=X^{k+n}[a]_\PPP^{\ov{\sigma}^n}-X^k
[a]_\PPP[\gamma]_\PPP.$$
Now, since $\sigma^n=\Id_L$, we also have $\ov{\sigma}^n=\Id_{\ff_q}$. Moreover, since $\gamma\in k$, $[\gamma]_\PPP$ commutes with all the elements of $\ff_q$. Hence we get  $$(X^k [a]_\PPP)(X^n-[\gamma]_\PPP)=X^{k+n}[a]_\PPP-X^k[a]_\PPP[\gamma]_\PPP=X^{k+n}[a]_\PPP-X^k[\gamma]_\PPP[a]_\PPP.$$
But now, we have $$(X^n-[\gamma]_\PPP)(X^k[a]_\PPP)=X^{n+k}[a]_\PPP-[\gamma]_\PPP( X^k[a]_\PPP)=X^{k+n}[a]_\PPP-X^k[\gamma]_\PPP^{\ov{\sigma}^k}[a]_\PPP.$$
Now, since $\gamma\in k$ and $\sigma$ is $k$-linear, we have $[\gamma]_\PPP^{\ov{\sigma}^k}=[\gamma^{\sigma^k}]_\PPP=[\gamma]_\PPP$, and we finally get the equality 
$$(X^k [a]_\PPP)(X^n-[\gamma]_\PPP)= (X^n-[\gamma]_\PPP)(X^k [a]_\PPP)$$ for all $\ov{a}\in\ff_q$ and all $k\geq 0$, which is enough to prove that $X^n-[\gamma]_\PPP$ commutes with any element of $\ff_q[X;\ov{\sigma}]$.

In particular, by Remark \ref{rem-divisor}, left and right divisors coincide. It also implies that the left and right quotients and remainders coincide as well, and that the set of multiples of $X^n-[\gamma]_\PPP$ is a two-sided ideal of $\ff_q[X;\ov{\sigma}]$.

We then have the following lemma, which is similar to hypothesis $(H_2)$.

\begin{lem}\label{lem:iso}
Set $\ff_q=\OO_L/\PP$. There is a unique ring morphism $\rho:\ff_q[X;\ov{\sigma}]\to\Lambda/\mP$ sending $X$ to $[e]_\mPP$ and an element $[a]_\PPP\in\ff_q$ to $[a]_\mPP$, and it induces an isomorphism of $\ff_p$-algebras
\[
\ff_q[X;\ov{\sigma}]/(X^n-[\gamma]_\PPP) \simeq \Lambda/\mP.
\]
\end{lem}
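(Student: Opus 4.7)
The plan is to build $\rho$ via the universal property of skew-polynomial rings recalled at the end of Section 4.1, verify that $X^n-[\gamma]_\PPP$ lies in its kernel, and conclude by a surjectivity-plus-cardinality argument.

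First I would package the scalar component. Define $u\colon \ff_q\to\Lambda/\mP$ by $u([a]_\PPP)=[a]_\mPP$. This is well-defined since $\PP\subset\Lambda\PP=\mP$, and it is a ring morphism because it is the composite of $\OO_L\hookrightarrow\Lambda$ with the canonical projection, which kills $\PP$. Next I would verify the twisted commutation needed to apply the universal property with $r=[e]_\mPP$: we must check $u([a]_\PPP)\cdot[e]_\mPP=[e]_\mPP\cdot u(([a]_\PPP)^{\ov{\sigma}})$ for all $a\in\OO_L$. With the paper's convention $a^\sigma=\sigma^{-1}(a)$, the cyclic-algebra relation $ae=ea^\sigma$ in $\Lambda$ reads $ae=e\sigma^{-1}(a)$, and the induced automorphism $\ov{\sigma}$ on $\ff_q$ satisfies $([a]_\PPP)^{\ov{\sigma}}=[\sigma^{-1}(a)]_\PPP$. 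Hence the required identity $[ae]_\mPP=[e\sigma^{-1}(a)]_\mPP$ holds in $\Lambda/\mP$. The universal property then yields a unique ring morphism $\rho\colon\ff_q[X;\ov{\sigma}]\to\Lambda/\mP$ extending $u$ and sending $X$ to $[e]_\mPP$.

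Next, since $\gamma\in\OO_k\subset\OO_L$, we have $\rho(X^n-[\gamma]_\PPP)=[e^n]_\mPP-[\gamma]_\mPP=[0]_\mPP$ because $e^n=\gamma$ in $B$. As already noted in the excerpt, $X^n-[\gamma]_\PPP$ is central in $\ff_q[X;\ov{\sigma}]$, so the principal left ideal it generates is two-sided, and $\rho$ factors through a ring morphism
\[
\bar{\rho}\colon \ff_q[X;\ov{\sigma}]/(X^n-[\gamma]_\PPP)\to\Lambda/\mP.
\]

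It remains to show $\bar{\rho}$ is bijective. For surjectivity, any element of $\Lambda/\mP$ is represented by $\sum_{j=0}^{n-1}e^j x_j$ with $x_j\in\OO_L$, and this is the image of $\sum_{j=0}^{n-1}X^j[x_j]_\PPP$. For injectivity, I would argue by cardinality. On the source, left Euclidean division by the monic polynomial $X^n-[\gamma]_\PPP$ shows every class has a unique representative of degree $<n$, so the quotient has cardinality $|\ff_q|^n$. On the target, using $xe^j\in e^j\OO_L$ for each $x\in\OO_L$, one checks $\mP=\Lambda\PP=\bigoplus_{j=0}^{n-1}e^j\PP$, whence $\Lambda/\mP\simeq\bigoplus_{j=0}^{n-1}e^j(\OO_L/\PP)$ as abelian groups, with cardinality $|\ff_q|^n$ as well. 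A surjection between finite sets of equal size is a bijection, so $\bar{\rho}$ is the desired isomorphism of $\ff_p$-algebras (it is $\ff_p$-linear since both $u$ and $X\mapsto[e]_\mPP$ are).

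The only step requiring real attention is bookkeeping around the convention $a^\sigma=\sigma^{-1}(a)$ and the resulting direction of commutation, which must match on both sides in order to apply the universal property cleanly; everything else is formal.
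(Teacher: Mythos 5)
Your proof is correct and follows essentially the same route as the paper: construct $\rho$ via the universal property of skew-polynomial rings, check $X^n-[\gamma]_\PPP$ is killed, and conclude using the decomposition $\mP=\bigoplus_{j=0}^{n-1}e^j\PP$. The only (cosmetic) difference is in the final step: the paper computes $\ker(\rho)$ directly by Euclidean division and invokes the first isomorphism theorem, while you establish surjectivity and then appeal to a cardinality count over finite sets; both rest on exactly the same observation about $\Lambda\PP$, so the two arguments are interchangeable.
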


\begin{proof}
Let us prove the existence and uniqueness of $\rho$. The composition of the inclusion $\OO_L\subset \Lambda$ with the canonical projection $\Lambda\to\Lambda/\mP$ yields a ring morphism $\OO_L\to\Lambda/\mP$ whose kernel contains $\PP$ since $\PP\subset \mP=\Lambda\PP$. Hence, we get a ring morphism
 $u:\ff_q\to \Lambda/\mP$ sending $[a]_\PPP\in\ff_q$ to $[a]_\mPP$. For all $[a]_\PPP\in\ff_q$, we have $$u([a]_\PPP)[e]_\mPP=[ae]_\mPP=[ea^\sigma]_\mPP=[e]_\mPP u([a^\sigma]_\PPP)=[e]_\mPP u([a]_\PPP)^{\ov{\sigma}}).$$

By the considerations on skew polynomial rings recalled in a previous section, we get the desired result.
Now assume that $\ov{f}\in\ff_q[X;\sigma]$ satisfies $\rho(\ov{f})=[0]_\mPP$. We may write $\ov{f}=(X^n-[\gamma]_\PPP)\ov{g}+\ov{r}$, where $\deg(r)<n$. Since $e^n=\gamma$, applying $\rho$ yields $\rho(\ov{r})=[0]_\mPP$. 

Write $\ov{r}=\ds\sum_{j=0}^{n-1}X^j[a_j]_\PPP$. Then $[\ds\sum_{j=0}^{n-1}e^j a_j]_\mPP=[0]_\mPP$, that is $\ds\sum_{j=0}^{n-1}e^j a_j\in\mP=\Lambda\PP$.
It is easy to conclude that each $a_j$ lies in $\PP$, which implies in turn that $\ov{r}=0$. Hence, $\ov{f}$ is a  multiple of $X^n-[\gamma]_\PPP$, and we are done using the first isomorphism theorem, since all the maps are clearly $\ff_p$-linear.
\end{proof}

The isomorphism of Lemma \ref{lem:iso} gives us a correspondence between the monic divisors $\ov{g}$ of $X^n-[\gamma]_\PPP$ and the left/right ideals of $\Lambda/\mP$, that is the left/right ideals of $\Lambda$ containing $\mP$.

If $\ov{g}$ is such a divisor, the corresponding left ideal of $\Lambda/\mP$ is the left ideal generated by $[g(e)]_{\mP},$ where $g\in\OO_L[X]$ is any polynomial whose reduction modulo $\PP$ is $\ov{g}$, and the corresponding left ideal of $\Lambda$ is $\Lambda g(e)+\mP$.

To end this paragraph, we would like to understand how behaves the antiautomorphism $\ov{\tau}$ of $\Lambda/\mP$ via this isomorphism with respect to ideals. More precisely, we have the following lemma.

\begin{lem}\label{taug}
Let $\ov{g}=\ds\sum_{k=0}^dX^k[a_k]_\PPP$ be a monic divisor of $X^n-[\gamma]_\PPP$ of degree $d$, and let $\mJ=\Lambda/\mP \cdot [g(e)]_\mPP$ be the corresponding left ideal of $\Lambda/\mP$. Then $\ov{\tau}(\mJ)$ is the right ideal of $\Lambda/\mP$ corresponding to the monic divisor of degree $d$ $$\ov{g}_{\ov{\tau}}=\sum_{k=0}^dX^k[a^*_{d-k}]_\PPP^{\ov{\sigma}^k} [a_0^{*-1}]_\PPP^{\ov{\sigma}^d}.$$
\end{lem}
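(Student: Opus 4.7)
The plan is to describe $\ov\tau(\mJ)$ as a principal right ideal of $\Lambda/\mP$, normalize its generator to a monic polynomial expression in $e$ by right-multiplying by units of $\Lambda/\mP$ (which does not change the right ideal), and then pull back to $\ff_q[X;\ov\sigma]$ via the isomorphism $\rho$ of Lemma \ref{lem:iso}.

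Since $\tau$ is anti-multiplicative with $\tau(\Lambda) = \Lambda$ and $\mJ = (\Lambda/\mP)\cdot[g(e)]_\mPP$, we have $\ov\tau(\mJ) = [\tau(g(e))]_\mPP \cdot \Lambda/\mP$. Writing $g(e) = \sum_{k=0}^d e^k a_k$ with $a_d = 1$ and using $\tau(e) = e^{-1}$ together with $\tau(a) = a^*$ for $a \in L$, we get $\tau(g(e)) = \sum_{k=0}^d a_k^* e^{-k}$. I would then right-multiply by $e^d$, which represents a unit of $\Lambda/\mP$ (inverse $[\gamma^{-1}e^{n-1}]_\mPP^d$, well-defined because $\gamma \in \OO_k^\times$), and hence does not alter the right ideal. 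Commuting scalars through $e^j$ via $b\,e^j = e^j b^{\sigma^j}$ rewrites this as
\[
\tau(g(e))\cdot e^d \;=\; \sum_{k=0}^d a_k^* e^{d-k} \;=\; \sum_{j=0}^d e^j (a_{d-j}^*)^{\sigma^j}.
\]

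The coefficient of the top power $e^d$ here is $(a_0^*)^{\sigma^d}$, and $a_0$ is a unit modulo $\PP$ — otherwise $X$ would divide $\ov g$, hence the central polynomial $X^n-[\gamma]_\PPP$, forcing $[\gamma]_\PPP = 0$ and contradicting $\gamma\gamma^* = 1$. Thus $(a_0^{*-1})^{\sigma^d}$ represents a unit of $\Lambda/\mP$, and right-multiplying by it again preserves the right ideal and yields the monic expression
\[
h(e) \;:=\; \sum_{j=0}^d e^j (a_{d-j}^*)^{\sigma^j}\,(a_0^{*-1})^{\sigma^d}.
\]
Applying $\rho^{-1}$ sends $[h(e)]_\mPP$ to the polynomial
\[
\ov g_{\ov\tau} \;=\; \sum_{j=0}^d X^j [a_{d-j}^*]_\PPP^{\ov\sigma^j}\,[a_0^{*-1}]_\PPP^{\ov\sigma^d} \;\in\; \ff_q[X;\ov\sigma],
\]
which is monic of degree $d$ and which generates $\rho^{-1}(\ov\tau(\mJ))$ as a right ideal. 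Since this right ideal contains $X^n-[\gamma]_\PPP$, its generator $\ov g_{\ov\tau}$ is a left-divisor — and hence, by the centrality of $X^n-[\gamma]_\PPP$ noted in Remark \ref{rem-divisor}, a divisor — of $X^n-[\gamma]_\PPP$, giving the identification claimed in the lemma.

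The main bookkeeping hazard will be keeping straight the paper's convention $a^\sigma = \sigma^{-1}(a)$ and verifying at each step that the element by which one right-multiplies genuinely represents a unit in $\Lambda/\mP$ (so that the right ideal is preserved). Once these conventions are pinned down, the rest is a direct expansion.
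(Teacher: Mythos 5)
Your computation up to the explicit formula for $\ov g_{\ov\tau}$ is correct and matches the paper's: $\ov\tau(\mJ)$ is the right ideal generated by $[\tau(g(e))]_\mPP$, right-multiplying by the units $[e^d]_\mPP$ and $[(a_0^{*-1})^{\sigma^d}]_\mPP$ gives $[g_{\ov\tau}(e)]_\mPP$ as a generator, and transporting via $\rho$ shows that the class of $\ov g_{\ov\tau}$ generates the corresponding right ideal of $\ff_q[X;\ov\sigma]/(X^n-[\gamma]_\PPP)$. Your observation that $[a_0]_\PPP\neq 0$ is also fine (though the paper gets it from $[a_0]_\PPP[b_0]_\PPP = -[\gamma]_\PPP$).

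The gap is in the last paragraph. Knowing that the class of $\ov g_{\ov\tau}$ generates the right ideal in the quotient does \emph{not} immediately identify $\ov g_{\ov\tau}$ as the monic divisor corresponding to that ideal. The preimage of that right ideal in $\ff_q[X;\ov\sigma]$ is $\ov g_{\ov\tau}\cdot\ff_q[X;\ov\sigma] + (X^n-[\gamma]_\PPP)$, and its monic generator $\ov r$ is a left divisor of both $\ov g_{\ov\tau}$ and $X^n-[\gamma]_\PPP$; it could a priori have degree strictly less than $d$. You assert that $\ov g_{\ov\tau}$ is ``its generator,'' but that is exactly the same as asserting that $\ov g_{\ov\tau}$ divides $X^n-[\gamma]_\PPP$, which you have not shown. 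To close the gap you must either exhibit the complementary factor — this is what the paper does: write $\ov g\ov h = X^n-[\gamma]_\PPP$, apply $\ov\tau$ to $[g(e)]_\mPP[h(e)]_\mPP=[0]_\mPP$, normalize, and conclude from degree counting that $\tilde h\tilde g$ is a unit multiple of $X^n-[\gamma]_\PPP$ — or give a dimension argument showing that the divisor associated to $\ov\tau(\mJ)$ has degree exactly $d$ (using that $\ov\tau$ is an $\ff_p$-linear bijection, so codimensions are preserved). As written, your proof shows $\ov g_{\ov\tau}$ is \emph{some} generator modulo $X^n-[\gamma]_\PPP$, not that it is the distinguished monic divisor.

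Also, a minor point you gloss over: the cases $d=0$ and $d=n$ (with $\ov g = 1$ or $\ov g = X^n-[\gamma]_\PPP$) are handled separately in the paper; your derivation implicitly requires $1\le d\le n-1$ for the degree bookkeeping. These cases are trivial but should be noted.
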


\begin{proof}
Notice first that the result trivially holds for $d=0$ or $d=n$. Indeed, if $d=0$, we have $\ov{g}=\ov{g}_{\ov{\tau}}=[1]_\PPP$ and $\mJ=\Lambda/\mP=\ov{\tau}(\mJ)$. If $d=n$, we have $\ov{g}=X^n-[\gamma]_\PPP$ and $\mJ=(0)=\ov{\tau}(\mJ)$. Notice now that we have $$\ov{g}_{\ov{\tau}}=X^n-[\gamma^{*-1}]_\PPP=X^n-[\gamma]_\PPP=\ov{g},$$ since $\gamma^*\gamma=1$, so the result is indeed valid in these two cases.

We now assume that $d\in\left\llbracket 1,n-1\right\rrbracket$.
Notice that $\ov{\tau}(\mJ)$ is the right ideal generated by $[\tau(g(e))]_\mPP$. Since $\gamma\in\OO_k^\times$, $[\gamma]_\PPP$ is nonzero in $\OO_L/\PP$, hence invertible. Thus $[e]_\mPP$ is invertible in $\Lambda/\mP$, and $
\ov{\tau}(\mJ)$ is also generated by $[\tau(g(e))e^d]_\mPP$.

Now write $\ov{g}\ov{h}=X^n-[\gamma]_\PPP$, where $\ov{h}=\ds\sum_{k=0}^{n-d}X^k[b_k]_\PPP$ is monic of degree $n-d$. Then $[g(e)]_\mPP[h(e)]_\mPP=[0]_\mPP$. Applying $\ov{\tau}$ yields $[\tau(h(e))]_\mPP[\tau(g(e))]_\mPP=[0]_\mPP$, and thus 

$$[e^{n-d}\tau(h(e))]_\mPP[\tau(g(e))e^d]_\mPP=[0]_\mPP.$$
Now we have $$\tau(g(e))e^d=\ds\sum_{k=0}^da_k^* e^{d-k}=\sum_{k=0}^d a_{d-k}^* e^k=\sum_{k=0}^d e^k (a_{d-k}^*)^{\sigma^k},$$
while $$e^{n-d}h(e)=\ds\sum_{k=0}^{n-d}e^{n-d}b_k^* e^{-k}=\sum_{k=0}^{n-d}e^{n-d-k}(b_k^*)^{\sigma^{-k}}= \sum_{k=0}^{n-d}e^{k}(b_{n-d-k}^*)^{\sigma^{-(n-d-k)}},$$
that is$e^{n-d}h(e)=\ds\sum_{k=0}^{n-d}e^{k}(b_{n-d-k}^*)^{\sigma^{d-k}}.$

The equality $[e^{n-d}\tau(h(e))]_\mPP[\tau(g(e))e^d]_\mPP=[0]_\mPP$ then yields that $X^n-[\gamma]_\mPP$ divides $\tilde{h}\tilde{g}$, where 
$$\tilde{h}=\sum_{k=0}^{n-d}X^{k}[b_{n-d-k}^*]_\PPP^{\ov{\sigma}^{d-k}} \ \mbox{ and } \ \tilde{g}=\sum_{k=0}^d X^k [a_{d-k}^*]_\PPP^{\ov{\sigma}^k}.$$

Notice that the equality $\ov{g}\ov{h}=X^n-[\gamma]_\PPP$ implies that $[a_0]_\PPP[b_0]_\mPP=-[\gamma]_\mPP$. In particular, both $[a_0]_\PPP$ and $[b_0]_\mPP$  are invertible in $\OO_L/\PP$, and  $[b_0]_\mPP=-[\gamma]_\PPP [a_0]_\mPP^{-1}$. It also implies that $\tilde{h}$ has degree $n-d$, with leading coefficient $\ov{b}_0^*$, while $\ov{g}$ has degree $d$, with leading coefficient $[a_0^*]_\PPP^{\ov{\sigma}^d}$.

Hence $\ov{h}\ov{g}$ has degree $n$. Comparing leading terms then yields $$\ov{h}\ov{g}=(X^n-[\gamma]_\PPP)[b_0^*]_\PPP [a_0^*]_\PPP^{\ov{\sigma}^d}.$$
It follows easily that $\ov{g}_{\ov{\tau}}=\tilde{g}[a_0^{*-1}]_\PPP^{\ov{\sigma}^d}$ is a monic right divisor of $X^n-[\gamma]_\PPP$. 
Since they differ by a unit, $\ov{g}_{\ov{\tau}}$ and $\tilde{g}$ correspond to the same right ideal.

By definition of $\tilde{g}$, its generates the same right ideal as $[\tau(g(e))e^d]_\mPP,$ which is $\ov{\tau}(\mJ)$, as we have seen before.
The result follows.
\end{proof}

\subsection{Skew-polynomial Codes.}

As we did in the commutative case, given a  monic divisor of $X^n-\ov{\gamma}$ corresponding to an ideal $\mJ$ of $\Lambda/\mP$, we compute the divisor corresponding to $\mJ^\perp$.

Of course, we assume that the induced bilinear map $\varphi_{\Lambda/\mP,\lambda}$ on $\Lambda/\mP$ is well-defined and nondegenerate. We have seen that this is equivalent to ask that
$\lambda \PP\subset p\mathcal{D}_L^{-1}$, and that $p(\lambda\mathcal{D}_L)^{-1}\cap \OO_L=\PP.$

Contrary to the commutative case, the result may depend on $\lambda$, depending on the ramification of $L/k$, and we will prove different results according to the situation. We start with two lemmas.

\begin{lem}\label{lemdeg}
Write $q=p^d$. Let $\ov{g}$ be  a monic divisor of $X^n-[\gamma]_\PPP$, and let $\mJ$ the corresponding left ideal of $\Lambda/\mP$. Then we have $$\codim_{\ff_p}(\mJ)=d\deg(\ov{g}).$$ 
In particular, if $\ov{h}$ is the monic divisor corresponding to $\mJ^\perp$, then $$\deg(\ov{g})+\deg(\ov{h})=n.$$
\end{lem}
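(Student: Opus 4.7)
The plan is to deduce both statements from a single dimension count carried out in the skew polynomial model $\ff_q[X;\ov\sigma]/(X^n-[\gamma]_\PPP) \simeq \Lambda/\mP$ provided by Lemma \ref{lem:iso}, together with the standard orthogonality formula for nondegenerate forms.

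First I would record that $\{1,X,\ldots,X^{n-1}\}$ is an $\ff_q$-basis of $\ff_q[X;\ov\sigma]/(X^n-[\gamma]_\PPP)$, so the quotient has $\ff_q$-dimension $n$ and hence $\ff_p$-dimension $nd$ (since $q=p^d$). In particular $\dim_{\ff_p}(\Lambda/\mP)=nd$.

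Next, to compute $\dim_{\ff_p}(\mJ)$, write $r=\deg(\ov g)$ and factor $X^n-[\gamma]_\PPP=\ov g \, \ov h_0$ for a monic $\ov h_0$ of degree $n-r$. Because $X^n-[\gamma]_\PPP$ is central (as established before Lemma \ref{lem:iso}), Remark \ref{rem-divisor} gives $\ov g \, \ov h_0 = \ov h_0 \, \ov g = X^n-[\gamma]_\PPP$, and left/right divisors coincide. The left ideal $\mJ$ is the image of $\ff_q[X;\ov\sigma]\cdot \ov g$. Using right Euclidean division by $\ov h_0$, any $\ov f \in \ff_q[X;\ov\sigma]$ writes $\ov f = \ov q\, \ov h_0 + \ov r$ with $\deg(\ov r)<n-r$, whence
\[
\ov f \, \ov g = \ov q\, (X^n-[\gamma]_\PPP) + \ov r \, \ov g \equiv \ov r \, \ov g \pmod{X^n-[\gamma]_\PPP}.
\]
So $\mJ$ is $\ff_q$-spanned by the classes of $\ov g, X\ov g,\ldots,X^{n-r-1}\ov g$. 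These are $\ff_q$-independent: a nontrivial relation would produce $\ov r \ov g$ lying in $(X^n-[\gamma]_\PPP)$ with $\deg(\ov r\,\ov g)<n$, forcing $\ov r\,\ov g=0$ and then $\ov r=0$ since $\ff_q[X;\ov\sigma]$ is a domain. Thus $\dim_{\ff_q}(\mJ)=n-r$ and $\dim_{\ff_p}(\mJ)=(n-r)d$, giving
\[
\codim_{\ff_p}(\mJ) = nd-(n-r)d = d\deg(\ov g),
\]
which is the first equality.

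For the second equality, I would invoke nondegeneracy of $\varphi_{\Lambda/\mP,\lambda}$ (assumed) together with \cite[Lemma~3.11]{Sch} (as used already in the proof of Theorem \ref{thm-gamma}) to get
\[
\dim_{\ff_p}(\mJ)+\dim_{\ff_p}(\mJ^\perp)=\dim_{\ff_p}(\Lambda/\mP)=nd,
\]
so $\codim_{\ff_p}(\mJ)+\codim_{\ff_p}(\mJ^\perp)=nd$. By Lemma \ref{lem:Ilrd}, $\mJ^\perp$ is itself a left ideal, hence corresponds to a monic divisor $\ov h$ of $X^n-[\gamma]_\PPP$, and applying the first equality to both $\ov g$ and $\ov h$ yields $d\deg(\ov g)+d\deg(\ov h)=nd$, i.e.\ $\deg(\ov g)+\deg(\ov h)=n$. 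The only mildly delicate step is the Euclidean-division argument, where one must use centrality of $X^n-[\gamma]_\PPP$ to turn the factorization $\ov g \ov h_0 = X^n-[\gamma]_\PPP$ into the reduction identity above; everything else is a clean dimension count.
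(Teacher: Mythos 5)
Your proof is correct and is essentially the same dimension count as the paper's, with one cosmetic difference in the first half: the paper passes to the quotient ring via the third isomorphism theorem, $(\Lambda/\mP)/\mJ \simeq \ff_q[X;\ov\sigma]/(\ov g)$, and reads off $\codim_{\ff_p}(\mJ)=d\deg(\ov g)$ directly, whereas you compute $\dim_{\ff_p}(\mJ)$ itself by exhibiting the basis $\ov g, X\ov g,\ldots,X^{n-r-1}\ov g$ (using centrality of $X^n-[\gamma]_\PPP$ and Euclidean division) and then subtract. The second half, invoking nondegeneracy to get $\codim_{\ff_p}(\mJ)+\codim_{\ff_p}(\mJ^\perp)=nd$ and dividing by $d$, is identical to the paper's. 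One small remark on your basis argument: when you assert that $\ov g, X\ov g,\ldots,X^{n-r-1}\ov g$ $\ff_q$-span $\mJ$ and are $\ff_q$-independent, you are implicitly using the \emph{left} $\ff_q$-action; with the twist $c\cdot(X^k\ov g)=X^k c^{\ov\sigma^k}\ov g$ the reduction from $\ov r\,\ov g$ to such a combination does go through, but it is worth being explicit, since with the right $\ff_q$-action $\mJ$ need not be a submodule. The paper's route via the quotient ring sidesteps this bookkeeping entirely.
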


\begin{proof}
Using division of polynomials, it is clear that for any monic polynomial $\ov{f}\in \ff_q[X;\ov{\sigma}]$, $(1,X,\ldots,X^{\deg(f)-1})$ is a basis of $\ff_q[X;\ov{\sigma}]/(\ov{f})$ as a right $\ff_q$-vector space. Hence, we get $$\dim_{\ff_p}(\ff_q[X;\ov{\sigma}]/(\ov{f}))=d\deg(\ov{f}).$$

In particular, $\dim_{\ff_p}(\Lambda/\mP)=dn$. Moreover, we have the following isomorphisms of $\ff_p$-algebras: 
$$(\Lambda/\mP)/\mJ\simeq (\ff_q[X;\ov{\sigma}]/(X^n-\ov{\gamma}))/((\ov{g})/(X^n-\ov{\gamma}))\simeq \ff_q[X;\ov{\sigma}]/(\ov{g}),$$
and thus, it yields $$\dim_{\ff_p}((\Lambda/\mP)/\mJ))=\codim_{\ff_p}(\mJ)=d\deg(\ov{g}).$$
Consequently, we get the first part of the lemma.

For the second, the non-degeneracy of $\varphi_{\Lambda/\mP,\lambda}$ implies that $$\codim_{\ff_p}(\mJ)+\codim_{\ff_p}(\mJ^\perp)=\dim_{\ff_p}(\Lambda/\mP)=dn.$$
Applying the first part and dividing by $d$ then yield the desired equality.
\end{proof}


\begin{lem}\label{lem-trmp}
We have the inclusion $$\Tr_{L/\qq}(\Trd_B(\lambda \mP))\subset p\zz.$$
\end{lem}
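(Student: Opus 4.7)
The plan is to reduce the claim to the standing hypothesis $\lambda\PP\subset p\mathcal{D}_L^{-1}$ by a direct unwinding of $\Trd_B$ on the specific form of elements of $\mP$. The whole argument is a computation; nothing is really subtle, so the ``obstacle'' will just be bookkeeping through the commutation in $B$ and the tower of traces.

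First I would describe $\mP$ explicitly. Since $\PP^\sigma=\PP$, any $\pi\in\PP$ satisfies $\pi e^j=e^j\sigma^{-j}(\pi)\in e^j\PP$, so $\PP\Lambda\subset\Lambda\PP$, and conversely each $e^j\pi$ clearly lies in $\Lambda\PP$. Writing out $\Lambda=\bigoplus_je^j\OO_L$ and using $a_k\pi\in\PP$ when $a_k\in\OO_L$, this shows
\[
\mP\;=\;\Lambda\PP\;=\;\bigoplus_{j=0}^{n-1}e^j\PP
\]
as abelian groups. Hence every $x\in\mP$ may be written uniquely as $x=\sum_{j=0}^{n-1}e^jx_j$ with $x_j\in\PP$.

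Second, I would compute $\Trd_B(\lambda x)$. Using $ae=ea^\sigma$ with the paper's convention $a^\sigma=\sigma^{-1}(a)$, one gets by induction $\lambda e^j=e^j\sigma^{-j}(\lambda)$, so
\[
\lambda x\;=\;\sum_{j=0}^{n-1}e^j\sigma^{-j}(\lambda)x_j.
\]
The reduced-trace formula recalled before Lemma~\ref{comptrd}, namely $\Trd_B(\sum_je^jy_j)=\Tr_{L/k}(y_0)$, picks out only the $j=0$ summand, giving
\[
\Trd_B(\lambda x)\;=\;\Tr_{L/k}(\lambda x_0)\;\in\;k.
\]

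Third, I would finish by the tower law. Since $\Trd_B(\lambda x)\in k\subset L$, and every $\qq$-embedding of $k$ admits $[L:k]=n$ extensions to $L$, we have $\Tr_{L/\qq}(\Trd_B(\lambda x))=n\,\Tr_{k/\qq}(\Tr_{L/k}(\lambda x_0))=n\,\Tr_{L/\qq}(\lambda x_0)$. Now $\lambda x_0\in\lambda\PP\subset p\mathcal{D}_L^{-1}$ by the standing hypothesis, so $p^{-1}\lambda x_0\in\mathcal{D}_L^{-1}$, and the defining property of the different gives $\Tr_{L/\qq}(p^{-1}\lambda x_0)\in\zz$, i.e.\ $\Tr_{L/\qq}(\lambda x_0)\in p\zz$. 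Therefore $\Tr_{L/\qq}(\Trd_B(\lambda x))\in np\zz\subset p\zz$, proving the inclusion. The only point requiring care is the $j$-dependent twist $\sigma^{-j}(\lambda)$ appearing when one commutes $\lambda$ past $e^j$; all other twisted coefficients are annihilated by $\Trd_B$, so the nontrivial action of $\sigma$ on $\lambda$ never actually enters the final estimate.
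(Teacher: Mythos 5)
Your proof is correct and follows essentially the same route as the paper's: decompose $x\in\mP$ as $\sum_j e^jx_j$ with $x_j\in\PP$, observe that $\Trd_B(\lambda x)=\Tr_{L/k}(\lambda x_0)$, and invoke the standing hypothesis $\lambda\PP\subset p\mathcal{D}_L^{-1}$. The only cosmetic difference is that the paper's proof actually computes $\Tr_{k/\qq}\circ\Trd_B$ (the form $s$ appearing in $q_{B,\lambda}$) rather than $\Tr_{L/\qq}\circ\Trd_B$ as literally stated, so your extra factor of $n=[L:k]$ is there simply because you took the statement at face value; both readings land inside $p\zz$, so nothing is lost.
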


\begin{proof}
Let $x\in\mP$, that is $x=\ds\sum_{i=0}^{n-1} e^i x_i, x_i\in\PP$. Then $\Tr_{k/\qq}(\Trd_B(\lambda x))=\Tr_{L/\qq}(\lambda x_0)\in\Tr_{L/\qq}(\lambda \PP)$.
Since $\lambda\PP\subset p\mathcal{D}_L^{-1}$,  we get $$\Tr_{L/\qq}(\lambda \PP)\subset p\Tr_{L/\qq}(\mathcal{D}_L^{-1})\subset p\zz.$$
\end{proof}

We now state and prove the first main result of this section.

\begin{prop}\label{proplram}
Let $\pp$ be a prime ideal of $\OO_K$ lying above $p$ such that $\pp^*=\pp$, and which totally ramifies in $L$, and let $\PP$ be the unique prime ideal of $\OO_L$ lying above $\pp$. Let $\bar{g}=\ds \sum_{i=0}^d X^i [a_i]_\PPP$ be a monic divisor of $X^n-[\gamma]_\PPP,$ and let $\mJ$ be the corresponding left ideal of $\Lambda/\mP$.

 Then $\mJ^\perp$ corresponds to the monic divisor 
$$\bar{h}=\frac{X^n-[\gamma]_\PPP}{\ov{g}_{\ov{\tau}}},$$
where $\ov{g}_{\ov{\tau}}=\ds\sum_{i=0}^dX^k[a^*_{d-i}]_\PPP [a_0^*]_\PPP^{-1}.$
In particular, $\mJ$ is self-orthogonal if and only if   $\ov{g} \, \ov{g}_{\ov{\tau}}$ is a multiple of $X^n-[\gamma]_\PPP$,
and self-dual if and only if $\ov{g} \ \ov{g}_{\ov{\tau}}=X^n-[\gamma]_\PPP.$
\end{prop}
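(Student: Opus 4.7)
The plan is to produce a natural candidate ideal $\mJ'$ contained in $\mJ^\perp$ and then show equality by a dimension count. The starting observation is that the total ramification of $\pp$ in $L$ forces the residue degree $f(\PP/\pp)$ to be $1$, so $\OO_L/\PP=\OO_k/\pp$, and consequently $\ov\sigma$ acts trivially on $\ff_q=\OO_L/\PP$. Two consequences follow: the skew-polynomial ring $\ff_q[X;\ov\sigma]$ reduces to the ordinary polynomial ring $\ff_q[X]$, so $\Lambda/\mP$ becomes commutative with left and right ideals coinciding; and the general formula for $\ov g_{\ov\tau}$ from Lemma~\ref{taug} specializes precisely to the expression stated in the proposition.

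Next, I would set $\ov h:=(X^n-[\gamma]_\PPP)/\ov g_{\ov\tau}$, a monic polynomial of degree $n-d$ (with $d=\deg\ov g$), and let $\mJ'$ be the associated ideal of $\Lambda/\mP$. To check $\mJ'\subset\mJ^\perp$, pick $[x]_\mPP\in\mJ$ and $[y]_\mPP\in\mJ'$; by Lemma~\ref{taug}, $[\tau(x)]_\mPP=\ov\tau([x]_\mPP)$ lies in $\ov\tau(\mJ)=(\ov g_{\ov\tau})$, hence
\[
[\tau(x)]_\mPP\,[y]_\mPP\in(\ov g_{\ov\tau})(\ov h)=(\ov g_{\ov\tau}\,\ov h)=(X^n-[\gamma]_\PPP)=(0)
\]
inside $\Lambda/\mP$; that is, $\tau(x)y\in\mP$. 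Lemma~\ref{lem-trmp} then yields $\Tr_{k/\qq}(\Trd_B(\lambda\tau(x)y))\in p\zz$, i.e.\ $\varphi_{\Lambda/\mP,\lambda}([x]_\mPP,[y]_\mPP)=[0]_p$, proving the inclusion.

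To close the gap, I would count $\ff_p$-dimensions. Writing $q=p^f$, Lemma~\ref{lemdeg} gives $\codim_{\ff_p}(\mJ)=fd$, so by nondegeneracy of $\varphi_{\Lambda/\mP,\lambda}$ also $\dim_{\ff_p}(\mJ^\perp)=fd$; independently $\deg(\ov h)=n-d$ gives $\dim_{\ff_p}(\mJ')=f(n-(n-d))=fd$. The two dimensions match, so $\mJ^\perp=\mJ'$ and $\ov h$ is indeed the monic divisor associated to $\mJ^\perp$. The self-orthogonality and self-duality statements then follow immediately from comparison of divisors in the commutative ring $\ff_q[X]/(X^n-[\gamma]_\PPP)$: $\mJ\subset\mJ^\perp$ is equivalent to $(\ov g)\subset(\ov h)$, i.e.\ $\ov h\mid\ov g$, equivalently $X^n-[\gamma]_\PPP\mid \ov g_{\ov\tau}\ov g$, with equality giving $\mJ=\mJ^\perp$.

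The main obstacle I foresee is purely bookkeeping: carefully specializing the formula of Lemma~\ref{taug} to $\ov\sigma=\Id$ to verify it reproduces the expression in the proposition, and reconciling the various normalisations of reduced trace and field trace that appear in Lemma~\ref{lem-trmp} (where the outermost trace is written as $\Tr_{L/\qq}$) with the $\Tr_{k/\qq}\circ\Trd_B$ appearing in $\varphi_{\Lambda/\mP,\lambda}$, which is routine via transitivity of the trace. No deeper input beyond the previously established lemmas should be required.
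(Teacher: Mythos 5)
Your proof is correct and uses the same ingredient lemmas (\ref{taug}, \ref{lemdeg}, \ref{lem-trmp}, plus the residue-degree observation forcing $\ov\sigma=\Id$ and commutativity), but the architecture differs noticeably from the paper's. The paper works \emph{backward}: it writes $\mJ^\perp = \Lambda/\mP\cdot[h(e)]_\mPP$ for an unknown $\ov h$ (this presupposes Lemma~\ref{lem:Ilrd}), and then for generic $[y]_\mPP$ it rearranges $\varphi([x]_\mPP,[y]_\mPP[h(e)]_\mPP)$ using a preliminary claim that the reduction of $\Tr_{k/\qq}\circ\Trd_B(\lambda\,\cdot\,)$ is invariant under cyclic permutation modulo $\mP$; invoking nondegeneracy twice, it extracts the relation $[h(e)]_\mPP[g_{\ov\tau}(e)]_\mPP=0$ and then finishes with the degree count from Lemma~\ref{lemdeg}. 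You instead argue \emph{forward}: you name the candidate $\mJ'$, prove $\mJ'\subset\mJ^\perp$ by the direct computation $\ov\tau(\mJ)\cdot\mJ'\subset(\ov g_{\ov\tau}\ov h)=(0)$ and Lemma~\ref{lem-trmp}, and then close by comparing $\ff_p$-dimensions (both equal $fd$). Your route is slightly leaner: you never need the paper's preliminary cyclicity claim, you invoke nondegeneracy only once (in the dimension formula $\dim(\mJ)+\dim(\mJ^\perp)=\dim(\Lambda/\mP)$), and you do not need to know in advance that $\mJ^\perp$ is an ideal, since equality with the ideal $\mJ'$ delivers that for free. What the paper's approach buys is that it directly pins down the generator of $\mJ^\perp$ without guessing the answer, which also sets the template for the subsequent inert/tamely ramified proposition where the answer involves $\lambda$; your guess-and-check approach would work there too but the candidate is less obvious in advance. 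One small bookkeeping point you flagged is indeed real but harmless: the statement of Lemma~\ref{lem-trmp} is misprinted with $\Tr_{L/\qq}$ instead of $\Tr_{k/\qq}$ in the paper; its proof actually establishes $\Tr_{k/\qq}(\Trd_B(\lambda\mP))\subset p\zz$, which is what you use.
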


\begin{proof}
We first prove the following preliminary claim.

{\bf Claim. }The $\ff_p$-algebra $\Lambda/\mP$ is commutative. In particular, for all $x_1,x_2,x_3\in \Lambda$, we have $$[\Tr_{k/\qq}(\Trd_B(\lambda x_1x_2x_3))]_p=[\Tr_{k/\qq}(\Trd_B( \lambda x_2x_1x_3))]_p.$$

Since $\mP^\sigma=\mP$, it follows that $\sigma$ lies in the decomposition group ${\rm Dec}(\mP)$ of $\mP$. By assumption on $\mP$, the ramification index is $1$, meaning that the canonical surjective map ${\rm Dec}(\mP)\to \Gal(\OO_L/\PP /\OO_k/\pp)$ is trivial. In other words, $\sigma$ induces the identity map on $\ff_q$, that is $\ov{\sigma}=\Id_{\ff_q}$.

In particular, $\ff_q[X;\ov{\sigma}]=\ff_q[X]$ is commutative, and so is $\Lambda/\mP$ in view of the isomorphism of Lemma \ref{lem:iso}. 
This means that for all $x_1,x_2\in\Lambda$, we have $x_1x_2=x_2x_1+z,$ for some $z\in\mP$.
Since $\mP$ is a two-sided ideal, $zx_3\in\mP$. Thus, the rest of the claim comes from Lemma \ref{lem-trmp}.

Keeping notation of the proposition, we have $\mJ=\Lambda/\mP \cdot [g(e)]_\mPP$, and $\mJ^\perp=\Lambda/\mP \cdot [h(e)]_\mPP$.

To simplify notation, set $s=\Tr_{k/\qq}\circ\Trd_B$.
For all $[x]_\mPP\in\mJ$ and all $[y]_\mPP\in \Lambda/\mP$, using the claim, we get 

$$\begin{array}{lll}[0]_p&=&\varphi_{\Lambda/\mP,\lambda}([x]_\mPP,[y]_\mPP [h(e)]_\mPP)\cr 
&=&\varphi_{\Lambda/\mP,\lambda}([x]_\mPP, [h(e)]_\mPP[y]_\mPP)\cr 
&=&[s(\lambda\tau(x)h(e)y)]_p=[s(\lambda h(e)\tau(x)y)]_p,\end{array}$$ 
that is $$[0]_p=[s(\lambda \tau(x\tau(h(e)))y)]_p=\varphi_{\Lambda/\mP,\lambda}([x\tau(h(e))]_\mPP,[y]_\mPP)$$ for all $[y]_\mPP\in \Lambda/\mP$. Using the 
nondegeneracy of $\varphi_{\Lambda/\mP,\lambda}$, we get $$[x]_\mPP [\tau(h(e))]_\mPP=[0]_\mPP \mbox { for all }[x]_\mPP\in \mJ,$$ that is 
$$[x]_\mPP \ov{\tau}([h(e)]_\mPP)=[0]_\mPP \mbox { for all }[x]_\mPP\in \mJ.$$
Applying $\ov{\tau}$, we get $$\ov{\tau}([h(e)]_\mPP)[z]_\mPP=[0]_\mPP \mbox { for all }[z]_\mPP\in \ov{\tau}(\mJ).$$
By Lemma \ref{taug}, $\ov{\tau}(\mJ)=[g_{\ov{\tau}}(e)]_\mPP\cdot \Lambda/\mP$.
In particular, we have the equality $[h(e)]_\mPP [g_{\ov{\tau}}(e)]_\mPP=0.$

Applying the isomorphism of Lemma \ref{lem:iso}, we see that $X^n-[\gamma]_\mPP$ is a left divisor $\ov{h}\ov{g}_{\ov{\tau}}$.

By Lemma \ref{lemdeg}, we have $$\deg(\ov{h} \ov{g}_{\ov{\tau}})=\deg(\ov{h})+\deg(\ov{g}_{\ov{\tau}})=\deg(\ov{h})+\deg(\ov{g})=n.$$

Since $\ov{h}\ov{g}_{\ov{\tau}}$ and $X^n-[\gamma]_\mPP$ are both monic, the previous divisibility relation then  implies that $X^n-[\gamma]_\mPP=\ov{h} \  \ov{g}_{\ov{\tau}}.$ The formula for $\ov{g}_{\ov{\tau}}$ follows from Lemma \ref{taug}, taking into account that $\ov{\sigma}$ is trivial.

Since $\mJ\subset\mJ^\perp$, respectively $\mJ=\mJ^\perp$, if and only if $\ov{h}$ is a left divisor of $\ov{g}$, respectively $\ov{h}=\ov{g}$, we get the last part.
\end{proof}

\begin{ex}
Assume that $p\equiv 1 \ [4].$ Let $k=\qq, L=\qq(\sqrt{-p})$ and $\PP=(\sqrt{-p}),$ so we are in the situation of the proposition. Note that $\sigma$ is complex conjugation, and induces the identity morphism on $\OO_L/\PP$. 

Let $\lambda=\dfrac{1}{2}$.  Note that $\mathcal{D}_L=(2\sqrt{-p})$, and thus $$p\lambda^{-1}\mathcal{D}_L^{-1}=(\sqrt{-p})=\PP,$$
so $\lambda$ fulfills all the required conditions.

Now take $\gamma=-1$. Since $\OO_L/\PP\simeq\ff_p$ and $p\equiv 1 \ [4]$, there exists $a\in\zz$ such that $[a]_\PPP^2=[-1]_\PPP$. Let $\ov{g}=X-[a]_\PPP$, Then, $\ov{g}_{\ov{\tau}}=X-[a]_\PPP^{-1}=X+[a]_\PPP$, and we have $$\ov{g}\,\ov{g}_{\ov{\tau}}=X^2-[a]_\PPP^2=X^2+[1]_\PPP=X^2-[\gamma]_\PPP.$$
Hence, the ideal $\mJ$ corresponding to $\ov{g}$ satisfies $\mJ=\mJ^\perp$. In particular, $\mJ\subset\mJ^\perp$. Moreover, 
$\det(\Lambda)=(\dfrac{1}{4}\cdot (4p))^2=p^2.$ Now, $\Lambda$ has rank $4$ over $\zz$, and $\codim_{\ff_p}(\mJ)=1$ by Lemma \ref{lemdeg}. Hence $\det(\Lambda)=p^{4-2\codim_{\ff_p}(\mJ)}$. Theorem \ref{thm-gamma} then predicts that $\Gamma_\mJ$ is an unimodular lattice of rank $4$. Therefore, $\Gamma_\mJ$ is isomorphic to $\zz^4$.

Explicitly, we have $\Gamma_\mJ=\dfrac{1}{\sqrt{p}}(\Lambda (e-a)+\Lambda\sqrt{-p})$.
\end{ex}

\begin{ex}
Assume that $p\equiv 1 \ [4].$ Let $k=\qq, L=\qq(\zeta_p)$, $\mP=(1-\zeta_p),$ and $\gamma=-1$. Picking any generator $\sigma$ of $\Gal(L/\qq)$, $B=(-1,L/\qq,\sigma)$ is a cyclic $\qq$-algebra of degree $p-1$. 

We have $\mathcal{D}_L=\mP^{p-2}$ and $(p)=\mP^{p-1}$, so if we take $\lambda=1$, all the needed assumptions are fulfilled. Moreover, $\Lambda$ has rank $(p-1)^2$ and $\det(\Lambda)=p^{(p-2)(p-1)}$.

Once again, let $a\in\zz$ satisfying $[a]_\PPP^2=[-1]_\PPP$, and set $\ov{g}=X^{\frac{p-1}{2}}-\ov{a}$. Then 
$\ov{g}_{\ov{\tau}}=X^{\frac{p-1}{2}}-[a]_\PPP$ and  $\ov{g} \ \ov{g}_{\ov{\tau}}=X^{p-1}-[\gamma]_\PPP.$

Notice that $\codim_{\ff_p}(\mJ)=\dfrac{p-1}{2}$, and therefore $$(p-1)^2-\codim_{\ff_p}(\mJ)=(p-1)^2-(p-1)=(p-2)(p-1).$$
It follows from Theorem \ref{thm-gamma} that $\Gamma_\mJ$ is unimodular. 

Since $(-\ds\sum_{k=1}^{\frac{p-1}{2}}\zeta_p^k)^*+(-\ds\sum_{k=1}^{\frac{p-1}{2}}\zeta_p^k)=1,$ Lemma \ref{evenlj} shows that $\Gamma_\mJ$ is an even unimodular lattice of rank $(p-1)^2$.

When $p=5$, one may check that we obtain $E_8\perp E_8$.
\end{ex}

\begin{ex}
Assume that $p\equiv 1 \ [4].$ Let $k=\qq(\zeta_p+\zeta_p^{-1}), L=\qq(\zeta_p)$, $\mP=(1-\zeta_p),$ and $\gamma=-1$. Then $\sigma$ is necessarily complex conjugation, and  $B=(-1,L/k,\sigma)$ is a quaternion $k$-algebra.

Once again, we have $\mathcal{D}_L=\mP^{p-2}$ and $(p)=\mP^{p-1}$, so if we take $\lambda=1$, all the needed assumptions are fulfilled. This time, $\Lambda$ is a lattice of rank $4[k:\qq]=2p-2$, and $\det(\Lambda)=p^{2p-4}$.

Let $a\in\zz$ satisfying $[a]_\PPP^2=[-1]_\PPP$, and set $\ov{g}=X-[a]_\PPP$. Then we have
$\ov{g}_{\ov{\tau}}=X^2+[a]_\PPP,$ so  that $\ov{g}\, \ov{g}_{\ov{\tau}}=X^2-[\gamma]_\PPP.$

Notice that $\codim_{\ff_p}(\mJ)=1$, and therefore $$2p-2-\codim_{\ff_p}(\mJ)=2p-4.$$
It follows from Theorem \ref{thm-gamma} that $\Gamma_\mJ$ is unimodular. 

As before, $\Gamma_\mJ$ is in fact an even unimodular lattice of rank $2p-2$.

Thus, when $p=5$, we obtain $E_8$. When $p=13$, one may check that we obtain $A_{13}\perp A_{13}$.
\end{ex}

\begin{ex}
Assume that $p$ is a prime number satisfying $p \equiv 1 \ [8]$. Let $k=\qq(\sqrt{p}), L=\qq(\zeta_p+\zeta_p^{-1}),$ and $\gamma=-1$, so $B=(-1,L/k,\sigma)$ has degree $\dfrac{p-1}{4}$.

Let $\mP$ the unique prime ideal of $\OO_L$ lying above $p$. We have $(p)=\mP^{\frac{p-1}{2}}$ and $\mathcal{D}_L=\mP^{\frac{p-3}{2}}$, so $\lambda=1$ fulfills all the needed assumptions. Here, $\Lambda$ has rank $2\Bigl(\dfrac{p-1}{4}\Bigr)^2$, and determinant $p^{\frac{(p-3)(p-1)}{8}}$.

Let $a\in\zz$ such that $[a]^2_\mPP=[-1]_\mPP,$ and set $\ov{g}=X^{\frac{p-1}{8}}-[a]_\PPP$. As before, $\ov{g} \,\ov{g}_{\ov{\tau}}=X^{\frac{p-1}{4}}-[\gamma]_\PPP$.

We have $\codim_{\ff_p}(\mJ)=\dfrac{p-1}{8}$, and thus 
$$2\Bigl(\dfrac{p-1}{4}\Bigr)^2-2\codim_{\ff_p}(\mJ)=\dfrac{(p-1)^2}{8}-\dfrac{p-1}{4}=\dfrac{p^2-4p+3}{8}=\dfrac{(p-3)(p-1)}{8}.$$
Theorem \ref{thm-gamma} then implies that $\Gamma_\mJ$ is an unimodular lattice of rank $2\Bigl(\dfrac{p-1}{4}\Bigr)^2.$

When $p=17$, we get a lattice isomorphic to $\zz^{32}$.
\end{ex}

We now prove a result of similar flavor, but with different assumptions.

Recall that for any $\lambda\in\ L^\times$ such that $v_\PPP(\lambda)=0$, we may define a residue class $[\lambda]_\PPP\in(\OO_L/\PP)^\times$.

Indeed, we may write $(\lambda)=\mathfrak{A}\mathfrak{B}^{-1}$, where $\mathfrak{A}$ and $\mathfrak{B}$ are nonzero ideals of $\OO_L$ coprime to $\PP$. 
In particular, $\lambda\mathfrak{B}=\mathfrak{A}$. Picking an element $b\in\mathfrak{B}$, we may then write $\lambda=\dfrac{a}{b}$, where $a,b\in\OO_L\setminus \PP$. It is easy to check that the class $[a]_\PPP[b]_\PPP^{-1}\in\OO_L/\PP$ does not depend on the choice of $a$ and $b$, so we may set $$[\lambda]_\PPP=[a]_\PPP[b]_\PPP^{-1}.$$

Notice that we can even take $b\in\zz$ if necessary.

 We then have the following result.

\begin{prop}
Assume that $p$ is unramified or tamely ramifies in $L$. Let $\pp$ be a prime ideal of $\OO_K$ lying above $p$ such that $\pp^*=\pp$, and which is inert in $L$, and let $\PP=\pp\OO_L$. Let $\bar{g}=\ds\sum_{i=0}^dX^i [a_i]_\PPP$ be a monic divisor of $X^n-[\gamma]_\PPP,$ and let $\mJ$ be the corresponding left ideal of $\Lambda/\mP$.

Then $v_\PPP(\lambda)=0$, and $\mJ^\perp$ corresponds to the monic divisor 
$$\bar{h}=\frac{X^n-[\gamma]_\PPP}{ \ov{g}_{\ov{\tau},\lambda}},$$
where $ \ov{g}_{\ov{\tau},\lambda}=\ds\sum_{k=0}^dX^k[\lambda a^*_{d-k}]_\PPP^{\ov{\sigma}^k} [\lambda a_0^{*-1}]_\PPP^{\ov{\sigma}^d}.$

In particular, $\mJ$ is self-orthogonal if and only if  $\ov{g} \ \ov{g}_{\ov{\tau},\lambda}$ is a multiple of $X^n-[\gamma]_\PPP$,
and self-dual if and only if $\ov{g} \,  \ov{g}_{\ov{\tau},\lambda}=X^n-[\gamma]_\PPP.$
\end{prop}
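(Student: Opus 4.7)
First, I would check that $v_\PPP(\lambda)=0$. By Lemma~\ref{modp}, the well-definedness and nondegeneracy of $\varphi_{\Lambda/\mP,\lambda}$ amount to $\lambda\PP\subset p\mathcal{D}_L^{-1}$ and $p(\lambda\mathcal{D}_L)^{-1}\cap\OO_L=\PP$. Since $\PP=\pp\OO_L$ is inert and $p$ is unramified or tamely ramified in $L$, we have $v_\PPP(p\mathcal{D}_L^{-1})=1$; the first inclusion yields $v_\PPP(\lambda)\geq 0$, while the second intersection at $\PP$ forces $\max(1-v_\PPP(\lambda),0)=1$, hence $v_\PPP(\lambda)=0$ and $[\lambda]_\PPP\in\ff_q^\times$.

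Next, I would reformulate the condition $\bar h\in\mJ^\perp$ in the algebra $R=\ff_q[X;\ov\sigma]/(X^n-[\gamma]_\PPP)$, mimicking the opening of the proof of Proposition~\ref{proplram}. Using the symmetry of $\varphi$ and the cyclicity of $\Trd_B$, and setting $y'=\tau(y)$, the condition $\varphi([yg(e)]_\mPP,[h(e)]_\mPP)=0$ for every $y\in\Lambda$ becomes $\Tr_{k/\qq}(\Trd_B(y'\cdot A))\in p\zz$ for all $y'\in\Lambda$, where $A:=h(e)\lambda\tau(g(e))\in B$. A direct computation of the $e^0$-coefficient of $XY$ for $X,Y\in\Lambda$, in the spirit of Lemma~\ref{comptrd}, shows that the $\zz$-dual of $\Lambda$ with respect to $(x,y)\mapsto\Tr_{k/\qq}(\Trd_B(xy))$ is $\Lambda^\dl=\mathcal{D}_L^{-1}\Lambda$, so the condition becomes $A\in p\mathcal{D}_L^{-1}\Lambda$. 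At primes $\mathfrak{q}\neq\PP$ this is automatic from the valuation bounds on $\lambda$ imposed by the nondegeneracy hypothesis; at $\PP$, using $v_\PPP(\lambda)=0$ and $v_\PPP(p\mathcal{D}_L^{-1})=1$, the condition reduces to $[A]_\mPP=0$ in $\Lambda/\mP$, equivalently $\bar h\cdot\bar w=0$ in $R$, where $\bar w:=\rho^{-1}([\lambda\tau(g(e))]_\mPP)$.

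The final step identifies $\bar w$ and extracts the monic divisor. Pushing $\lambda$ across the powers of $e$ via $\lambda e^k=e^k\sigma^{-k}(\lambda)$ and using $e^{-j}=\gamma^{-1}e^{n-j}$, a direct calculation in $R$ yields $\bar w=[\lambda]_\PPP\cdot\ov\tau(\bar g)$, with $\ov\tau(\bar g)$ as in Lemma~\ref{taug}. Since $[\lambda]_\PPP\in\ff_q^\times$, the right ideal $\bar wR$ is also generated by the monic polynomial $\bar g_{\ov\tau,\lambda}$ obtained from $[\lambda]_\PPP\cdot\bar g_{\ov\tau}$ by normalizing on the right; distributing $[\lambda]_\PPP$ through the coefficients of $\bar g_{\ov\tau}$ via $[\lambda]_\PPP X^k=X^k[\lambda]_\PPP^{\ov\sigma^k}$ produces the stated formula. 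The left annihilator of $\bar w$ in $R$ depends only on the right ideal $\bar wR$, so it coincides with that of $\bar g_{\ov\tau,\lambda}$, which by Lemma~\ref{lemdeg} (giving $\deg\bar h+\deg\bar g_{\ov\tau,\lambda}=n$) and monicness equals $R\bar h$ with $\bar h\cdot\bar g_{\ov\tau,\lambda}=X^n-[\gamma]_\PPP$. The self-orthogonality and self-duality criteria follow as in Proposition~\ref{proplram}.

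The principal obstacle is the loss of commutativity in $\Lambda/\mP$: in Proposition~\ref{proplram} the triviality of $\ov\sigma$ on $\ff_q$ allowed one to move factors freely inside $\Tr_{k/\qq}\circ\Trd_B$ modulo $p$, but here $\ov\sigma$ generates $\Gal(\ff_q/\ff_{q_0})$ of order $n$ and $\Lambda/\mP\simeq M_n(\ff_{q_0})$ is genuinely noncommutative, forcing the detour through $\Lambda^\dl=\mathcal{D}_L^{-1}\Lambda$. Moreover, because $[\lambda]_\PPP$ is not $\ov\sigma$-fixed, propagating $\lambda$ across the $e^k$ introduces $\ov\sigma^k$-conjugates which permanently imprint onto the coefficients of $\bar g_{\ov\tau,\lambda}$---this is precisely the source of the explicit $\lambda$-dependence that distinguishes this formula from the one in Lemma~\ref{taug}.
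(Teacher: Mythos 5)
Your proof is correct and follows a genuinely different route from the paper's. Both start the same way, deriving $v_\PPP(\lambda)=0$ from the well-definedness and nondegeneracy conditions of Lemma~\ref{modp} (here $v_\PPP(p\mathcal D_L^{-1})=1$ whether $p$ is unramified or tamely ramified). After that you diverge: you pass to the $\zz$-dual $\Lambda^\dl=\mathcal D_L^{-1}\Lambda$ of the unweighted trace pairing and translate the orthogonality condition into a membership statement $A=h(e)\lambda\tau(g(e))\in p\mathcal D_L^{-1}\Lambda$, then localize prime by prime, observing that the constraint away from $\PP$ is vacuous (one needs Galois-stability of $p\mathcal D_L^{-1}$ and $\PP^\sigma=\PP$ to propagate the valuation bounds across the $\sigma^i(\lambda)$ appearing in the coefficients of $A$) and that the constraint at $\PP$ collapses to $[A]_\mPP=0$. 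The paper instead writes $\lambda=a/b$ with $b\in\zz$, picks $a'$ with $[a'a]_\PPP=[1]_\PPP$, and massages the trace expression directly inside $[s(\cdot)]_p$ using the cyclicity of $\Trd_B$ together with Lemma~\ref{lem-trmp}, which allows one to commute factors modulo $\mP$. Your route is more conceptual — the localization argument makes it transparent why only the behaviour at $\PP$ matters — at the cost of introducing the dual lattice and checking the $\mathfrak q\neq\PP$ primes explicitly, while the paper's route stays internal to $\Lambda/\mP$ but is less structured. Both then conclude identically: the derived relation $\bar h\cdot(\text{unit})\cdot\tilde g=0$ in $R$, the normalization of $[\lambda]_\PPP\tilde g$ into the monic $\bar g_{\ov\tau,\lambda}$, and the degree count from Lemma~\ref{lemdeg} forcing $\bar h\,\bar g_{\ov\tau,\lambda}=X^n-[\gamma]_\PPP$. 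One small caveat: your claim that the noncommutativity of $\Lambda/\mP$ \emph{forces} the detour through $\Lambda^\dl$ overstates the case — the paper handles the same noncommutativity via the trace-cyclicity trick without any dual lattice; your $\Lambda^\dl$ argument is an alternative, not a necessity. Also, the step where $\bar w=\rho^{-1}([\lambda\tau(g(e))]_\mPP)$ is replaced by $\bar g_{\ov\tau,\lambda}$ in the annihilator computation relies on the two generating the same right ideal, which is true but deserves a sentence (they differ by right multiplication by a unit via the $e^d$-shift from Lemma~\ref{taug}); the polynomial $\bar w$ itself has degree $n-1$, not $d$, so identifying it literally with $[\lambda]_\PPP\cdot\ov\tau(\bar g)$ as a polynomial is imprecise even though the ideal-level claim is right.
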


\begin{proof}
Since $\varphi_{\Lambda/\mP,\lambda}$ is well-defined, we have $\lambda\PP\subset p\mathcal{D}_L^{-1}$. In particular, $v_\PPP(\lambda)+1\geq e_p-d_p$, that is $e_p-d_p-v_\PPP(\lambda)\leq 1$.

Since $\varphi_{\Lambda/\mP,\lambda}$ is nondegenerate, we have $p(\lambda\mathcal{D}_L^{-1})\cap \OO_L=\PP$. Taking  $\PP$-adic valuation on both sides gives $$\max(e_p-d_p-v_\PPP(\lambda),0)=1.$$
The inequality above shows that this is possible only if $e_p-d_p-v_\PPP(\lambda)=1$. Since $p$ is unramified or tamely ramifies in $L$, we have $e_p-d_p=1$ in both cases, and thus $v_\PPP(\lambda)=0$.

To simplify notation, set $s=\Tr_{k/\qq}\circ\Trd_B$.

Write $\lambda=\dfrac{a}{b}$, with $a\in\OO_L\setminus\{0\}$, and $b\in\zz\setminus\{0\}$.

Keeping the notation of the proposition, we have $\mJ=\Lambda/\mP \cdot [g(e)]_\mPP$, and $\mJ^\perp=\Lambda/\mP \cdot [h(e)]_\mPP$.

For all $\ov{x}\in\mJ$ and all $[y]_\mPP\in \Lambda/\mP$,  we get 

$$[0]_p=\varphi_{\Lambda/\mP,\lambda}([x]_\mPP,[y]_\mPP [h(e)]_\mPP)=[s(\lambda\tau(x)yh(e))]_p.$$

Let $a'\in\OO_L\setminus\{0\}$ such that $[a']_\PPP [a]_\PPP=\ov{1}\in\OO_L/\PP$. Then $1-a'a\in\PP$, and thus $\tau(x)y(1-a'a)h(e)\in\mP$, since $\mP$ is a two-sided ideal. Since $1=a'a+(1-a'a),$ Lemma \ref{lem-trmp} then yields
$$[0]_p=[s(\lambda\tau(x)ya'ah(e)))]_p=[s(ah(e)\lambda\tau(x)ya')]_p.$$
Since $b^{-1}$ commutes with every element of $\Lambda$, we get 
$$\begin{array}{lll}[0]_p&=&[s(\lambda h(e)a\tau(x)ya')]_p\cr &=&[s(\lambda \tau(x \tau(a) \tau(h(e))y a']_p\cr 
&=&\varphi_{\Lambda/\mP,\lambda}([x \tau(a) \tau(h(e))]_\mPP,[y]_\mPP[a']_\mPP)\end{array}$$ for all $[x]_\mPP\in\mJ$ and all $[y]_\mPP\in \Lambda/\mP$.
Since $[a']_\mPP$ is invertible in $\OO_L/\PP$, and thus in $\Lambda/\mP,$ this is equivalent to $$[0]_p=\varphi_{\Lambda/\mP,\lambda}([x \tau(a) \tau(h(e)]_\mPP,[y]_\mPP)$$ for all $[x]_\mPP\in\mJ$ and all $[y]_\mPP\in \Lambda/\mP$. Since $\varphi_{\Lambda/\mP,\lambda}$ is nondegenerate, this yields $[x \tau(a) \tau(h(e)]_\mPP=[0]_\mPP$. Applying $\ov{\tau}$, we get $$[h(e)a ]_\mPP\ov{\tau}([x]_\mPP)=[0]_\mPP$$ for all $\ov{x}\in\mJ$, that is  $[h(e)a]_\mPP[z]_\mPP=[0]_\mPP$ for all $[z]_\mPP\in\ov{\tau}(\mJ)$. By Lemma \ref{taug}, this right ideal is generated by 
$[g_{\ov{\tau}}(e)]_\mPP$, so we have in particular $$[h(e)]_\mPP[a]_\mPP[g_{\ov{\tau}}(e)]_\mPP=[0]_\mPP.$$
Multiplying by $[b]_\mPP^{-1}$ then yields
 $$[h(e)]_\mPP[\lambda]_\mPP[g_{\ov{\tau}}(e)]_\mPP=[0]_\mPP,$$ taking into account that $[b]_\mPP^{-1}$ is  central in $\Lambda/\mP$.

Notice now that $ [\lambda]_\PPP \ov{g}_{\ov{\tau}}=\ds\sum_{k=0}^dX^k[\lambda a^*_{d-k}]_\PPP^{\ov{\sigma}^k}$.
 Multiplying by a suitable element of $\OO_L/\PP$ finally yields $$[h(e)]_\mPP [g_{\ov{\tau},\lambda}(e)]_\mPP=[0]_\mPP.$$

Now, we may argue as in the proof of Proposition \ref{proplram} to get the desired result.
\end{proof}

\begin{ex}
Assume that $p\equiv 3 \ [8]$. Let $k=\qq(\sqrt{-p})$ and $L=\qq(\sqrt{-p},\sqrt{2})$. Since $\qq(\sqrt{-p})$ and $\qq(\sqrt{2})$ have discriminants $-p$ and $8$ respectively, they are arithmetically disjoint over $\qq$. Hence , $\mathcal{D}_L=(2\sqrt{2}\sqrt{-p}),$ and one may check that $\pp=\sqrt{-p}\OO_k$ is inert in $L$ (use Proposition \ref{proph2} and the fact that $2$ is not a square modulo $p$). 

The unique nontrivial $k$-automorphism $\sigma$ of $L$ sends $\sqrt{2}$ to $-\sqrt{2}$ and $\sqrt{-p}$ to $\sqrt{-p}.$

Set $\lambda=\dfrac{1+\sqrt{2}}{2\sqrt{2}}$. The Galois group of $L/\qq$ is generated by $\sigma$ and complex conjugation. It is then easy to check that $\lambda$ is a totally positive element of $L$. Moreover, we have $$p\lambda^{-1}\mathcal{D}_L^{-1}=\dfrac{\sqrt{-p}}{1+\sqrt{2}}\OO_L=\sqrt{-p}\OO_L=\PP,$$ since $1+\sqrt{2}$ is a unit of $\OO_L$.

Notice finally that $\omega=[\sqrt{2}]_\PP\in \OO_L/\PP$ satisfies $\omega^2=\ov{2}$. The assumption on $p$ implies that $2$ is not a square modulo $p$, hence $([1]_\PPP,\omega)$ is an $\ff_p$-basis of $\OO_L/\PP$. Complex conjugation becomes trivial on $\OO_L$, while $\ov{\sigma}$ is $\ff_p$-linear and send $\omega$ to $-\omega$.

Since $-1$ is not a square modulo $p$ either, so $-2$ is a square modulo $p$. Let $u\in\zz$ such that $[u]_\PPP^2=[-2]_\PPP$. 

Take $\gamma=-1$, and let $\alpha=[u]_\PPP+[u]_\PPP^{-1}\omega$. Notice that 
$$\alpha\alpha^{\ov{\sigma}}=([u]_\PPP+[u]_\PPP^{-1}\omega)([u]_\PPP-[u]_\PPP^{-1}\omega)=[-2]_\PPP-[-2]_\PPP^{-1}[2]_\PPP=[-1]_\PPP=[\gamma]_\PPP.$$
It follows that we have $$(X-\alpha)(X+\alpha^{\ov{\sigma}})=X^2-[\gamma]_\PPP,$$ so that 

$\ov{g}=X-\alpha$ is a divisor of $X^2-[\gamma]_\PPP$. We have $[\lambda]_\PPP=\dfrac{[1]_\PPP+\omega}{[2]_\PPP\omega}$, and thus $$\dfrac{[\lambda]_\PPP}{[\lambda]_\PPP^{\ov{\sigma}}}=\dfrac{[1]_\PPP+\omega}{\omega-[1]_\PPP}=(\omega+[1]_\PPP)^2=[3]_\PPP+[2]_\PPP\omega.$$

Now $(\alpha^*)^{\ov{\sigma}}=[u]_\PPP-[u]_\PPP^{-1}\omega$, so that $$\alpha^{\ov{\sigma}}(\alpha^*)^{\ov{\sigma}}=((\alpha)^{\ov{\sigma}})^2=([u]_\PPP+[u]_\PPP^{-1}\omega)^2=-[3]_\PPP-[2]_\PPP\omega=-\dfrac{[\lambda]_\PPP}{[\lambda]_\PPP^{\ov{\sigma}}}.$$
Consequently, $\dfrac{[\lambda]_\PPP}{[\lambda]_\PPP^{\ov{\sigma}}(\alpha^*)^{\ov{\sigma}}}=-(\alpha)^{\ov{\sigma}}$, and 
thus $\ov{g}_{\ov{\tau},\lambda}=X+\alpha^{\ov{\sigma}}.$ It follows that $\ov{g}\ov{g}_{\ov{\tau},\lambda}=X^2-[\gamma]_\PPP.$

Here, $\codim_{\ff_p}(\mJ)=2$, and $\Lambda$ has rank $8$ over $\zz$, where $\mJ$ is the left ideal corresponding to $\ov{g}$. Now, $$\det(\Lambda)=(\dfrac{1}{64}\cdot (64p^2))^2=p^4=p^{8-2\codim_{\ff_p}(\mJ)},$$ and Theorem \ref{thm-gamma} says that $\Gamma_\mJ$ is an integral unimodular lattice of rank $8$. Since $\Bigl(\dfrac{1-\sqrt{-p}}{2}\Bigr)+\Bigl(\dfrac{1-\sqrt{-p}}{2}\Bigr)^*=1$, Lemma \ref{evenlj} ensures that $\Gamma_\mJ$ is even. Hence $\Gamma_\mJ$ is isomorphic to $E_8$.
\end{ex}

%
%


\begin{thebibliography}{44}

\bibitem{bachoc} C. Bachoc, \emph{Applications of Coding Theory to the Construction of Modular Lattices}, J. Comb. Th A, 78-1, 1997, p. 92-119. 

\bibitem{belfiore} F. Oggier and J.-C. Belfiore, \emph{Enabling multiplication in lattice codes via construction A}, in IEEE Int. Workshop Inf. Theory, (2013).

\bibitem{BOCSA} G. Berhuy, F.Oggier, \emph{An Introduction to Central Simple Algebras and Their Applications to Wireless Communication. }Mathematical Surveys and Monographs {\bf 191}, AMS, 2013.

\bibitem{boucher-ulmer} D. Boucher, F. Ulmer, \emph{Coding with skew polynomial rings}, J. Symb. Comput., 44 (2009), 1644.

\bibitem{splag} J.H. Conway, N.J.A. Sloane, \emph{Sphere Packings, Lattices and Groups}, A series of comprehensive studies in mathematics, vol. 290, Springer. 

\bibitem{ducoat} J. Ducoat, F. Oggier, \emph{On skew polynomial codes and lattices from quotients of cyclic division algebras}, Advances in Mathematics of Communication, Feb. 2016, 10(1), 79-94.

\bibitem{Ebeling}
W. Ebeling, \emph{Lattices and Codes, A course Partially Based on Lectures by Friedrich Hirzebruch}, 3rd edition, Springer Spektrum.

\bibitem{KOO}
W. Kositwattanarerk. S.S. Ong, F. Oggier, \emph{Construction A of Lattices over Number Fields and Block Fading (Wiretap) Coding}, IEEE Trans. on Information Theory, vol 61, no 5, May 2015.

\bibitem{Lang} S. Lang, \emph{Complex Multiplication.} Grund. math. Wiss. {\bf 255}. Springer, Berlin, Heidelberg, 1983.

\bibitem{Sch} W. Scharlau, \emph{Quadratic and hermitian forms. }Grund. math. Wiss. {\bf 270}. Springer, Berlin, Heidelberg, 1985.



\end{thebibliography}
\end{document}